\newtheorem{theorem}{Theorem}
\newtheorem{claim}[theorem]{Claim}
\newtheorem{condition}{Condition}
\newtheorem{corollary}[theorem]{Corollary}
\newtheorem{lemma}[theorem]{Lemma}
\newtheorem{proposition}[theorem]{Proposition}
\newenvironment{proof}[1][Proof]{\noindent\textbf{#1.} }{\ \rule{0.5em}{0.5em}}
\begin{document}

\author{Ron Adar\thanks{Department of Computer Science, University of Haifa, Haifa,
Israel. \texttt{radar03@csweb.haifa.ac.il.}}
\and Leah Epstein\thanks{Department of Mathematics, University of Haifa, Haifa,
Israel. \texttt{lea@math.haifa.ac.il.}}}
\title{\textbf{Models for the $\boldsymbol{k}$-metric dimension}}
\date{}
\maketitle

\begin{abstract}
For an undirected graph $G=(V,E)$, a vertex $\tau\in V$ separates
vertices $u$ and $v$ (where $u,v\in V$, $u\neq v$) if their
distances to $\tau$ are not equal. Given an integer parameter $k
\geq1$, a set of vertices $L\subseteq V$ is a feasible solution if
for every pair of distinct vertices, $u,v$, there are at least $k$
distinct vertices $\tau_{1},\tau_{2},\ldots,\tau_{k}\in L$ each
separating $u$ and $v$. Such a feasible solution is called a
\textit{landmark set}, and the $k$-metric dimension of a graph is
the minimal cardinality of a landmark set for the parameter $k$.
The case $k=1$ is a classic problem, where in its weighted
version, each vertex $v$ has a non-negative weight, and the goal
is to find a landmark set with minimal total weight. We generalize
the problem for $k \geq2$, introducing two models, and we seek for
solutions to both the weighted version and the unweighted version
of this more general problem. In the model of all-pairs (AP), $k$
separations are needed for every pair of distinct vertices of $V$,
while in the non-landmarks model (NL), such separations are
required only for pairs of distinct vertices in $V \setminus L$.

We study the weighted and unweighted versions for both models (AP
and NL), for path graphs, complete graphs, complete bipartite
graphs, and complete wheel graphs, for all values of $k \geq2$. We
present algorithms for these cases, thus demonstrating the
difference between the two new models, and the differences between
the cases $k=1$ and $k \geq2$.

\end{abstract}

\section{Introduction}

The problem of finding a landmark set or a resolving set of a
graph was studied in a number of papers
\cite{Slat75,HM1976,Babai,Chvatal,MT,KRR1996,CEJO00,SSH02,ST04,HSV11,ELW,HN12}.
In this problem, one is interested in finding a subset of vertices
$L$ of an undirected graph $G=(V,E)$, such that the ordered list
of distances of a vertex $u$ to the vertices of $L$ uniquely
determine its identity. That is, the set $L\subseteq V$ should be
such that for any $u,v\in V$ ($u\neq v$), there exists $\tau\in L$
such that $d(u,\tau)\neq d(v,\tau)$ (letting $d(x,y)$ denote the
number of edges in a shortest path between $x$ and $y$). We say
that such a vertex $\tau$ separates $u$ and $v$, and a set $L$
satisfying the property that for every pair of distinct graph
vertices it contains a vertex that separates them, is called a
landmark set. Since for any $\ell\in L$, $d(\ell,\ell)=0$ and
$d(\ell,x)>0$ for any $x\neq\ell$, every landmark $\ell$ separates
itself from any vertex $v\neq\ell$, and the requirement that a
separating vertex $\tau$ will exist for any $u,v\in V\setminus L$
($u\neq v$) is equivalent to the requirement that $\tau$ will
exist for any distinct pair of vertices of $V$. For a given graph,
the algorithmic problem of finding a landmark set of minimum
cardinality is called the metric dimension problem (and this
minimum cardinality is the metric dimension). In the weighted
case, a non-negative rational weight is given for each vertex by a
function $w:V\rightarrow \mathbb{Q}^{+}$. For a set $U\subseteq
V$, $w(U)={\sum_{v\in U}}w(v)$ is the total weight of vertices in
$U$, and the goal is to find a landmark set $L$ with the minimum
value $w(L)$, where its weight is called the weighted metric
dimension.

One limitation of this model in large networks is the frequent
failures of vertices. When a failure occurs at a landmark vertex,
this can harm the identification process for other vertices. We
generalize the metric dimension problem and require that $L$ will
contain $k$ or more separating vertices for every pair of distinct
vertices for a given parameter $k\geq2$. This problem is called
the $k$-metric dimension problem, and its weighted variant is
called the weighted $k$-metric dimension problem. There can be two
types of requirements for landmark sets. The first one is to
separate (by $k$ or more separations) any pair of distinct
vertices of $V$, and the second one considers separations for any
pair of distinct vertices of $V\setminus L$. We consider both
these models in this work. We refer to the first variant as the
all-pairs model (AP), and the second variant is called the
non-landmarks model (NL). We will demonstrate the differences
between landmark sets for the case $k=1$ and for larger values of
$k$, and the differences between solutions for the two models.
This is done using several graph classes with simple structures.
These graph classes are path graphs, complete graphs (cliques),
complete bipartite graphs, and complete wheels\footnote{In an
accompanying paper \cite{AEsec}, we analyze additional graph
classes with respect to the concepts that are introduced in this
paper.}.

Given the input graph $G=(V,E)$, let $n=\left\vert V\right\vert $
be the number of its vertices. The set of landmark sets of $G$ for
the parameter $k$ ($k\geq2$) is denoted by $LS_{k}^{M}(G)$, where
$M=AP$ or $M=NL$ (we use $M$ when we discuss a specific model and
not some general facts that are not model dependent). The minimum
cardinality of any landmark set, which is the cost of an optimal
solution for the unweighted problem, is denoted by
$md_{k}^{M}(G)$. A set $Q\in LS_{k}^{M}(G)$, such that $w(Q)\leq
w(D)$ for any set $D\in LS_{k}^{M}(G)$, is a solution for the
weighted problem, and $wmd_{k}^{M}(G)=w(Q)$ denotes the weight of
this solution. For a vertex $v\in V$ we use the notation
$N_{v}(G)=\{u\in V:d(u,v)=1\}$, that is, all neighbors of $v$. In
addition, for $u,v\in V$ we use the notation
$N_{u,v}(G)=N_{u}(G)\cap N_{v}(G)$ for all the common neighbors of
$u$ and $v$ in $G$ and the notation $SpS_{u,v}(L)=\{z\in
L:d(u,z)\neq d(v,z)\}$ which denotes the set of all vertices in
$L\subseteq V$ that separate $u$ and $v$. If there is no feasible
solution, we will write $md_{k}^{M}=\infty$ and
$wmd_{k}^{M}=\infty$. This case is possible for AP, but for NL, as
we will explain below, a feasible solution always exists.

By definition, for any connected undirected graph $G$, if $L\in
LS_{k}^{AP}(G)$, then $L\in LS_{k}^{NL}(G)$. Thus, for any integer
$k\geq2$, $md_{k}^{NL}(G)\leq md_{k}^{AP}(G)$, and
$wmd_{k}^{NL}(G)\leq wmd_{k}^{AP}(G)$ for any non-negative weight
function $w:V\rightarrow\mathbb{Q}^{+}$ on $G$'s vertices. Another
obvious property is $LS_{k+1}^{M}(G) \subseteq LS_{k}^{M}(G)$ for
any $k\geq 1$ and any model $M$. The next lemma will be used for
analyzing landmark sets.

\begin{lemma}
\label{k=2 lemma} Let $L \subseteq V$, and let $u,v\in L$ be a distinct pair
of vertices. Then $u$ and $v$ are separated by at least two vertices of $L$.
\end{lemma}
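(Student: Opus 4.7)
The plan is to observe that $u$ and $v$ themselves serve as the two required separators, so the statement is really just a bookkeeping remark flowing directly from the definition of separation together with the fact that distances from a vertex to itself are zero.

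First I would recall that for any vertex $x\in V$ we have $d(x,x)=0$ while $d(x,y)\geq 1$ whenever $y\neq x$ (with the usual convention $d(x,y)=\infty$ if $x,y$ lie in different components). Then since $u\in L$ and $u\neq v$, comparing $d(u,u)=0$ with $d(v,u)\geq 1$ shows $d(u,u)\neq d(v,u)$, so $u\in SpS_{u,v}(L)$. By the symmetric argument applied to $v\in L$, comparing $d(v,v)=0$ with $d(u,v)\geq 1$ gives $v\in SpS_{u,v}(L)$.

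Since $u\neq v$, these are two distinct elements of $L$ separating $u$ and $v$, so $|SpS_{u,v}(L)|\geq 2$, as claimed. There is no real obstacle here; the only thing to be careful about is making the trivial inequality $d(u,v)\neq d(u,u)$ explicit, since this is precisely the observation that was already used in the introduction to note that any $\ell\in L$ automatically separates itself from every other vertex. The lemma simply applies that observation twice, once from each endpoint of the pair.
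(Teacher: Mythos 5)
Your proposal is correct and follows exactly the same reasoning as the paper's proof: both arguments note that $d(u,u)=d(v,v)=0$ while $d(u,v)\neq 0$ since $u\neq v$, so $u$ and $v$ each separate the pair, giving two distinct separators in $L$. Nothing further is needed.
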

\begin{proof}
Since $u,v\in L$, $d(u,u)=d(v,v)=0$ and $d(u,v)\neq0$ since $u\neq v$, thus
$u$ and $v$ separate each other.
\end{proof}

In the body of the paper we assume $n\geq2$. We now discuss the
cases $n=1,2$, and several other simple cases where the solutions
do not depend on the structure of the graph. The case $n=1$ is
trivial, as in this case $\emptyset$ is a valid landmark set for
both models and any $k\geq1$. For AP, if $2\leq n\leq k-1$, there
are no feasible solutions, as a feasible solution must have at
least $k$ landmarks. In the case $n=k=2$, $V$ is the unique
landmark set as a landmark set for AP must have at least $k$
vertices, and by Lemma \ref{k=2 lemma}. We will show in Section
\ref{pat} that in the case $n=k\geq3$, there is no feasible
solution for AP. On the other hand, for NL, any subset of $n-1$
vertices is a feasible solution for any $k$, and thus a feasible
solution always exists. We call such a solution \textit{a trivial
solution}. A solution consisting of $n$ vertices is also feasible,
but we will never use such a solution for NL as a solution of
$n-1$ vertices always has smaller cardinality and smaller weight.
Any solution that is not trivial must consist of at least $k$
vertices. Thus, for $2\leq n\leq k+1$, any optimal solution is a
trivial solution. On the other hand, if $n\geq k+2$, a solution
consisting of at most $k-1$ vertices cannot be valid. Thus, in
particular, if $n=2$, then $md_{2}^{AP}=2$ and $md_{2}^{NL}=1$,
and for $k\geq3$ (and $n=2$), $md_{k}^{AP}=\infty$ and
$md_{k}^{NL}=1$. For any graph $G$, $V\in LS_{2}^{AP}(G)$, so a
landmark set always exists in this case. However, there are graphs
where there is no feasible solution even if $n>k\geq3$ (see
Section \ref{clik}). Given the cases that are completely resolved
independently of the graph structure, in what follows, we will
deal with $n\geq3$ and $n\geq k+1$ for AP (and we will prove that
there is no solution for $n=k\geq3$), and with $n\geq k+2$ for NL.

As in \cite{ELW}, we are often interested in minimal landmark sets
with respect to set inclusion, since any minimum weight landmark
set must be such a set. Note that in many cases not every minimal
landmark set (with respect to set inclusion) is a minimum
cardinality landmark set. An algorithm for finding a minimum
weight landmark set can test minimal landmark sets to find one of
minimum weight (by enumerating them or via dynamic programming).

\paragraph{Previous work.}

As mentioned above, most previous studies dealt with the case
$k=1$. In \cite{CW14}, a variant where separation is defined using
pairs  of vertices was studied (see also \cite{CH+07}). That model
is different from ours, and in particular, the condition for a
pair of landmark vertices $u,v$ to doubly separate a pair of graph
vertices $x,y$ is defined on all four vertices together, and it is
required that $d(u,x)-d(u,y)\neq d(v,x)-d(v,y)$. Note that by
definition, $u,v$ cannot satisfy this if $u=v$, but $u,v$ doubly
separate $u,v$. A double landmark set $S$ is a subset of $V$ such
that for any $x,y\in V$, there exist $u,v\in S$ that doubly
separate $x$ and $y$. There is no particular relation between this
condition and the conditions that we require for $k\geq2$ (neither
for NL nor for AP), as can be seen in the following examples.
Consider a path graph on $n\geq5$ vertices;
$v_{1},v_{2},\ldots,v_{n}$ (in this order). In Section \ref{pat},
it is shown that the subset $X=\{v_{2},v_{3},\ldots,v_{n-1}\}$ is
a landmark set for $k \leq n-3$ (for both AP and NL). This is,
however, not a double landmark set, as the
$d(u,v_{1})-d(v,v_{1})=d(u,v_{2})-d(v,v_{2})$ for any $u,v \in X$
(see \cite{CW14}). On the other hand, if the graph is a wheel with
seven vertices $c_1,\ldots,c_7$ (where $c_7$ is the central
vertex, see Section \ref{whel}), by the results of \cite{CW14},
$c_{1},c_{3},c_5$ is a double landmark set, while it gives only
one separation for $c_{7}$ and $c_{2}$, and thus it is not a
landmark set even for $k=2$ (for both AP and NL). This difference
holds for larger wheels as well, as can be seen from the fact that
the conditions that we present for landmark sets of wheels in
Section \ref{whel} differ from the condition given in \cite{CW14}
for double landmark sets.

This metric dimension problem was introduced over forty years ago
by Harary and Melter \cite{HM1976} and by Slater \cite{Slat75}. It
was studied widely in the combinatorics literature, where exact
values of the metric dimension or bounds on it for specific graph
classes are obtained \cite{Babai,CH+07,CEJO00,CZ03}. The problem
was also studied with respect to complexity in the sense that it
was shown to be NP-hard in general \cite{KRR1996}, it was shown
that it is hard to approximate \cite{BE+06,HSV11,HN12}, and it is
even NP-hard for certain graph classes \cite{DPL11,ELW}. On the
other hand, it is polynomially solvable (often even for the
weighted version) for many graph classes; in particular such
classes include paths, cycles, trees, and wheels
\cite{Slat75,HM1976,KRR1996,CEJO00,SSH02}. The main graph classes
studied here are paths and wheels, which were studied in
\cite{KRR1996,ELW} and \cite{HM1976,SSH02,ELW}, respectively. It
was proved by Khuller, Raghavachari, and Rosenfeld \cite{KRR1996}
that the metric dimension (for $k=1$ and the unweighted case) of a
path is $1$, and was shown by Shanmukha, Sooryanarayana and
Harinath \cite{SSH02} that the metric dimension is
$\lceil\frac{2n}{5}\rceil$ for a complete wheel with $n\geq8$
vertices (for $n=4,5,6$, and $7$ the metric dimension is equal to
$3,2,2$ and $3$, respectively). In the weighted case, it is shown
in \cite{ELW} that a minimum weight landmark set for a path
consists of one or two vertices. For complete wheels, a condition
on the positions of landmarks on the cycle of the wheel was stated
\cite{ELW} (see Section \ref{whel}). Here, we define several
conditions for the different cases (according to the different
values of $k$, and the two models), where these conditions differ
from that of the case $k=1$ in all cases. We also study complete
graphs, for which any landmark set for $k=1$ consists of at least
$n-1$ vertices \cite{CEJO00}, and complete bipartite graphs, where
(if the graph is connected and $n\geq3$), any landmark set
contains all vertices except for at most one vertex of each
partition \cite{CEJO00} (for $k=1$).

\section{Paths}
\label{pat}

We start with path graphs. These graphs are typically easier to
analyze than any other graph class. We will show that while in AP
landmark sets for $k\geq 2$ have similar structures to those of
the case $k=1$, in NL this is not the case.

\subsection{All-pairs model (AP)}

In this section, we let $G=(V,E)$ be a path graph with $n\geq3$. Let $k\leq
n-1$, and let $V=\{v_{1},\ldots v_{n}\}$, where the vertices appear on the
path in this order, i.e., $E$ consists of the edges $\{v_{i},v_{i+1}\}$ for
$1\leq i\leq n-1$. The vertices $v_{1},v_{n}$ are called end vertices, while
the other vertices are called internal vertices (and every path where $n\geq3$
has at least one such vertex). We will adapt the following property.

\begin{claim}
\label{rem k=1} For $k=1$, $md_{1}^{AP}(G)=1$, and a minimum cardinality
landmark set consists of either $v_{1}$ or $v_{n}$ \cite{KRR1996}. Any minimal
(with respect to set inclusion) landmark set that is not $\{v_{1}\}$ or
$\{v_{n}\}$ consists of exactly two internal vertices \cite{ELW}.
\end{claim}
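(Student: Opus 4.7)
The plan is to establish the two parts of the claim by direct analysis of distances on the path. For the first part, I would verify that $\{v_1\}$ is a landmark set by computing $d(v_1,v_i)=i-1$, which is strictly monotone in $i$ and therefore separates every pair of distinct vertices; the symmetric argument handles $\{v_n\}$. To rule out any other singleton, I would observe that for an internal vertex $v_m$ with $1<m<n$, its two neighbors $v_{m-1}$ and $v_{m+1}$ satisfy $d(v_m,v_{m-1})=d(v_m,v_{m+1})=1$, so $\{v_m\}$ fails to separate them. This yields $md_1^{AP}(G)=1$ with $\{v_1\}$ and $\{v_n\}$ as the only singleton landmark sets.

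For the second part, let $L$ be a minimal landmark set with $L\neq\{v_1\},\{v_n\}$. If $v_1\in L$, then $\{v_1\}\subseteq L$ is already a landmark set, so minimality under inclusion forces $L=\{v_1\}$, a contradiction; the same argument excludes $v_n\in L$. Hence $L$ consists only of internal vertices, and since no single internal vertex separates all pairs, $|L|\geq 2$. What remains is to show $|L|=2$, which would follow from proving that any two distinct internal vertices already form a landmark set.

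The key technical step is this last claim. Given $\{v_i,v_j\}$ with $1<i<j<n$, and any pair $v_a\neq v_b$, the vertex $v_i$ fails to separate them precisely when $|i-a|=|i-b|$, which simplifies to $a+b=2i$; analogously $v_j$ fails iff $a+b=2j$. Since $i\neq j$ these two conditions cannot hold simultaneously, so at least one of $v_i,v_j$ separates $v_a$ and $v_b$, establishing that $\{v_i,v_j\}$ is a landmark set. Combined with minimality this forces $|L|=2$, completing the claim.

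The only step that requires any real care is the algebraic condition in the last paragraph, and specifically being explicit that the failure condition $|i-a|=|i-b|$ collapses to the linear equation $a+b=2i$ without sign ambiguity (the case $i-a=i-b$ forces $a=b$, so the only nontrivial case is $i-a=-(i-b)$). Once that is in hand, the incompatibility of $a+b=2i$ and $a+b=2j$ for $i\neq j$ is immediate, and the rest of the proof is bookkeeping using minimality.
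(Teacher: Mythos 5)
Your proof is correct. Note, however, that the paper does not prove this claim at all: it is stated as a known property imported from \cite{KRR1996} and \cite{ELW}, so there is no in-paper argument to compare against. Your derivation is a sound, self-contained replacement for those citations. The first part (monotonicity of $d(v_1,\cdot)$, and the failure of any internal singleton on its two neighbours) is standard. The second part hinges on your observation that on a path the failure condition $|i-a|=|i-b|$ with $a\neq b$ collapses to $a+b=2i$, so any two distinct vertices already separate every pair; this is precisely the observation the paper itself uses later, in the proof of Proposition \ref{line prop}, where it notes that for any pair $v_i,v_j$ there is at most one vertex equidistant from both (namely $v_{(i+j)/2}$ when $i+j$ is even). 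So your argument is consistent with, and effectively anticipates, the machinery the paper deploys for $k\geq 2$. The only point worth making explicit is that the existence of at least one internal vertex (needed for the statement to be non-vacuous and for your ``$|L|\geq 2$'' step to have content) relies on the section's standing assumption $n\geq 3$.
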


\begin{proposition}
\label{line prop} For $n\geq2$, $md_{2}^{AP}(G)=2$, and
$md_{k}^{AP}(G)=k+1$ for $k\geq3$. Moreover, for $k\geq3$, a set
$L\subseteq V$ is a minimal landmark set if and only if $|L|=k+1$.
For $k=2$, a set $L\subseteq V$ is a minimal landmark set (with
respect to set inclusion) if and only if either
$L=\{v_{1},v_{n}\}$ or $|L|=3$ and $\{v_{1},v_{n}\}\subsetneq L$.
A minimum cardinality or minimum weight landmark set can be found
in time $O(n)$.
\end{proposition}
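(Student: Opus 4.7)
The plan is to base everything on a single distance calculation on the path: for indices $i<j$ and any $\ell\in\{1,\ldots,n\}$, the equality $d(v_i,v_\ell)=d(v_j,v_\ell)$ holds iff $i<\ell<j$ and $2\ell=i+j$. Two consequences drive the proof: (a) every pair of distinct vertices has at most one non-separator in $V$, namely the midpoint, which exists only when $j-i$ is even; (b) $v_1$ and $v_n$ are never midpoints, while every internal $v_\ell$ (with $2\leq\ell\leq n-1$) is the unique non-separator of the pair $(v_{\ell-1},v_{\ell+1})$.

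For the cardinality bounds, $|L|\geq k$ is immediate. For $k=2$, $\{v_1,v_n\}$ is a landmark set by (b), so $md_2^{AP}(G)=2$. For $k\geq 3$, suppose $|L|=k$: every landmark must then separate every pair, so by (b) $L$ contains no internal vertex, forcing $L\subseteq\{v_1,v_n\}$, contradicting $|L|=k\geq 3$. Conversely, (a) implies that any $L$ of size $k+1$ has at least $k$ separators per pair in $L$, so it is a landmark set. Hence $md_k^{AP}(G)=k+1$ for $k\geq 3$.

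For the minimality classification, recall that a landmark set is minimal iff no proper subset is itself a landmark set. When $k\geq 3$, every $(k+1)$-set is a landmark set while every proper subset has size $\leq k$ and by the above is not; a set of size $>k+1$ loses a vertex and stays above $k+1$, so it is not minimal. Thus the minimal sets are exactly those of size $k+1$. When $k=2$, $\{v_1,v_n\}$ is minimal, and by (a) every 3-set is a landmark set. Moreover $\{v_1,v_n\}$ is the unique 2-element landmark set, since both members of a 2-element landmark set must separate every pair and therefore both must be endpoints by (b); hence a 3-set $L$ admits a proper landmark subset precisely when $\{v_1,v_n\}\subseteq L$. The minimal 3-sets are therefore exactly those that do not contain both endpoints, and no set of size $\geq 4$ is minimal since any 3-subset is already a landmark set.

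For the $O(n)$ algorithm, any minimum-weight landmark set is minimal because weights are non-negative. For $k\geq 3$, output the $k+1$ lightest vertices via linear-time selection. For $k=2$, compare $w(\{v_1,v_n\})$ against the cheapest minimal 3-set; by the classification, the latter lies among a constant number of candidates formed from the three lightest vertices of $V$, of $V\setminus\{v_1\}$, and of $V\setminus\{v_n\}$, all computable in $O(n)$. I expect the subtlest step to be the $k=2$ minimality analysis, specifically identifying $\{v_1,v_n\}$ as the unique 2-element landmark set and using it to decide which 3-sets admit a landmark proper subset.
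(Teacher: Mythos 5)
Your proof is correct and follows essentially the same route as the paper: the unique-midpoint observation (at most one non-separator per pair, never an end vertex), the fact that an internal landmark in a $k$-element set fails on the pair of its two neighbours, the observation that any $(k+1)$-set works, and the identification of $\{v_{1},v_{n}\}$ as the only two-element landmark set for $k=2$; your $O(n)$ enumeration of the cheapest minimal $3$-sets (organized by which endpoint is excluded rather than by how many endpoints are present) is an equivalent reorganization of the paper's three candidate solutions. One remark: your characterization of the minimal $3$-sets for $k=2$ as exactly those \emph{not} containing both end vertices agrees with the paper's own proof, whereas the proposition as printed writes $\{v_{1},v_{n}\}\subsetneq L$ — evidently a typo for $\{v_{1},v_{n}\}\not\subseteq L$.
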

\begin{proof}
\label{path proof1} We start with showing that any subset $X$ of $k+1$
vertices is a landmark set. This holds as for any pair $v_{i},v_{j}\in V$
($i\neq j$), there is at most one vertex of equal distances to both of them.
Specifically, $v_{(i+j)/2}$ has equal distances to $v_{i}$ and $v_{j}$, if
$i+j$ is even (and otherwise, if $i+j$ is odd, then there is no such vertex).
Thus, there are $k$ or $k+1$ separations vertices in $X$ for any pair
$v_{i},v_{j}$.

Recall that a landmark set for AP must contain at least $k$ vertices. We show
that a set $X$ consisting of $k$ vertices such that at least one of them is
internal is not a landmark set. Assume that $v_{a}\in X$, where $1<a<n$. The
two vertices $v_{a-1}$ and $v_{a+1}$ have equal distances to $v_{a}$, and
therefore they have at most $k-1$ separations. This shows that except for the
case $k=2$, any landmark set has exactly $k+1$ vertices. Since any subset of
$k+1$ vertices is a landmark set for $k\geq2$, we find that for $k\geq3$, the
class of minimal landmark sets is exactly the class of subsets of $k+1$
vertices. Finding a minimum weight such subset can be done in linear time by
selecting $k+1$ vertices of minimum weights (breaking ties arbitrarily).

For $k=2$, $\{v_{1},v_{n}\}$ is in fact a landmark set, as the
distances of graph vertices from $v_{1}$ and from $v_{n}$ are
unique, and it is the only landmark set of minimum cardinality.
Thus, any set of the form $\{v_{1},v_{i},v_{n}\}$ for $1<i<n$ is
not a minimal landmark set (with respect to set inclusion), while
other sets consisting of three vertices are minimal landmark sets.
Finding a minimum weight landmark set can be done as follows.
Select the minimum weight solution out of the following three
solutions: a minimum weight subset of three internal vertices, a
minimum weight subset of two internal vertices plus a minimum
weight vertex out of the end vertices, and finally, the third
option is the solution $\{v_{1},v_{n}\}$. These solutions can be
computed in linear time as well.
\end{proof}

We finish this section with showing that for any graph with $n=k\geq3$ there
does not exist a landmark set .

\begin{claim}
For any graph with $n=k\geq3$, there is no feasible solution for AP.
\end{claim}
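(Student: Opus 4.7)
The plan is to show that no feasible AP-landmark set can exist when $n = k \geq 3$. First, any AP-landmark set has at least $k$ landmarks, so $|L| \geq n$, forcing $L = V$. (This also forces $G$ to be connected, since a pair $u, v$ lying inside one connected component can only be separated by vertices of that same component, and no proper component has $n$ vertices.)

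Next, I would rephrase the feasibility condition for $L = V$. For any distinct $u, v \in V$, the $k = n$ required separators must include $u$ and $v$ themselves (which separate each other by Lemma \ref{k=2 lemma}) together with every $\tau \in V \setminus \{u, v\}$. Equivalently: for every vertex $\tau \in V$, the distances from $\tau$ to the other $n-1$ vertices are pairwise distinct positive integers. Since each such distance lies in $\{1, 2, \ldots, \mathrm{ecc}(\tau)\}$ and the diameter of a connected graph on $n$ vertices is at most $n-1$, this forces $\mathrm{ecc}(\tau) = n-1$ for every $\tau$.

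The main step is then to derive a contradiction from this "every vertex has eccentricity $n-1$" condition. Fix any $\tau$; the $n-1$ distinct distances from $\tau$ must exhaust $\{1, 2, \ldots, n-1\}$, so we may label the vertices $\tau = u_0, u_1, \ldots, u_{n-1}$ with $d(\tau, u_i) = i$. A shortest $u_0$-to-$u_{n-1}$ path has $n$ vertices and hence is Hamiltonian, and its $i$-th vertex is forced to be $u_i$ (the unique vertex at distance $i$ from $\tau$). Walking along this Hamiltonian path starting from $u_1$ gives $d(u_1, u_i) \leq \max(1, n-2) = n-2$ for every $i$, so $\mathrm{ecc}(u_1) \leq n-2 < n-1$ (using $n \geq 3$), contradicting the previous paragraph.

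The main obstacle is spotting the right reformulation, namely that feasibility is equivalent to the distances from every vertex being pairwise distinct, and then extracting the Hamiltonian-path consequence; once these are in hand, the final contradiction is just a one-line distance bound along a single path.
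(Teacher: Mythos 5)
Your proposal is correct and follows essentially the same route as the paper's proof: forcing $L=V$ and connectivity, observing that every vertex must then have $n-1$ pairwise distinct distances to the others and hence eccentricity $n-1$, so that the graph is a Hamiltonian shortest path from each vertex, which is impossible for $n\geq 3$. The only cosmetic difference is the final contradiction (you bound the eccentricity of the second vertex $u_1$ on the path, while the paper notes an internal vertex cannot be a path endpoint), so no further comparison is needed.
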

\begin{proof}
Assume by contradiction that there is a landmark set $L\subseteq
V$ for a graph $G$ with $n=k\geq3$. Since $n=k$, in order to
obtain $k$ separations for any pair of distinct vertices, we find
$L=V$, and every vertex of $V$ must separate any pair of distinct
vertices of $V$. As distances in the graph can take values in
$\{0,1,\ldots,n-1,\infty\}$, the graph must be connected. This
holds as in a disconnected graph, there exists a pair of vertices
$u,v\in V$ where $d(u,v)=\infty$, and in this case for any vertex
$x\in V$, $x\neq u,v$ ($x$ must exist as $n\geq3$), at least one
of $d(x,u)=\infty$ and $d(x,v)=\infty$ must hold since otherwise
$d(u,v)<\infty$, and either $u$ does not separate $x$ and $v$ or
$v$ does not separate $x$ and $u$ (or both).

Since the graph is connected, given a vertex $v \in V$, there are
$n-1$ distinct values for the distances $d(v,y)$ for $y\in V$, and
there exists $y'$ such that $d(v,y')=n-1$. The graph must be a
path connecting $v$ and $y'$. Let $x'\in V\setminus\{v,y'\}$
(where $x'$ must exist as $n \geq 3$). Similar reasoning shows
that there is a vertex $y''$ ($y'' \in V \setminus\{x'\}$ such
that the graph is a path connecting $x'$ and $y''$. As $x\neq
v,y'$, we reach a contradiction.
\end{proof}

\subsection{Non-landmarks model (NL)}

In this section we will show that $md_{k}^{NL}(G)=k$ for any $k\geq2$ and
$n\geq k+2$. Since any subset of $k+1$ vertices is a landmark set for AP and
therefore also for NL, we focus on subsets of $k$ vertices. Any such landmark
set is obviously minimal with respect to set inclusion. Given a set
$L\subseteq V$, we say that a hole is a vertex of $V\setminus L$. A maximum
length sequence of consecutive holes is a called gap (induced by $L$). The
length of a gap is defined to be the number of vertices is it. A gap can
possibly contain an end vertex. As $n>k$, if $|L|=k$, there is at least one
hole, thus at least one gap induced by $L$. Similarly, a vertex $v_{i}\in L$
is called an anti-hole, and a sequence of consecutive vertices of $L$ is
called an anti-gap. \medskip

The following two lemmas deal with the case where $L\subseteq V$
is a set of vertices that satisfies $|L|=k$.

\begin{lemma}
If $L$ induces exactly one gap, then $L$ is a landmark set. Thus,
$md_{k}^{NL}=k$.
\end{lemma}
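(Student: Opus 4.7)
The plan is to exploit the simple geometry of paths: on a path, the only vertex with equal distances to $v_i$ and $v_j$ (for $i\neq j$) is the midpoint $v_{(i+j)/2}$, which exists only when $i+j$ is even. So to show that two holes are separated by every vertex of $L$, it suffices to show that their midpoint is not in $L$.

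Concretely, I would first set up notation: assume $|L|=k$ and that $L$ induces exactly one gap, so the set of holes is a contiguous block $v_a,v_{a+1},\ldots,v_b$ with $b-a+1=n-k$, while $L=\{v_1,\ldots,v_{a-1}\}\cup\{v_{b+1},\ldots,v_n\}$. For the NL model it suffices to verify $k$ separations for every pair of distinct holes $v_i,v_j$ with $a\le i<j\le b$. Take any $v_\ell\in L$; then $\ell<a$ or $\ell>b$, so $\ell\notin\{i+1,\ldots,j-1\}$, and in particular $\ell\neq(i+j)/2$. Hence $|\ell-i|\neq|\ell-j|$, i.e. $d(v_\ell,v_i)\neq d(v_\ell,v_j)$. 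Therefore every vertex of $L$ separates $v_i$ and $v_j$, giving $|L|=k$ separations, so $L\in LS_k^{NL}(G)$.

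For the equality $md_k^{NL}(G)=k$, the upper bound is witnessed by an explicit single-gap set, for instance $L=\{v_1,\ldots,v_k\}$, which is well-defined and induces the single gap $v_{k+1},\ldots,v_n$ of length $n-k\ge 2$ (using $n\ge k+2$); the argument above shows $L$ is a landmark set, so $md_k^{NL}(G)\le k$. The matching lower bound follows from the general discussion preceding the lemma: since $n\ge k+2$, a trivial solution has $n-1\ge k+1$ vertices, and any non-trivial solution must satisfy the $k$-separation requirement for at least one pair of holes, hence needs $|L|\ge k$.

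There is no real obstacle here — the key observation is just that the forbidden midpoint $v_{(i+j)/2}$ lies strictly between two holes of the same gap, and is therefore automatically a hole itself, so it cannot spoil any separation count. The only thing to be careful about is the trivial-solution case, which is why the hypothesis $n\ge k+2$ is implicitly being used to guarantee that a non-trivial size-$k$ set with a single gap of length at least $2$ exists at all.
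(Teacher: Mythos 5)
Your proof is correct and follows essentially the same route as the paper: both arguments reduce to the observation that every landmark lies outside the block of holes, hence on one side of any pair of holes, and therefore separates them (the paper phrases this as "the shortest path from one hole to the landmark traverses the other hole," while you phrase it as "the landmark cannot be the midpoint"). Your handling of the $md_k^{NL}=k$ claim, via the explicit witness $\{v_1,\ldots,v_k\}$ and the lower bound from the general discussion for $n\ge k+2$, also matches the paper.
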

\begin{proof}
Since there is just one gap, $L=\{v_{1},\ldots,v_{a},v_{b},\ldots,v_{n}\}$,
such that $0 \leq a < b \leq n+1$ and $a+(n-b+1)=k$ (if $a=0$, then $v_{1}
\notin L$, and if $b=n+1$, then $v_{n} \notin L$). Consider the vertices of
$V\setminus L$, that is, $v_{a+1},\ldots,v_{b-1}$. For any pair of vertices
$v_{i},v_{j}$, such that $a<i<j<b$, the distances to any vertex in $L$ are
distinct (as the path from $v_{j}$ to any vertex $v_{a^{\prime}}$ with
$a^{\prime}\leq a$ traverses $v_{i}$, and the path from $v_{i}$ to any vertex
$v_{b^{\prime}}$ with $b^{\prime}\geq b$ traverses $v_{j}$). As such a subset
$L$ must exist (for example, $\{v_{1},v_{2},\ldots,v_{k}\}$ is such a set), we
have $md_{k}^{NL}=k$.
\end{proof}

\begin{lemma}
If $L$ induces at least two gaps, and at least one gap has at least two
vertices, then $L$ is not a landmark set.
\end{lemma}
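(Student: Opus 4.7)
The plan is to exhibit a pair of holes $u,v\in V\setminus L$ that receive only $k-1$ separations from $L$, contradicting $L\in LS_{k}^{NL}(G)$. The key observation, implicit in the proof of Proposition~\ref{line prop}, is that on a path $v_{m}$ fails to separate $v_{i}$ from $v_{j}$ if and only if $i+j$ is even and $m=(i+j)/2$. Consequently, for a pair $(v_{i},v_{j})$ the number of separators in $L$ is exactly $|L|-1=k-1$ whenever $i+j$ is even and the midpoint vertex $v_{(i+j)/2}$ happens to lie in $L$, and is $|L|=k$ otherwise. So it suffices to find two holes whose midpoint vertex is an anti-hole.

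By the reflection symmetry $v_{i}\mapsto v_{n+1-i}$ (which preserves distances, landmark sets, and the hypothesis of the lemma), I may assume without loss of generality that some gap $G$ of length at least two has another gap $G'$ located strictly to its right. Let $v_{b}$ denote the rightmost hole of $G$, let $v_{b'}$ denote the leftmost hole of $G'$, and set $t=b'-b-1\geq 1$, so that the anti-holes $v_{b+1},\ldots,v_{b+t}$ all belong to $L$.

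I then split on the parity of $t$. If $t$ is odd, I take the pair $(v_{b},v_{b'})$; its midpoint is $v_{b+(t+1)/2}$, and because $1\leq (t+1)/2\leq t$ for every odd $t\geq 1$, this midpoint sits in the anti-gap and hence in $L$. If $t$ is even, so $t\geq 2$, I use the pair $(v_{b-1},v_{b'})$ instead, where the hypothesis $|G|\geq 2$ is precisely what guarantees that $v_{b-1}$ is still a hole of $G$; here the midpoint is $v_{b+t/2}$, which again lies in the anti-gap and therefore in $L$. In either case both chosen vertices are holes and their midpoint is an anti-hole, producing a pair in $V\setminus L$ separated by only $k-1$ vertices of $L$.

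The only obstacle worth flagging is bookkeeping: arranging the parity split so that the midpoint lands strictly inside the anti-gap $\{v_{b+1},\ldots,v_{b+t}\}$ in each case, and checking that the candidate endpoints are genuinely holes. The assumption $|G|\geq 2$ enters in a sharply localised way, solely to license the replacement of $v_{b}$ by $v_{b-1}$ in the even-$t$ case, which is needed to give the two indices an even sum and thereby put their midpoint on an integer vertex.
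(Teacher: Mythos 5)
Your proof is correct and follows essentially the same route as the paper's: both arguments locate two adjacent gaps with the longer one supplying a spare hole, and use a parity case split to exhibit a pair of holes whose (integer) midpoint falls inside the intervening anti-gap and hence in $L$, leaving only $k-1$ separators. Your reflection-symmetry reduction and the split on the parity of $t$ are just the mirror image of the paper's ``assume $v_{j+1}$ is a hole'' and its split on the parity of $i+j$; the only nit is that you should say $G'$ is the gap \emph{immediately} to the right of $G$, since otherwise the claim that $v_{b+1},\ldots,v_{b+t}$ are all anti-holes need not hold.
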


\begin{proof}
Assume that $L$ induces two such gaps. Consider two consecutive
gaps (where the set of vertices between them is an anti-gap) such
that at least one of these gaps has two vertices. Let $v_{i}$ be
the last vertex of the first gap out of the two, and let $v_{j}$
($j\geq i+2$) be the first vertex of the second gap. Without loss
of generality assume that $v_{j+1}$ is a hole as well (the proof
for the case that $v_{i-1}$ is a hole is similar). If $i+j$ is
even, then the vertex $v_{(i+j)/2}$ has equal distances to $v_{i}$
and $v_{j}$ and does not separate them, and $v_{(i+j)/2}\in L$ as
$i<(i+j)/2<j$, and all vertices on the path between $v_{i}$ and
$v_{j}$ are in $L$ (as $v_{i+1},\ldots,v_{j-1}$ is an anti-gap).
As $|L|=k$, there are at most $k-1$ separations between $v_{i}$
and $v_{j}$, a contradiction. If $i+j$ is odd, then the vertex
$v_{(i+j+1)/2}$ satisfies $i<(i+j+1)/2<j$, so $v_{(i+j+1)/2}\in
L$, and it has equal distances to $v_{i}$ and to $v_{j+1}$, and
there are at most $k-1$ separations between these two holes, a
contradiction again.
\end{proof} \medskip

In the next three lemmas, $L\subseteq V$ is a landmark set that
satisfies $|L|=k$.

\begin{lemma}
\label{even anti-gap lemma}If $L$ induces at least two gaps, then the length
of every anti-gap that does not contain an end vertex is even.
\end{lemma}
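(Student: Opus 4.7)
The plan is to argue by contradiction: assume some anti-gap not containing an end vertex has odd length, then exhibit a pair of holes bordering it that receive fewer than $k$ separations from $L$.

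First I would set up notation. Since the anti-gap does not contain $v_1$ or $v_n$, we can write it as $v_{i+1},v_{i+2},\ldots,v_{j-1}$ where $1\le i<j\le n$ and $v_i,v_j$ are holes (they must exist as anti-holes of $L$ would have merged them into the anti-gap, and the assumption that the anti-gap avoids the end vertices guarantees both boundary indices $i,j$ lie in $\{1,\ldots,n\}$). The length of this anti-gap is $j-i-1$. Both $v_i$ and $v_j$ belong to $V\setminus L$, so the NL requirement forces at least $k$ vertices of $L$ to separate them.

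Next I would assume for contradiction that $j-i-1$ is odd, which means $j-i$ is even, equivalently $i+j$ is even. Consider the midpoint $v_{(i+j)/2}$: since $i<(i+j)/2<j$ and every vertex strictly between $v_i$ and $v_j$ lies in the anti-gap, we have $v_{(i+j)/2}\in L$. In a path, distances are given by index differences, so $d(v_{(i+j)/2},v_i)=(j-i)/2=d(v_{(i+j)/2},v_j)$, meaning this vertex fails to separate $v_i$ and $v_j$.

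Finally, since $|L|=k$, there are at most $k-1$ remaining candidates in $L$ that could separate $v_i$ from $v_j$, contradicting the requirement of $k$ separations for the pair $v_i,v_j$ of holes. Hence $j-i-1$ must be even. The only thing to be careful about is the claim that $v_i$ and $v_j$ are indeed holes bordering the anti-gap; this follows from maximality of the anti-gap together with the hypothesis that it avoids the end vertices, so neither $i=0$ nor $j=n+1$ occurs. I do not expect a genuine obstacle here — the argument is a direct midpoint-equidistance computation, closely parallel to the proof of the previous lemma.
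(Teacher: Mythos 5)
Your proof is correct and follows essentially the same route as the paper's: both identify the two holes bordering the odd-length anti-gap, observe that the anti-gap's midpoint vertex lies in $L$ but is equidistant from those two holes, and conclude that the pair of holes gets at most $k-1$ separations since $|L|=k$. The only difference is cosmetic indexing (the paper writes the anti-gap as $v_i,\ldots,v_{i+2x}$ with holes $v_{i-1},v_{i+2x+1}$ and midpoint $v_{i+x}$, while you parametrize by the bordering holes $v_i,v_j$ and use $v_{(i+j)/2}$).
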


\begin{proof}
Assume by contradiction that there is an anti-gap of odd length
$v_{i},\ldots,v_{i+2x}$ for some integer $x\geq1$, where $i>1$ and
$i+2x<n$, such that $v_{i-1}$ and $v_{i+2x+1}$ are holes. The
vertex $v_{i+x}$ is a part of the anti-gap as $i<i+x<i+2x$, but it
does not separate the holes $v_{i-1}$ and $v_{i+2x+1}$, as it has
equal distances of $x+1$ to both of them. Thus, there are at most
$k-1$ separations between these anti-holes, a contradiction.
\end{proof}

\begin{lemma}
\label{anti-gap length lemma}If $L$ induces at least three gaps, then all
anti-gaps that do not contain end vertices have the same length.
\end{lemma}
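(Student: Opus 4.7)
The plan is to exploit the rigidity that comes from $|L| = k$: every pair of holes must be separated by \emph{every} vertex of $L$, since the total supply of separators is exactly $k$. I will translate this into an arithmetic condition on the positions of the holes.

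First I would set up notation. By the preceding lemma, if $L$ induces at least two gaps then each gap must have length exactly one (otherwise $L$ is not a landmark set). So the holes are precisely $v_{g_1},\ldots,v_{g_m}$ with $g_1<g_2<\cdots<g_m$ and $m\geq 3$, and the internal anti-gaps (those not containing an end vertex) are exactly the blocks $v_{g_i+1},\ldots,v_{g_{i+1}-1}$ for $i=1,\ldots,m-1$, whose lengths are $g_{i+1}-g_i-1$. Showing all these lengths agree is equivalent to showing all the consecutive hole-gaps $g_{i+1}-g_i$ agree.

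The key step is the following. Fix $i$, write $a=g_{i+1}-g_i$ and $b=g_{i+2}-g_{i+1}$, and look at the pair of holes $v_{g_i},v_{g_{i+2}}$. By Lemma \ref{even anti-gap lemma}, $a-1$ and $b-1$ are even, so $a+b$ is even and the midpoint $(g_i+g_{i+2})/2=g_{i+1}+(b-a)/2$ is an integer. Because $|L|=k$, the two holes $v_{g_i}$ and $v_{g_{i+2}}$ must receive $k$ separations from $L$, i.e., \emph{every} vertex of $L$ must separate them; hence the midpoint cannot be an anti-hole. If $a=b$, the midpoint is $g_{i+1}$, which is indeed a hole, so no contradiction arises. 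If $a\neq b$, then $0<|b-a|/2<\max(a,b)$ places the midpoint strictly inside one of the two anti-gaps flanking $v_{g_{i+1}}$ (the blocks $\{v_{g_i+1},\ldots,v_{g_{i+1}-1}\}$ and $\{v_{g_{i+1}+1},\ldots,v_{g_{i+2}-1}\}$), each of which consists entirely of vertices of $L$; this puts the midpoint in $L$, a contradiction.

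Therefore $a=b$ for every consecutive triple of holes, which forces all $g_{i+1}-g_i$ to be equal, and hence all internal anti-gap lengths coincide. The only delicate point is the bound $0<|b-a|/2<\max(a,b)$ that guarantees the midpoint lands inside an anti-gap rather than on another hole or off the path; I expect this bookkeeping to be the main thing to verify carefully, but it is immediate from $a,b\geq 1$.
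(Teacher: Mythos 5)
Your proof is correct and is essentially the same argument as the paper's: you take three consecutive holes, observe that since $|L|=k$ every vertex of $L$ must separate the outer two, and show that if the flanking anti-gaps had different (even) lengths the integer midpoint of the outer holes would land strictly inside one of the anti-gaps and hence in $L$, a contradiction. The paper phrases this with anti-gap lengths $2x\neq 2y$ and the vertex $v_{i+x+y}$ rather than with hole positions $g_i$, but the two are identical in substance.
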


\begin{proof}
Assume by contradiction that there are two anti-gaps of different lengths that
do not have end vertices. Thus, there are two such consecutive anti-gaps (with
even numbers of vertices) where the first sequence is of the form
$v_{i},\ldots,v_{i+2x-1}$, and the second sequence is of the form
$v_{i+2x+1},\ldots,v_{i+2x+2y}$, such that $v_{i-1}$, $v_{i+2x}$, and
$v_{i+2x+2y+1}$ are holes (where $x,y$ are integers such that $x,y\geq1$ and
$x\neq y$, as the anti-gaps have lengths $2x$ and $2y$, that are distinct even
integers). We claim that the vertex $v_{i+x+y}$ is an anti-hole. The only hole
among $v_{i},\ldots,v_{i+2x+2y}$ is $v_{i+2x}$ (obviously $v_{i+x+y}$ is one
of the vertices $v_{i},\ldots,v_{i+2x+2y}$ as $i<i+x+y<i+2x+2y$). The vertices
$v_{i+2x}$ and $v_{i+x+y}$ are distinct as $x\neq y$. The distances of
$v_{i+2x+2y+1}$ and $v_{i-1}$ from $v_{i+x+y}$ are both equal to $x+y+1$, and
therefore there are at most $k-1$ separations between $v_{i+2x+2y+1}$ and
$v_{i-1}$ in $L$, a contradiction.
\end{proof}

\begin{lemma}
If $L$ induces at least two gaps, then the set of the $n-k$ holes
is of the form
$$v_{i},v_{i+\rho},v_{i+2\rho},\ldots,v_{i+(n-k-1)\rho}$$ where
$\rho \geq3$ is odd, $i\geq1$, and $i+(n-k-1)\rho\leq n$. If $G$
has a landmark set with $k$ vertices that induces at least two
gaps, then $n\leq3k/2+1$.
\end{lemma}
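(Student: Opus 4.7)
The plan is to bootstrap entirely from the three preceding lemmas, which together already force the holes into an arithmetic progression; the only remaining work is careful bookkeeping on positions and a clean counting argument for the cardinality bound.

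First I would note that, since $L$ is assumed to be a landmark set with $|L|=k$ inducing at least two gaps, the lemma asserting that some gap of length $\geq 2$ prevents $L$ from being a landmark set forces every gap to consist of a single hole. Consequently the number of gaps equals $n-k$, and the holes can be enumerated as $v_{p_1},v_{p_2},\ldots,v_{p_{n-k}}$ with $p_1<p_2<\cdots<p_{n-k}$. Between two consecutive holes $v_{p_j},v_{p_{j+1}}$ there is an internal anti-gap of length $p_{j+1}-p_j-1$ (not containing an end vertex, since there are holes on both sides). By Lemma~\ref{even anti-gap lemma} this length is even, and because the two holes are in different gaps it is at least $1$, hence at least $2$. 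Thus $p_{j+1}-p_j$ is odd and $\geq 3$ for every $j$.

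Next I would split according to the number of gaps. If there are at least three gaps, then there are at least two internal anti-gaps, and Lemma~\ref{anti-gap length lemma} forces all internal anti-gaps to have the same (even) length. Therefore $p_{j+1}-p_j$ is the same odd integer $\rho\geq 3$ for all $j$, so the holes form an arithmetic progression $v_i,v_{i+\rho},\ldots,v_{i+(n-k-1)\rho}$ with $i=p_1\geq 1$ and the last index $i+(n-k-1)\rho=p_{n-k}\leq n$. If there are exactly two gaps, there is only one spacing to speak of, which is already odd and $\geq 3$, and the claimed form holds trivially with that value of $\rho$.

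For the bound $n\leq 3k/2+1$, I would count anti-holes. The total number of anti-holes is $|L|=k$. Among these, each of the $n-k-1$ internal anti-gaps contributes exactly $\rho-1\geq 2$ anti-holes, and the (possibly empty) anti-gaps containing the end vertices contribute a nonnegative number more. Consequently
\[
k \;\geq\; (n-k-1)(\rho-1)\;\geq\; 2(n-k-1),
\]
which rearranges to $3k+2\geq 2n$, i.e., $n\leq \tfrac{3k}{2}+1$.

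The only subtlety I anticipate is handling the boundary cases cleanly: making sure the enumeration $p_1,\ldots,p_{n-k}$ is well defined when $v_1$ or $v_n$ is itself a hole (then the corresponding end anti-gap is empty, which is consistent with the formulas since Lemma~\ref{even anti-gap lemma} only constrains anti-gaps not containing end vertices), and verifying that the two-gap case really does fit the stated arithmetic-progression template without invoking Lemma~\ref{anti-gap length lemma}. Once these edge cases are dispatched, the rest is a direct assembly of the preceding lemmas with the one-line counting inequality above.
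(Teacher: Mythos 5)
Your proposal is correct and follows essentially the same route as the paper: every gap is a single hole by the preceding lemma, the spacings are odd and at least $3$ by Lemma~\ref{even anti-gap lemma}, and they are all equal by Lemma~\ref{anti-gap length lemma} (with the two-gap case handled trivially, a boundary case you treat more explicitly than the paper does). The only cosmetic difference is in the final bound, where you count the $k$ anti-holes against the $n-k-1$ internal anti-gaps of size $\rho-1\geq 2$, while the paper reads the same inequality off the position constraint $i+(n-k-1)\rho\leq n$ with $i\geq 1$ and $\rho\geq 3$; both yield $n\leq 3k/2+1$.
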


\begin{proof}
As there are $k$ anti-holes, there are $n-k$ holes. Every gap consists of a
single hole. Let $i$ be the hole of smallest index. Let $v_{i+\rho}$ be the
next hole. The number $\rho-1$ must be a positive even number (by Lemma
\ref{even anti-gap lemma}), and thus $\rho\geq3$, and $\rho$ is odd. All other
anti-gaps excluding the ones that possibly occur before the first hole and
after the last hole also have lengths of $\rho-1$ (by Lemma
\ref{anti-gap length lemma}) and gaps consisting of a single hole separate them.

Next, we show the second claim. We have $i+(n-k-1)\rho\leq n$ while
$n-k-1\geq1$, $\rho\geq3$ and $i\geq1$, implying $1+3(n-k-1)\leq n$, or
alternatively, $n\leq3k/2+1$.
\end{proof}

We now state the algorithm for all values of $k$ and $n\geq k+2$.
An algorithm for finding a minimum cardinality landmark set simply
returns vertices $v_{1},\ldots,v_{k}$. In the weighted variant,
three solutions are calculated, and the output is a minimum weight
solution out of these three solutions. The first solution is a set
of $k+1$ minimum weight vertices. This solution is computed in
time $O(n)$. The second solution is a minimum weight solution out
of those with a single gap. These are solutions of the form
$\{v_{1},\ldots,v_{a},v_{b},\ldots,v_{n}\}$ for $0\leq a<b\leq
n+1$, where $a+(n-b+1)=k$, and these solutions can be enumerated
in time $O(n)$. If $n\leq3k/2+1$, solutions inducing at least two
gaps are considered too. For any odd integer
$3\leq\rho\leq\frac{n-1}{k-1}$, we find the maximum weight of a
sequence of the form
$v_{i},v_{i+\rho},v_{i+2\rho},\ldots,v_{i+(n-k-1)\rho}$ by testing
all relevant values of $i$. This can be done in time $O(n)$ for
each value of $\rho$, and in time $O(n^{2})$ in total.

\section{Complete wheel graphs}
\label{whel} This section deals with the case where $G$ is
complete wheel graph, where $V= C\cup\{h\}$ such that
$C=\{c_{1},...,c_{n-1}\}$ is $G$'s cycle, $h$ is the central
vertex, and the other vertices are cycle vertices. For simplicity,
we extend the definition of indices of cycle vertices as follows.
The vertex $c_{i}$ for $i\geq n$ or $i\leq0$ is defined as
$c_{i}=c_{b}$, for an integer $b$ where $1 \leq b \leq n-1$ such
that $b=b+j(n-1)$ for an integer $j$ (where $j$ may be positive or
negative). In such a graph, $E=\{\{c_{i},h\}|1 \leq i \leq
n-1\}\cup\{\{c_{i},c_{i+1}\}|1 \leq i \leq n-1\}$.

In wheel graphs, any distance between any possible pair of vertices is either
$1$ or $2$. Distances of $1$ occur only between pairs of vertices connected by
an edge, that is, pairs of neighboring cycle vertices, and pairs consisting of
$h$ and a cycle vertex. Note that $n\geq4$, since $G$'s cycle has at least $3$
vertices. Note, also, that for a pair of vertices $\{u,v\}\subset V$,
$N_{u,v}(G)$ is never empty as it contains $h$ if $u,v\neq h$, and if
$\{u,v\}=\{h,c_{i}\}$, then $N_{u,v}(C)=\{c_{i-1},c_{i+1}\}$ (however, in the
case that we only consider neighbors on $G$'s cycle, $C$, it is possible that
$N_{u,v}(C)=\emptyset$). We start the discussion by examining small wheels;
these cases are slightly different from larger wheels as there is no fixed
pattern with respect to possible positions of landmarks. Afterwards, we
discuss the more general case of larger wheels, where we will show that for
$n\geq9$, the central vertex can be omitted for any landmark set $L\subseteq
V$, for both models (i.e. if $L\in LS_{k}^{M}(G)$ then $L\backslash\{h\}\in
LS_{k}^{M}(G)$). Since for $n=4$, a complete wheel graph is actually a
complete graph (or a clique graph), we deal with the more general case of
complete graphs first.

\subsection{Complete graphs (cliques) and complete bipartite graphs}

\label{clik} Let $G=(V,E)$ be a complete graph with $n \geq3$ (i.e.
$E=\{\{u,v\}|u,v,\in V\}$). In this graph the distances satisfy $d(u,v)=1$ for
any $u,v\in V$.

\begin{proposition}
\label{clique} For a complete graph $G$, $md_{2}^{AP}(G)=n$, and
$md_{k}^{AP}(G)=\infty$ for $k>2$. Additionally,
$md_{k}^{NL}(G)=n-1$. For any $G$ and $k$, a minimum weight
landmark set can be computed in linear time.
\end{proposition}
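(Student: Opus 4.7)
The plan rests on a single observation about complete graphs: since every pair of distinct vertices is at distance $1$, for any pair $u,v \in V$ and any third vertex $\tau \in V \setminus \{u,v\}$ we have $d(u,\tau) = d(v,\tau) = 1$, so $\tau$ does \emph{not} separate $u$ and $v$. Thus the only possible separators of the pair $\{u,v\}$ are $u$ and $v$ themselves, and $SpS_{u,v}(L) \subseteq \{u,v\} \cap L$. In particular, for every pair there are at most two separations available, and exactly two if and only if both $u$ and $v$ lie in $L$.

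With this observation the AP claims are immediate. For $k=2$, a landmark set must contain both endpoints of every pair $\{u,v\}$, which forces $L = V$; conversely $V$ works by Lemma \ref{k=2 lemma}, giving $md_2^{AP}(G) = n$. For $k \geq 3$, no pair can ever accumulate $k$ separations, so $LS_k^{AP}(G) = \emptyset$ and $md_k^{AP}(G) = \infty$.

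For NL, I would first note that if $|V \setminus L| \geq 2$, then picking two holes $u,v \in V \setminus L$ leaves $SpS_{u,v}(L) = \emptyset$ by the observation above, so $L$ cannot separate them even once. Hence every feasible set has $|L| \geq n-1$. Conversely, any set of size $n-1$ leaves at most one hole, so there are no pairs in $V \setminus L$ to separate and the NL condition is vacuous; thus $md_k^{NL}(G) = n-1$ for every $k \geq 2$.

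Finally, the weighted versions are trivial from the structural characterization: for AP with $k=2$ the unique feasible set is $V$ (weight $w(V)$), for AP with $k \geq 3$ no solution exists, and for NL an optimal solution is obtained by discarding a single vertex of maximum weight, which takes $O(n)$ time. The only mild subtlety is the NL argument that $|L| \geq n-1$, which depends on the observation that no third vertex can ever separate a pair in a complete graph; everything else follows by direct counting.
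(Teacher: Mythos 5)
Your proposal is correct and follows essentially the same route as the paper: the single observation that in a complete graph no third vertex can separate a pair (so $SpS_{u,v}(L)=\{u,v\}\cap L$) drives all four claims, exactly as in the paper's proof. The only cosmetic difference is that you spell out the NL lower bound and the feasibility of $V$ for $k=2$ slightly more explicitly, which the paper leaves implicit.
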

\begin{proof}
For a pair of vertices $u\neq v$, any third vertex $x\neq u,v$ does not
separate them. Thus, a landmark set for any model must contain at least $n-1$
vertices. Since any subset of $n-1$ vertices is a trivial landmark set for NL,
we find $md_{k}^{NL}(G)=n-1$. Next, consider AP. Since for any pair of
vertices $u,v$, $SpS_{u,v}(L)=\{u,v\}\cap L$, at most two separations are
possible, so $md_{k}^{AP}(G)=\infty$ for $k>2$, and for $k=2$, $L$ must
contain all vertices to allow two separations for every pair. We briefly
discuss algorithms for minimum weight landmark sets. For AP, the algorithm
returns $V$ if $k=2$, and otherwise it reports that there is no feasible
solution. For NL, it finds a vertex $y$ of maximum weight and returns
$V\setminus\{y\}$.
\end{proof}

Next, let $G=(V,E)$ be a complete bipartite graph with $n\geq3$ with the
partitions $A,B$ (where $A,B\neq\emptyset$, $A\cup B=V$, $A\cap B=\emptyset$
and $E=\{\{u,v\}|u\in A,v\in B\}$). In this graph the distances satisfy
$d(u,v)=1$ for $u\in A$ and $v\in B$, or if $u\in B$ and $v\in A$ and
$d(u,v)=2$ if $u\neq v$, $u,v\in A$ or $u,v\in B$.

\begin{proposition}
\label{compbip} For a complete bipartite graph $G$, $md_{2}^{AP}(G)=n$, and
$md_{k}^{AP}(G)=\infty$ for $k>2$. Additionally, $md_{k}^{NL}(G)=n-2$ for
$n\geq k+2$. For any $G$ and $k$, a minimum weight landmark set can be
computed in linear time.
\end{proposition}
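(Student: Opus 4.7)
The plan is to mirror the proof of Proposition~\ref{clique}, first pinning down the separation structure from the distance table and then reading off the four claims.

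The first step is a one-line separation analysis. Because distances in $G$ take only the values $0,1,2$, I would check directly that for a pair $u,v$ in different partitions every vertex of $V$ separates them (a vertex in $A$ is at distance $1$ from the $B$-endpoint and distance $0$ or $2$ from the $A$-endpoint, symmetrically for $B$), whereas for a pair $u,v$ in the same partition, say both in $A$, any $\tau\notin\{u,v\}$ is at equal distance to both: $2$ if $\tau\in A$, $1$ if $\tau\in B$. Thus $SpS_{u,v}(L)=\{u,v\}\cap L$ whenever $u,v$ lie in the same part, exactly as in the clique case.

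The second step uses this. For AP, any same-partition pair admits at most two separations, so $md_k^{AP}(G)=\infty$ for $k>2$; for $k=2$, every such pair forces both of its endpoints into $L$, and (using $n\ge 3$ together with $|A|,|B|\ge 1$) this forces $L=V$, hence $md_2^{AP}(G)=n$. For NL, the same obstruction says that at most one hole may lie in each partition, else two co-partition holes would be separated $0$ times; so $|V\setminus L|\le 2$, i.e.\ $|L|\ge n-2$. Conversely, setting $L=V\setminus\{a,b\}$ with $a\in A$ and $b\in B$ leaves $(a,b)$ as the unique pair of holes, and by the first step every vertex of $L$ separates $(a,b)$, giving $|L|=n-2$ separations; this meets the threshold $k$ precisely when $n\ge k+2$, matching the stated hypothesis and yielding $md_k^{NL}(G)=n-2$.

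The third step is the algorithm, which is immediate from the structural description. For AP return $V$ if $k=2$ and infeasibility otherwise; for NL compute the maximum-weight vertex in $A$ and in $B$ and output $V$ minus these two, which is an $O(n)$ procedure, since any optimal NL solution has exactly one hole in each partition and we want to maximize the weight of the deleted pair. The only real subtlety to watch is the edge case where one partition is a singleton (i.e., $G$ is a star): the same-partition obstruction then only applies in the larger part, so I would verify in a short remark that the above conclusions remain consistent, essentially by specializing the counting to $|A|=1$.
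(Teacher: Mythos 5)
Your argument is essentially the paper's proof: the same key observation (a same-partition pair is separated only by its two endpoints, so $SpS_{u,v}(L)=\{u,v\}\cap L$) drives the AP forcing/infeasibility claims and the NL lower bound, the NL upper bound is the same explicit set $V\setminus\{a,b\}$ (you justify its feasibility by noting every vertex separates a cross-partition pair, where the paper simply lists the distances --- a cosmetic difference), and the algorithms coincide. The one substantive remark concerns the edge case you flag at the end: your instinct is right that it needs checking, but the ``short remark'' you promise would not confirm consistency. If one partition is a singleton, say $A=\{a\}$, the forcing argument only places $B$ in $L$, and $L=B$ is already an AP landmark set for $k=2$: the pair $(b_i,b_j)$ gets its two separations from $b_i,b_j\in L$, and every vertex of $B$ separates each cross pair $(a,b_i)$ since $d(b_j,a)=1\neq 2=d(b_j,b_i)$, so with $|B|=n-1\geq 2$ one gets $md_{2}^{AP}=n-1<n$ for stars. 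Hence your parenthetical ``using $n\geq 3$ together with $|A|,|B|\geq 1$'' does not force $L=V$; the forcing needs $|A|,|B|\geq 2$ so that every vertex belongs to some same-partition pair. The paper's own proof makes the identical unstated assumption, so this is a defect of the statement's generality rather than something peculiar to your write-up --- but having raised the case, you should resolve it (restrict to $|A|,|B|\geq 2$, or correct the AP value for stars) rather than assert it works out. The NL claim, for what it is worth, does survive the singleton case.
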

\begin{proof}
For a pair of vertices $u\neq v$ that belong to the same partition ($A$ or
$B$), any third vertex $x\neq u,v$ does not separate them. Thus, a landmark
set for AP must contain all vertices proving the claim for AP. By the same
reasoning, for NL, a landmark set must contain all vertices of $A$ except for
at most one vertex and all vertices of $B$ except for at most one vertex.
Given $a\in A$ and $b\in B$, $V\setminus\{a,b\}$ is a landmark set since
$k\leq n-2$, every vertex of $B$ (excluding $b$) has distance $2$ to $b$ and
distance $1$ to $a$, and every vertex of $A$ (excluding $a$) has distance $1$
to $b$ and distance $2$ to $a$. For NL, an algorithm finds a pair of vertices
(one from $A$ and one from $B$ of maximum weights and returns $V\setminus
\{a,b\}$.
\end{proof}

\subsection{General properties and the case {${\boldsymbol{n=5}}$}}

After covering the case of a complete graph, we get back to the case $G$ is a
complete wheel graph as described above, where $n \geq5$. We continue with
several simple properties that will be used throughout the section.

\begin{claim}
\label{separatingclaim} Let $u,v\in G$ be cycle vertices. Then $u$
and $v$ can be separated by vertices in $((N_{u}(G)\cup
N_{v}(G))\setminus N_{u,v}(G))\cup\{u,v\}$. Moreover, $u$ and $h$
can be separated by vertices in $(C \setminus
N_{u}(G))\cup\{u,h\}$.
\end{claim}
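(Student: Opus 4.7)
The plan is to exploit the fact, already observed in the paragraph preceding the claim, that every pairwise distance in a complete wheel graph belongs to $\{0,1,2\}$. Since $d(\tau,\tau)=0$ and all other distances are $1$ or $2$, a vertex $\tau$ separates a pair $x,y$ if and only if either $\tau\in\{x,y\}$ (distance $0$ versus a positive distance), or $\tau\notin\{x,y\}$ and $\tau$ is adjacent to exactly one of $x,y$ (distance $1$ versus distance $2$). This reduces the claim to a purely combinatorial statement about neighborhoods.

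For the first part, I would fix two cycle vertices $u,v$ and classify any $\tau\in V\setminus\{u,v\}$ by its membership in $N_u(G)$ and $N_v(G)$. If $\tau\in N_{u,v}(G)$, then $d(\tau,u)=d(\tau,v)=1$ and $\tau$ does not separate; if $\tau\notin N_u(G)\cup N_v(G)$, then $d(\tau,u)=d(\tau,v)=2$ and again $\tau$ does not separate; the only remaining case is $\tau\in(N_u(G)\cup N_v(G))\setminus N_{u,v}(G)$, where the two distances are $1$ and $2$, so $\tau$ does separate. Together with the fact that $u$ and $v$ always separate each other, this gives the required inclusion.

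For the second part, I would use the same strategy but take advantage of the central vertex. For any $\tau\in C\setminus\{u\}$, we have $d(\tau,h)=1$ because $h$ is adjacent to every cycle vertex. Hence $\tau$ separates $u$ from $h$ iff $d(\tau,u)\neq 1$, which is equivalent to $\tau\notin N_u(G)$. Noting that $u\notin N_u(G)$ so $u\in C\setminus N_u(G)$ already, and that $h$ separates itself from $u$ trivially, the set of potential separators lies inside $(C\setminus N_u(G))\cup\{u,h\}$.

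There is no real obstacle here; the only thing to be careful about is bookkeeping the boundary cases $\tau=u$, $\tau=v$, and $\tau=h$ so that they are correctly included in the stated sets (in particular, observing $u\in C\setminus N_u(G)$ so that the $\{u,h\}$ augmentation is consistent with the cycle-vertex analysis). The proof is essentially a case split based on distances that are confined to $\{0,1,2\}$.
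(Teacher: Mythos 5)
Your proposal is correct and follows essentially the same route as the paper's proof: both arguments rest on the observation that all distances lie in $\{0,1,2\}$, so a separator other than $u,v$ themselves must be adjacent to exactly one of the two vertices, and for the pair $u,h$ the fact that $d(h,c_i)=1$ for every cycle vertex $c_i$ reduces the question to whether $\tau$ is a neighbor of $u$. Your version merely spells out the case analysis and the boundary cases ($\tau=u$, $\tau=v$, $\tau=h$) more explicitly than the paper does.
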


\begin{proof}
As all distances in the graph are in $\{0,1,2\}$, a vertex $x \neq u,v$ that
separates $u$ and $v$ has distance $1$ to one of these vertices and distance
$2$ to the other. It is therefore a neighbor of one of them but not of the
other, proving the first part. As the distance of $h$ to any cycle vertex is
$1$, the only two cycle vertices that cannot separate $h$ and $u$ are the two
neighbors of $u$.
\end{proof}

We say that $c_{i}$ and $c_{j}$ are distant if both paths on the cycle between
$c_{i}$ and $c_{j}$ are of length at least $3$, that is, each such path has at
least two cycle vertices between them. In this case $j=i+\ell$ where
$3\leq\ell\leq n-4$. If $n\in\{5,6\}$, then no distant pairs of vertices exist.

\begin{corollary}
Let $1\leq i\leq n-1$, and consider a set $L\subseteq V$. If
$n\geq5$, we have
$SpS(c_{i},c_{i+1})(L)=L\cap\{c_{i-1},c_{i},c_{i+1},c_{i+2}\}$, if
$n\geq6$,
$SpS(c_{i},c_{i+2})(L)=L\cap\{c_{i-1},c_{i},c_{i+2},c_{i+3}\}$,
and if $n\geq7$, for any pair of distant vertices $c_{i},c_{j}$,
$SpS(c_{i},c_{j})(L)=L\cap\{c_{i-1},c_{i},c_{i+1},c_{j-1},c_{j},c_{j+1}\}$.
\end{corollary}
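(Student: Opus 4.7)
The plan is to apply Claim \ref{separatingclaim} to each of the three pairs and then upgrade the containment it yields to an equality by verifying that every listed vertex actually separates the pair. In a wheel graph, $h$ is a common neighbor of every two cycle vertices, so in all three cases $h$ will be removed by the $\setminus N_{u,v}(G)$ operation, leaving only cycle vertices in the candidate separating set.

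For adjacent cycle vertices $c_i, c_{i+1}$, I will compute $N_{c_i}(G) = \{c_{i-1}, c_{i+1}, h\}$ and $N_{c_{i+1}}(G) = \{c_i, c_{i+2}, h\}$, and note that $n \geq 5$ guarantees the cycle has at least four vertices so $c_{i-1} \neq c_{i+2}$, making $h$ the unique common neighbor. For $c_i, c_{i+2}$ with $n \geq 6$, the common cycle neighbor is $c_{i+1}$ (together with $h$), and $n \geq 6$ ensures $c_{i-1} \neq c_{i+3}$. For a distant pair $c_i, c_j$, the cycle-neighborhoods $\{c_{i-1}, c_{i+1}\}$ and $\{c_{j-1}, c_{j+1}\}$ are disjoint by the definition of distant, and $n \geq 7$ is just the threshold required for distant pairs to exist. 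Feeding each of these computations into the formula of Claim \ref{separatingclaim} produces precisely the sets asserted in the corollary, giving the $\subseteq$ direction in each case after intersecting with $L$.

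For the reverse direction, it suffices to observe that every listed cycle vertex has distance $1$ to exactly one of the two endpoints (the one it is adjacent to on the cycle) and distance $2$ to the other (since any non-adjacent pair in a wheel has distance $2$ via $h$), while $c_i$ and $c_j$ themselves trivially separate the pair from any other vertex. Hence every listed vertex belongs to the separating set, and intersecting both sides with $L$ gives the claimed equality. The only real bookkeeping concern is to ensure that the vertices listed are pairwise distinct and that none coincides with $h$, but this is exactly what the lower bounds $n \geq 5, 6, 7$ guarantee in the three cases respectively.
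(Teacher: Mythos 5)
Your proposal is correct and takes essentially the same route as the paper, whose entire proof is an appeal to Claim \ref{separatingclaim} for the three cases; you simply carry out the neighborhood computations explicitly and add the (easy, and glossed over in the paper) verification that every listed vertex does separate the pair, which is needed to upgrade the claim's containment to the stated equality.
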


\begin{proof}
We use Claim \ref{separatingclaim} for the three cases.
\end{proof}

Next, we provide an analysis of the case $n=5$. It is obvious that a minimum
cardinality or minimum weight landmark set can be found in constant time by
enumerating all subsets of vertices for constant values of $n$. In this
section we will find all minimal landmark sets (with respect to set
inclusion), which allows an algorithm to enumerate a smaller set of candidate
subsets. In the next section this is done for $6\leq n\leq8$ as well.

\begin{theorem}
Let $n=5$.

\begin{itemize}
\item For AP, $md_{2}^{AP}(G)=4$, $wmd_{2}^{AP}(G)=w(C)$, and
$md_{3}^{AP}(G)=\infty$.

\item For NL, $md_{2}^{NL}(G)=3$, and
\[
wmd_{2}^{NL}(G)=\min\{w(C),w(V)-\max_{1 \leq i \leq4} (w(c_{i})+w(c_{i+1}))\}
\ .
\]
Additionally, $md_{k}^{NL}(G)=4$ for $k\geq3$, and
\[
wmd_{k}^{NL}(G)=\min\{w(C), W(V)-\max_{1 \leq i \leq4} w(c_{i}) \} \ .
\]

\end{itemize}
\end{theorem}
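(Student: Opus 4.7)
The plan is to dispose of AP first and then handle NL by a case analysis on $|L|$. For AP, I would first apply Claim~\ref{separatingclaim} (with $n=5$) to the two diagonal pairs $(c_{1},c_{3})$ and $(c_{2},c_{4})$: the hub $h$ is adjacent to every cycle vertex, and the two non-endpoints of each diagonal pair are adjacent to both endpoints, so the only candidate separators are the endpoints of the pair themselves. This gives at most two separators for a diagonal pair, forcing $md_{3}^{AP}(G)=\infty$ and, for $k=2$, forcing $L\supseteq C$. To finish the AP part I would check directly that $L=C$ already lies in $LS_{2}^{AP}(G)$ by a short verification on the remaining pair types (consecutive cycle pairs get four separators in $C$, diagonal pairs exactly two, and $(h,c_{i})$ is separated by $c_{i}$ and $c_{i+2}$). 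Since $h$ contributes nothing beyond $C$ and weights are non-negative, $wmd_{2}^{AP}(G)=w(C)$.

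For NL I would observe that candidate landmark sets have size in $\{2,3,4\}$, with size $4$ being a trivial solution (a single hole, no pair to separate). Sizes $2$ and $3$ have to be handled by enumeration. Using the dihedral symmetry of the wheel, there are only three equivalence classes of $3$-vertex subsets: $\{h,c_{i},c_{i+1}\}$, $\{h,c_{i},c_{i+2}\}$, and $\{c_{i},c_{i+1},c_{i+2}\}$. A direct distance computation shows that only the first class produces two separators for its pair of holes $\{c_{i+2},c_{i+3}\}$; the second gives zero and the third only one. Size $2$ is ruled out similarly (at least one hole pair receives fewer than two separators in each of the three symmetry classes). This proves $md_{2}^{NL}(G)=3$ and identifies the four rotation-images of $\{h,c_{i},c_{i+1}\}$ as the complete list of $3$-vertex landmark sets. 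Each such set has weight $w(V)-(w(c_{i+2})+w(c_{i+3}))$, so the best $3$-vertex weight is $w(V)-\max_{i}(w(c_{i})+w(c_{i+1}))$; among the trivial $4$-vertex solutions $V\setminus\{v\}$, every choice $v=c_{j}$ is weakly dominated by the $3$-vertex set that omits a consecutive pair containing $c_{j}$ (using $w(c_{j\pm 1})\geq 0$), leaving only $L=C$, of weight $w(C)$, as a serious $4$-vertex candidate. Taking the minimum yields the stated formula.

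For $k\geq 3$ the same $3$-vertex enumeration shows that the maximum number of separators achievable for the hole pair is $2$ (attained by $\{h,c_{i},c_{i+1}\}$), which is insufficient, so $|L|\geq 4$ is forced. Since every $4$-vertex subset is a trivial landmark set, $md_{k}^{NL}(G)=4$. The only weighted candidates are the five sets $V\setminus\{v\}$, whose minimum weight is $w(V)-\max_{v\in V}w(v)=\min\{w(V)-w(h),\,w(V)-\max_{i}w(c_{i})\}=\min\{w(C),\,w(V)-\max_{i}w(c_{i})\}$, as claimed.

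The only step that requires genuine care is the $3$-vertex enumeration in the NL part; the remainder follows immediately from Claim~\ref{separatingclaim} and the general remarks on trivial solutions. In particular, the short argument that the $4$-vertex candidates $V\setminus\{c_{j}\}$ can be dropped from the weighted $k=2$ formula is worth making explicit, since it is not obvious at a glance.
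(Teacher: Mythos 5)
Your proof is correct and reaches the same characterizations as the paper. The AP argument (the diagonal pairs $c_i,c_{i+2}$ admit only themselves as separators, forcing $C\subseteq L$ for $k=2$ and ruling out $k=3$, followed by a direct check that $C$ itself is a landmark set) and the $k\geq 3$ NL argument are essentially identical to the paper's. The one place where you genuinely diverge is the NL, $k=2$ characterization of three-vertex landmark sets: you enumerate the three dihedral-symmetry classes of $3$-subsets (and the classes of $2$-subsets) and check each, whereas the paper deduces the structure directly -- at least one vertex of each diagonal pair must be a landmark, and if both $h$ and some $c_j$ are holes then only $c_{j+2}$ can separate them, so either $h\in L$ or $L=C$, after which one verifies that $\{h,c_i,c_{i+1}\}$ works. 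For $n=5$ the two routes cost about the same; the paper's deduction is closer in spirit to the positional conditions it formulates for $n\geq 6$, while your enumeration is self-contained and arguably easier to audit. You also make explicit the step that the trivial solutions $V\setminus\{c_j\}$ can be discarded from the weighted $k=2$ formula; the paper gets this for free from its standing convention of optimizing over minimal landmark sets (each $V\setminus\{c_j\}$ strictly contains some $\{h,c_{j+1},c_{j+2}\}$, hence is not minimal), but your inequality $w(V)-w(c_j)-w(c_{j-1})\leq w(V)-w(c_j)$ is a perfectly good substitute and worth stating.
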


\begin{proof}
First, consider AP and $k=2$. A pair of cycle vertices $c_{i}$ and $c_{i+2}$
cannot be separated by any vertex except for $c_{i}$ and $c_{i+2}$, as they
have distances of $1$ to all other vertices. Thus, all the cycle vertices
(which can be split into two such pairs) must belong to any landmark set. We
claim that $C$ is indeed a landmark set (and it is a unique minimal landmark
set). By Lemma \ref{k=2 lemma}, it is sufficient to show two separations
between $c_{i}$ (for $1\leq i\leq4$) and $h$. One separation results from
$c_{i}$ being a member of the landmark set. Additionally, $d(h,c_{i+2})=1$
while $d(c_{i},c_{i+2})=2$, so $c_{i+2}$ separates $c_{i}$ and $h$. Note that
$V$ is a landmark set as well, but it is not minimal.

Next, consider NL and $k=2$, and let $L$ be a landmark set. In this case, for
$i=1,2$, at least one of $c_{i}$ and $c_{i+2}$ must belong to $L$ (otherwise
they cannot be separated, as explained above). If there is at least one index
$j$ such that $c_{j}\notin L$ and in addition, $h\notin L$, then as the
distances of $h$ and $c_{j}$ are $1$ both to $c_{j-1}$ and to $c_{j+1}$, only
$c_{j+2}$ can separate them (if it is in $L$), and $L$ cannot be a landmark
set. Thus, if $h\notin L$, then $L=C$, where $C$ is a trivial landmark set. On
the other hand, a set of the form $\{c_{i},c_{i+1},h\}$ is a landmark set, as
$c_{i-1}$ and $c_{i+2}$ are separated by both $c_{i}$ and $c_{i+1}$.
Therefore, a minimal landmark set (with respect to set inclusion) consists of
either of $C$, or $h$ and two adjacent cycle vertices.

Finally, we deal with $k \geq3$. For AP, due to the above, there
cannot be more than two separations between $c_{i}$ and $c_{i+2}$,
so $md_{k}^{AP}=\infty$. For NL and $k \geq4$, a non-trivial
landmark set must contain at least four vertices (otherwise there
cannot be four separations), thus only trivial solutions exist.
For NL and $k=3$, as already for $k=2$ any landmark set with at
most three vertices must contain $h$ and two cycle vertices, for
such a set, the other two cycle vertices will only have two
separations. Thus, for $k=3$ only trivial solutions exist as well.
\end{proof}

\medskip
In what follows, we deal with the case $n \geq6$. The crucial part
of designing an algorithm for each option (where an option
consists of the model and the value of $k$) is to find a suitable
condition on the positions of landmarks on the cycle. In the case
$k=1$, given a landmark set $L\subseteq V$, let a gap be a maximum
cardinality set of consecutive cycle vertices in $V \setminus L$.
Two gaps are consecutive if they are separated by exactly one
landmark (the set of vertices of the two gaps and the single
landmark are consecutive vertices). The condition (defined for $n
\geq 8$ \cite{ELW}) is as follows. There is no gap of four
vertices or more, there is no pair of consecutive gaps such that
both contain more than one vertex, and finally, there is at most
one gap of three vertices. Here, the exact conditions for the two
models are different from the condition for $k=1$, and in fact we
define four conditions for four cases. However, landmark sets for
sufficiently large $n$ ($n \geq 9$) can be computed using one
general approach for all cases (via dynamic programming).

\subsection{All-pairs model (AP)}
\subsubsection{The case $\boldsymbol{k=2}$}

Consider the following condition for $n\geq6$.

\begin{condition}
\label{APk2} For any pair of consecutive vertices of $C$ that are
not in $L$, the three consecutive vertices of $C$ preceding them
are in $L$, and the three consecutive vertices of $C$ following
them are in $L$. That is, if $c_{i},c_{i+1}\notin L$, then
$c_{i-3},c_{i-2},c_{i-1},c_{i+2},c_{i+3},c_{i+4}\in L$.
\end{condition}

Note that in particular, the condition implies that there does not exist a set
of three or more consecutive vertices of $C$ such that none of them is in $L$.

\begin{lemma}
\label{l1k2ap}For $n\geq6$, if $L\in LS_{2}^{AP}(G)$, then Condition
\ref{APk2} holds.
\end{lemma}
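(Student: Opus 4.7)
The plan is to derive the six required memberships by applying the earlier corollary (which identifies the complete set of possible separators for close pairs on the cycle) to three carefully chosen pairs of cycle vertices: the pair $(c_i,c_{i+1})$ itself, the pair $(c_{i-2},c_i)$ on its left, and the pair $(c_{i+1},c_{i+3})$ on its right. Each of these pairs is at cycle-distance $1$ or $2$, so the corollary gives an explicit four-vertex candidate set for their separators, and the assumption $L \in LS_2^{AP}(G)$ forces at least two actual separators from $L$ for each pair.

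First I would handle $(c_i,c_{i+1})$: the corollary restricts separators to $L\cap\{c_{i-1},c_i,c_{i+1},c_{i+2}\}$, and since $c_i,c_{i+1}\notin L$, the only available separators are $c_{i-1}$ and $c_{i+2}$; requiring two separations forces both into $L$. Next I would apply the corollary to $(c_{i-2},c_i)$, which is an allowed cycle-distance-$2$ pair for $n\geq 6$; the candidate set is $L\cap\{c_{i-3},c_{i-2},c_i,c_{i+1}\}$, and again $c_i,c_{i+1}\notin L$ reduces it to $L\cap\{c_{i-3},c_{i-2}\}$, so both of $c_{i-3}$ and $c_{i-2}$ must be in $L$. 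A symmetric argument on $(c_{i+1},c_{i+3})$ forces $c_{i+3},c_{i+4}\in L$. Collecting these containments gives Condition \ref{APk2}.

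The only subtlety I expect to mention is the small-$n$ case: when $n=6$ the cycle has just five vertices and several of the indices $c_{i-3},c_{i+2}$ etc.\ coincide modulo $n-1$, so the six required memberships collapse to fewer distinct conditions; but the above derivation still yields each claimed containment (with possible repetition), so the condition holds trivially. There is no real obstacle here — the argument is a direct, essentially local application of the separator corollary — the main care needed is to verify that the three pairs chosen are in the range where the corollary applies (cycle-distance $1$ requires $n\geq 5$, cycle-distance $2$ requires $n\geq 6$), which matches the hypothesis $n\geq 6$.
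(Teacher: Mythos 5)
Your proposal is correct and follows essentially the same route as the paper: the paper also deduces the required memberships by applying the explicit separator characterization to short-distance pairs involving the two non-landmarks (it uses the pairs $(c_{i+1},c_{i+2})$ and $(c_{i+1},c_{i+3})$ and invokes symmetry for the left side, whereas you pick $(c_i,c_{i+1})$, $(c_{i-2},c_i)$, and $(c_{i+1},c_{i+3})$, which is an immaterial variation). Your remark about coinciding indices for $n=6$ matches a parenthetical note in the paper's own proof.
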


\begin{proof}
We show that the condition is necessary for a landmark set $L$. Assume that
$c_{i},c_{i+1}\notin L$. We show that $c_{i+2},c_{i+3},c_{i+4}\in L$ (the
proof for the vertices $c_{i-3},c_{i-2},c_{i-1}$ is similar, and note that the
two sets are not necessarily disjoint, and if $n=6$, this is the same set). To
obtain two separations between $c_{i+1}$ and $c_{i+2}$, both $c_{i+2}$ and
$c_{i+3}$ must be landmarks, and to obtain two separations between $c_{i+1}$
and $c_{i+3}$, $c_{i+3}$ and $c_{i+4}$ must be landmarks (since $c_{i}$ and
$c_{i+1}$ are not landmarks).
\end{proof}

We analyze the cardinality of sets that satisfy Condition \ref{APk2}.

\begin{proposition}
\label{abck2ap} Any set satisfying Condition \ref{APk2} has at least
$\lfloor\frac{n}{2}\rfloor$ cycle vertices. If $L\subseteq V$ is a set where
$|L\cap C|\geq4$, then there are at least two separations in $L\cap C$ for any
pair $h,c_{i}$, and in this case, if $L$ is a landmark set, then
$L\setminus\{h\}$ is a landmark set.
\end{proposition}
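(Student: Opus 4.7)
The plan is to prove the three assertions of the proposition in order, leveraging Condition \ref{APk2} for the cardinality bound and Claim \ref{separatingclaim} for the statements about separating $h$ from cycle vertices.

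For the cardinality bound, I would first note that Condition \ref{APk2} forbids three consecutive holes on $C$: applying the condition to a pair $(c_i, c_{i+1})$ that is entirely outside $L$ forces $c_{i+2} \in L$. Hence every gap (maximal run of consecutive holes on $C$) has length $1$ or $2$. Assuming at least one hole exists (otherwise $|L \cap C| = n-1$ trivially), I would partition the cycle into blocks $B_i = G_i \cup R_i$, where $G_1, \ldots, G_g$ are the gaps in cyclic order and $R_i$ is the run of consecutive landmarks between $G_i$ and $G_{i+1}$. The crux is showing $|R_i| \geq |G_i|$ in each block: a size-$1$ gap is followed by at least one landmark by maximality, and a size-$2$ gap $\{c_j, c_{j+1}\}$ is followed by at least the three landmarks $c_{j+2}, c_{j+3}, c_{j+4}$ guaranteed by Condition \ref{APk2}. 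Summing across blocks gives $|L \cap C| \geq (n-1) - |L \cap C|$, i.e., $|L \cap C| \geq (n-1)/2$, and integrality rounds this up to $\lceil (n-1)/2 \rceil = \lfloor n/2 \rfloor$.

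For the second assertion, I would invoke Claim \ref{separatingclaim}: the cycle vertices that fail to separate $h$ from $c_i$ are precisely those in $N_{c_i}(G) \cap C = \{c_{i-1}, c_{i+1}\}$. Hence at most two vertices of $L \cap C$ can fail to separate $h$ and $c_i$, leaving at least $|L \cap C| - 2 \geq 2$ separators inside $L \cap C$. For the last assertion, given that $L$ is a landmark set, I would verify two separations for each pair after deleting $h$. Pairs $\{c_i, c_j\} \subseteq C$ are unaffected because $d(h, c_i) = d(h, c_j) = 1$, so $h$ never separates them and the separator count in $L \setminus \{h\}$ equals that in $L$. Pairs $\{h, c_i\}$ retain at least two separators because, by the previous step, those separators come from $L \cap C$ and therefore lie in $L \setminus \{h\}$.

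I do not anticipate a real obstacle. The one step that needs care is setting up the cyclic block decomposition and confirming that the block inequality $|R_i| \geq |G_i|$ holds uniformly for both gap sizes; everything else is a short application of Claim \ref{separatingclaim} together with the fact that $h$ is equidistant from all cycle vertices.
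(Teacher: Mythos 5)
Your proof is correct and takes essentially the same approach as the paper's: your cyclic gap/run decomposition with the block inequality $|R_i|\geq |G_i|$ is exactly the paper's partition of the landmark indicator string into blocks of ones followed by zeroes (at most two zeroes per block, and at least three ones when there are two zeroes). The second and third assertions are also argued just as in the paper, via the facts that only $c_{i-1},c_{i+1}$ among cycle vertices fail to separate $h$ and $c_i$, and that $h$ separates no pair of cycle vertices.
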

\begin{proof}
We start with the first part. Consider a set $X$ that satisfies the condition.
If $C\subseteq X$, then we are done as $n-1\geq\frac{n}{2}$ for $n\geq5$. If
$X\cap C=\emptyset$, then since $|C|\geq5$, there are at least three
consecutive cycle vertices not in $X$, and the condition is not satisfied.
Otherwise, let $c_{i+1}$ be a cycle vertex such that $c_{i+1}\in X$ while
$c_{i}\notin X$. Create a binary string of length $n-1$, where the $j$th bit
of the string is $1$ if $c_{i+j}\in X$ and $0$ otherwise. The string starts
with $1$ and ends with $0$. Partition the string into maximum length
substrings that each substring starts with $1$, ends with $0$, and contains
ones followed by zeroes. A substring never has more than two zeroes, and if it
has two zeroes, then it has at least three ones preceding them. Thus, the
number of ones is at least the number of zeroes, proving that there are at
least $\lceil\frac{n-1}{2}\rceil=\lfloor\frac{n}{2}\rfloor$ ones.

To prove the second part, we note that $h$ can only separate a pair of
vertices of the form $h,c_{i}$. As $L\setminus\{h,c_{i-1},c_{i+1}\}$ contains
at least two cycle vertices, there are at least two separations between $h$
and $c_{i}$ even without $h$.
\end{proof}

\begin{lemma}
\label{l2k2ap} Let $n\geq6$. For a set $L\subseteq C$, if Condition \ref{APk2}
holds for $L$, then for any pair $c_{i},c_{j}$ there are two separations.
\end{lemma}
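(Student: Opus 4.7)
The plan is to invoke the Corollary that characterizes $SpS_{c_i,c_j}(L)$ for the three types of cycle pairs (adjacent, cycle-distance two, distant) and, in each case, use Condition \ref{APk2} to force at least two of the candidate separators to lie in $L$. Since $L\subseteq C$, the vertex $h$ is irrelevant for separating a pair of cycle vertices anyway (it is at distance $1$ from every $c_i$), so the condition alone will need to carry all the weight.

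The key preliminary observation, already noted right after the statement of Condition \ref{APk2}, is that no three consecutive cycle vertices can all lie outside $L$: if $c_i,c_{i+1},c_{i+2}\notin L$, then applying the condition to the pair $c_i,c_{i+1}\notin L$ forces $c_{i+2}\in L$, a contradiction. I will invoke this repeatedly, in the form: in every three consecutive cycle vertices, at least one is a landmark.

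I would then split into three cases according to the Corollary. For an adjacent pair $c_i,c_{i+1}$, the candidate set is the four consecutive vertices $\{c_{i-1},c_i,c_{i+1},c_{i+2}\}$. If fewer than two of them were in $L$, at least three would be missing, and a quick enumeration of which three are missing leads to a contradiction in each case: the three runs of three consecutive vertices (ruled out by the observation) and the two "gap plus isolated" patterns, each of which forces, via Condition \ref{APk2}, a vertex we assumed missing to be in $L$. For the cycle-distance-two pair $c_i,c_{i+2}$, the candidates are $\{c_{i-1},c_i,c_{i+2},c_{i+3}\}$; if only one lies in $L$, then some adjacent pair on the cycle among $\{c_{i-1},c_i\}$ or $\{c_{i+2},c_{i+3}\}$ is fully missing, and applying Condition \ref{APk2} to that pair immediately forces two more vertices of the candidate set into $L$, contradicting the assumption.

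For a distant pair $c_i,c_j$, the candidate set splits into two disjoint triples $\{c_{i-1},c_i,c_{i+1}\}$ and $\{c_{j-1},c_j,c_{j+1}\}$ (disjointness follows from $3\le \ell\le n-4$ where $j=i+\ell$, handling both arcs of the cycle). By the preliminary observation, each of these triples contains at least one landmark, yielding the required two separations. The main thing to be careful about is the small-$n$ boundary and wrap-around indexing; since distant pairs exist only for $n\ge 7$, and the adjacent and distance-two cases already use only four candidates well-defined for $n\ge 6$, the edge cases should cause no real trouble. The only genuine obstacle is the bookkeeping in the adjacent case, where one must verify that the three-missing-out-of-four patterns are each handled correctly by the condition; I expect that step to be the longest but still routine.
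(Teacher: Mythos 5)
Your proposal is correct and follows essentially the same route as the paper: split into adjacent, distance-two, and distant pairs, use the fact that Condition \ref{APk2} forbids three consecutive non-landmarks for the distant case (where the two triples are disjoint), and for the close pairs apply the condition to a fully-missing consecutive pair among the four candidate separators to force the remaining candidates into $L$. The paper merely handles $\ell=1$ and $\ell=2$ uniformly via the candidate list $c_{i-1},c_i,c_{i+\ell},c_{i+\ell+1}$ rather than your separate enumeration for the adjacent case, but the mechanism is identical.
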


\begin{proof}
Consider a set $L$ satisfying the condition. We will show that any pair of
cycle vertices has at least two separations. Consider a pair $c_{i},c_{i+\ell
}$, where $\ell=1$ or $\ell=2$, and consider the list of vertices
$c_{i-1},c_{i},c_{i+\ell},c_{i+\ell+1}$. We show that $L$ has at least two
vertices out of this list. If $L$ contains at least one vertex out of the
first two vertices of the list, and at least one vertex of the last two
vertices of the list, then we are done. If the first two vertices of the list
are not in $L$, then by the condition, the vertices $c_{i+1}$, $c_{i+2}$,
$c_{i+3}$ must be in $L$, so the other two vertices of the list that can
separate $c_{i}$ and $c_{i}+\ell$ are in $L$, and similarly, if $c_{i+\ell
},c_{i+\ell+1}$ are not in $L$, then the other two vertices of the list are in
$L$. This shows that in these cases, at least two out of the four vertices are
in $L$. Consider a distant pair $c_{i},c_{j}$. By the condition,
$c_{i-1},c_{i},c_{i+1}$ must have at least one vertex in $L$, and
$c_{j-1},c_{j},c_{j+1}$ must have at least one vertex in $L$, giving at least
two separations between $c_{i}$ and $c_{j}$.
\end{proof}

Based on the above, we get the following.

\begin{corollary}
\label{n8ap} For $n\geq8$, $L\subseteq C$ is a landmark set if and
only if it satisfies Condition \ref{APk2}. Moreover,
$md_{2}^{AP}=\lfloor\frac{n}{2}\rfloor$.
\end{corollary}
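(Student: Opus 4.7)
The plan is to assemble the corollary from Lemmas \ref{l1k2ap}, \ref{l2k2ap}, and Proposition \ref{abck2ap}; the hypothesis $n\geq 8$ is used only to guarantee $\lfloor n/2\rfloor\geq 4$, which is the threshold needed to invoke the second part of Proposition \ref{abck2ap}.

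For the characterization, the ``only if'' direction is immediate from Lemma \ref{l1k2ap} restricted to $L\subseteq C$. For ``if'', suppose $L\subseteq C$ satisfies Condition \ref{APk2}. By the first part of Proposition \ref{abck2ap} we have $|L|=|L\cap C|\geq \lfloor n/2\rfloor\geq 4$, so the second part of the same proposition supplies at least two separations of every pair $\{h,c_i\}$ using vertices of $L\cap C$ alone. Pairs of cycle vertices are handled by Lemma \ref{l2k2ap}. Since $V=C\cup\{h\}$, every pair of distinct vertices enjoys at least two separations, so $L\in LS_2^{AP}(G)$.

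For $md_2^{AP}(G)=\lfloor n/2\rfloor$, the lower bound follows by taking any landmark set $L\subseteq V$: Lemma \ref{l1k2ap} yields that $L$ satisfies Condition \ref{APk2}, and then Proposition \ref{abck2ap} gives $|L|\geq |L\cap C|\geq \lfloor n/2\rfloor$. For the matching upper bound I exhibit a concrete landmark set of this size. Take $L^{\ast}=\{c_{2i-1}:1\leq i\leq \lfloor n/2\rfloor\}\subseteq C$, i.e.\ landmarks at the odd-indexed cycle positions. A direct check (handling the two parities of $n-1$ separately) shows that no two non-landmarks are adjacent on the cycle, so Condition \ref{APk2} is vacuously satisfied; by the ``if'' direction just established, $L^{\ast}\in LS_2^{AP}(G)$ with $|L^{\ast}|=\lfloor n/2\rfloor$.

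The only delicate point is the treatment of pairs $\{h,c_i\}$: one must not forget that $L\subseteq C$ excludes $h$, so those separations have to be produced entirely from cycle landmarks. This is exactly what the second part of Proposition \ref{abck2ap} provides under the bound $|L\cap C|\geq 4$, and in turn this explains why the clean characterization begins at $n\geq 8$ rather than earlier (smaller $n$ fail to guarantee $\lfloor n/2\rfloor\geq 4$ and must be handled by the ad hoc $n=5$-style analysis).
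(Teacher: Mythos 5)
Your proof is correct and follows essentially the same route as the paper: both directions of the characterization are obtained from Lemma \ref{l1k2ap}, Proposition \ref{abck2ap}, and Lemma \ref{l2k2ap}, the lower bound on $md_2^{AP}$ comes from combining the necessity of Condition \ref{APk2} with the cardinality bound, and the matching upper bound uses the same witness (the odd-indexed cycle vertices, for which the condition holds vacuously). Your explicit remark on why $h$-pairs must be separated by cycle landmarks alone, and why this forces $n\geq 8$, is exactly the role Proposition \ref{abck2ap} plays in the paper's argument.
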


\begin{proof}
A set that satisfies the condition has at least
$\lfloor\frac{n}{2}\rfloor \geq4$ cycle vertices. Such a set is a
landmark set; by Proposition \ref{abck2ap}, there are two
separations for any pair of the form $h,c_{i}$, and by Lemma
\ref{l2k2ap}, any pair of cycle vertices also has at least two
separations. By Lemma \ref{l1k2ap}, any landmark set satisfies the
condition. This proves the first part, and we find
$md_{2}^{AP}\geq\lfloor\frac{n}{2}\rfloor$. The set
$L=\{c_{i}|1\leq i\leq n-1,i\mbox{ \ is odd}\}$ satisfies the
condition (and thus it is a landmark set) and it has
$\lceil\frac{n-1}{2}\rceil=\lfloor\frac{n}{2}\rfloor$ vertices,
proving that for $n\geq8$,
$md_{2}^{AP}\leq\lfloor\frac{n}{2}\rfloor$.
\end{proof}

Next, we discuss minimal landmark sets for $n=6,7,8$. By corollary \ref{n8ap},
for $n=8$, such sets do not contain $h$, and the sets of cycle vertices that
satisfy the condition have at least four (cycle) vertices. Any subset of at
least five cycle vertices is a landmark set, but only those with two
consecutive non-landmarks are minimal, with respect to set inclusion (sets of
the form $c_{j},c_{j+1},\ldots,c_{j+4}$ for $1\leq j\leq7$), since if there
are no consecutive non-landmarks, there must be three consecutive landmarks
$c_{i-1},c_{i},c_{i+1}$, and $c_{i}$ can be omitted. Any subset of four cycle
vertices for which there is no pair of consecutive non-landmarks satisfies the
condition, and it has the form $c_{i},c_{i+1},c_{i+3},c_{i+5}$ for some $1\leq
i\leq7$.

For $n=6,7$, sets a set satisfying Condition \ref{APk2} has at least three
cycle vertices, and we consider such sets. The next Corollary specifies all
minimal landmark sets for $n=6,7$, where unlike the case $n\geq8$, there exist
such sets that contain $h$.

\begin{corollary}
For $n=6,7$, given $A\subseteq C$, if $|A|\geq4$, then $A$ is a landmark set.
Assume that $|X|=3$, and $X$ satisfies Condition \ref{APk2}. The set
$Y=X\cup\{h\}$ is a landmark set, while $X$ is not a landmark set.
\end{corollary}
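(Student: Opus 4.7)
The plan is to reduce each of the three assertions to results already established for Condition \ref{APk2}: Proposition \ref{abck2ap} handles pairs of the form $h,c_i$ when the set contains enough cycle vertices, and Lemma \ref{l2k2ap} handles pairs of cycle vertices once Condition \ref{APk2} is known to hold. The hard step will be showing that a three-element $X\subseteq C$ satisfying Condition \ref{APk2} fails to be a landmark set. For the first assertion I would first check that any $A\subseteq C$ with $|A|\geq 4$ automatically satisfies Condition \ref{APk2} when $n\in\{6,7\}$. For $n=6$ (so $|C|=5$) at most one cycle vertex is missing and the hypothesis of the condition is vacuous. For $n=7$ (so $|C|=6$) at most two are missing; if they are non-consecutive the hypothesis is again vacuous, and if they are consecutive, say $c_i,c_{i+1}$, then after reducing indices modulo $n-1=6$ the six required entries $c_{i-3},c_{i-2},c_{i-1},c_{i+2},c_{i+3},c_{i+4}$ collapse to exactly the four remaining cycle vertices, all of which lie in $A$. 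Once Condition \ref{APk2} holds, Proposition \ref{abck2ap} applied with $|A\cap C|\geq 4$ yields two separations for every pair $h,c_i$, and Lemma \ref{l2k2ap} yields two separations for every pair of cycle vertices.

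For the second assertion, pairs of cycle vertices receive the same separations from $Y$ as from $X$, since $h$ is at distance $1$ from every cycle vertex and so separates no such pair, and Lemma \ref{l2k2ap} already supplies two separations from $X$ for each such pair. For a pair $h,c_u$, one separation comes from $h$ itself since $d(h,h)=0\neq 1=d(h,c_u)$, and a second from any $z\in X\setminus\{c_{u-1},c_{u+1}\}$, which is non-empty because $|X|=3>2=|\{c_{u-1},c_{u+1}\}|$.

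For the third and central assertion I would first show that $X$ contains two cycle vertices at cycle-distance exactly $2$. For $n=6$ this holds because $C_5$ is triangle-free, so any three of its vertices include some pair at cycle-distance $2$. For $n=7$, Condition \ref{APk2} rules out consecutive non-landmarks: if $c_i,c_{i+1}\notin X$, the condition forces four distinct cycle vertices into $X$ after reducing modulo $6$, contradicting $|X|=3$. Hence the three non-landmarks are pairwise non-adjacent on the $6$-cycle, forcing $X\in\{\{c_1,c_3,c_5\},\{c_2,c_4,c_6\}\}$, each of which consists entirely of cycle-distance-$2$ pairs. Now fix a pair $c_i,c_{i+2}\in X$ and consider $h,c_{i+1}$. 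For any $z\in X\subseteq C$, $d(z,h)=1$, while $d(z,c_{i+1})=1$ precisely when $z\in\{c_i,c_{i+2}\}$; hence the separators of this pair in $X$ form the set $X\setminus\{c_i,c_{i+2}\}$. If $c_{i+1}\notin X$ this set has size $1$, and if $c_{i+1}\in X$ then $X=\{c_i,c_{i+1},c_{i+2}\}$ and again the only separator is $c_{i+1}$; in either case only one separation is provided, so $X\notin LS_2^{AP}(G)$.
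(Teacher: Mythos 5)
Your proof is correct and follows essentially the same route as the paper: the first two assertions are handled identically via Proposition \ref{abck2ap}, Lemma \ref{l2k2ap}, and the non-emptiness of $X\setminus\{c_{u-1},c_{u+1}\}$. For the third assertion the paper enumerates the possible shapes of $X$ explicitly and exhibits the same failing pair ($h$ together with the midpoint of a distance-$2$ pair of landmarks); your observation that every admissible three-element $X$ must contain a cycle-distance-$2$ pair merely packages that case analysis more uniformly, but the underlying argument is the same.
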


\begin{proof}
If $n=6,7$, any set with four cycle vertices satisfies the condition; for
$n=6$, there is only one cycle vertex not in $A$, and for $n=7$, there are at
most two cycle vertices not in $A$, and the condition is satisfied no matter
whether they are adjacent or not. By Proposition \ref{abck2ap}, there are two
separations for any pair of the form $h,c_{i}$. By Lemma \ref{l2k2ap}, any
pair of cycle vertices also has at least two separations.

The set $X$ has two separations for any pair of cycle vertices, by Lemma
\ref{l2k2ap}, and in $Y$, a pair $h,c_{j}$ is separated by $h$ and
$X\setminus\{c_{j-1},c_{j+1}\}$, which must contain at least one vertex, so
$Y$ is a landmark set. For $n=7$, the only form of a set $X$ that satisfies
the condition is $c_{i},c_{i+2},c_{i+4}$, since in the case of two consecutive
vertices not in $X$, all remaining cycle vertices must be in $X$. For this
set, $h$ and $c_{i+1}$ are only separated by $c_{i+4}$. For $n=6$, $X$ can be
of one of two forms. If $X$ consists of $c_{i},c_{i+1},c_{i+3}$, then
$c_{i+2}$ and $h$ are only separated by $c_{i}$. If $X$ consists of
$c_{i},c_{i+1},c_{i+2}$, then $c_{i+1},h$ are only separated by $c_{i+1}$.
Thus, $X$ is never a landmark set.
\end{proof}

Let $n\geq9$. For a landmark set (that does not contain $h$), we define a
cyclic binary string of length $n-1$ where similar to the proof of Proposition
\ref{abck2ap}, a zero denotes a non-landmark and a $1$ denote a landmark. A
string of five bits is called \textit{invalid} (for AP and $k=2$) if it
contains at least four zeroes, or if it contains three zeroes and it is not
equal to $01010$. Other strings of five bits are called good.

\begin{lemma}
\label{apk2word} Let $n\geq9$. A cyclic binary string represents a landmark
set if and only if it does not contain any invalid five bit string as a substring.
\end{lemma}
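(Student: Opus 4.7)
The plan is to prove both directions by translating Condition \ref{APk2} into a local constraint on $5$-bit windows, relying on Corollary \ref{n8ap} to equate ``landmark set'' with ``satisfies Condition \ref{APk2}''. The role of $n\geq 9$ is to guarantee that the eight cyclic positions $i-3,\ldots,i+4$ are all distinct, so no wrap-around collision interferes with the local analysis.

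For the forward direction, assume $L$ corresponds to a string satisfying Condition \ref{APk2}, and let $W=s_{j}s_{j+1}s_{j+2}s_{j+3}s_{j+4}$ be an arbitrary $5$-bit cyclic substring. First, I would argue that $W$ cannot contain $000$: any occurrence of $000$ would give two overlapping $00$ pairs, but Condition \ref{APk2} requires each $00$ to be flanked by three $1$s on each side, which is impossible. Then I would check the four possible positions of a single $00$ inside $W$ (at bit positions $1\text{--}2,2\text{--}3,3\text{--}2,4\text{--}5$); in each case the three neighbours of the $00$ inside $W$ are forced to be $1$ by Condition \ref{APk2}, yielding $W\in\{00111,10011,11001,11100\}$, all of which have exactly two zeroes and are therefore good. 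If $W$ contains no $00$ at all, its zeroes are isolated, so it has at most three zeroes, and three zeroes is attained only by $01010$, which is good. Hence no invalid substring appears.

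For the converse, assume the cyclic string contains no invalid $5$-bit substring, and suppose $s_{i}=s_{i+1}=0$; I need to derive Condition \ref{APk2}. Consider the two windows $W_1=s_{i-3}s_{i-2}s_{i-1}s_is_{i+1}$ and $W_2=s_is_{i+1}s_{i+2}s_{i+3}s_{i+4}$. Each contains $00$, so neither equals $01010$; consequently, being non-invalid forces each to have at most two zeroes. Since the two zeroes are already accounted for by $s_i,s_{i+1}$, the remaining three positions in each window must be $1$, giving $s_{i-3}=s_{i-2}=s_{i-1}=s_{i+2}=s_{i+3}=s_{i+4}=1$, which is exactly Condition \ref{APk2}. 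By Corollary \ref{n8ap}, $L$ is a landmark set.

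The main technical obstacle is the case analysis in the forward direction (verifying that the $00$-containing windows cannot have a third zero, including ruling out $000$ and $00\text{-}00$ patterns within five bits). The use of $n\geq 9$ is essential precisely in the converse: I must invoke two disjoint $5$-bit windows whose union is eight consecutive cyclic positions $s_{i-3},\ldots,s_{i+4}$, and only for $n-1\geq 8$ are these eight positions guaranteed to be distinct, so that the forced $1$s really do verify Condition \ref{APk2} and do not contradict the assumed $0$s at $s_i,s_{i+1}$.
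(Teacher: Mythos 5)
Your proof is correct and takes essentially the same route as the paper: both reduce the claim to Condition \ref{APk2} (via Lemma \ref{l1k2ap} and Corollary \ref{n8ap}) and then analyze the two five-bit windows flanking a consecutive pair of zeroes, you arguing the two implications in contrapositive form relative to the paper. (Only a typo to fix: the four positions of the $00$ block should read $1$--$2$, $2$--$3$, $3$--$4$, $4$--$5$.)
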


\begin{proof}
We start with showing that if the substring corresponding to a set
$X$ has an invalid substring, then $X$ does not satisfy Condition
\ref{APk2}, and therefore it is not a landmark set. We show that
an invalid substring always has a pair of consecutive zeroes. This
holds for $00000$, and it holds for any string with four zeroes
(as it only has one $1$). The only string of length five and three
zeroes that does not have two consecutive zeroes is $01010$, which
is a good substring. Since any invalid substring has at least one
additional zero except for the pair of consecutive zeroes, this
means that there are two cycle vertices $c_{i},c_{i+1}\notin X$
(corresponding to the pair of consecutive zeroes), such that one
of $c_{i-3},c_{i-2},c_{i-1},c_{i+2},c_{i+3},c_{i+4}$ has a zero in
the substring and thus it is not in $X$, violating Condition
\ref{APk2}.

Assume now that a set $Y\subseteq C$ does not satisfy Condition
\ref{APk2}. So, there is a pair $c_{i},c_{i+1}\notin Y$ such that
one of $c_{i-3},c_{i-2},c_{i-1},c_{i+2},c_{i+3},c_{i+4}$ is not in
$Y$. If one of $c_{i-3},c_{i-2},c_{i-1}$ is not in $Y$, then the
substring for $c_{i-3},c_{i-2},c_{i-1},c_{i},c_{i+1}$ has at least
three zeroes, and the last two bits are zeroes. If one of
$c_{i+2},c_{i+3},c_{i+4}$ is not in $Y$, then the substring for
$c_{i},c_{i+1},c_{i+2},c_{i+3},c_{i+4}$ starts with two zeroes,
and it has at least one additional zero. In both cases we find an
invalid substring.
\end{proof}

In the next section we state a dynamic programming formulation for computing
$wmd_{2}^{AP}$ for a wheel $G$ with $n\geq9$ that is based on the cyclic
binary string, and on the fact that including $h$ in landmark sets is not
necessary. The remaining cases for AP ($k\geq3$) are simpler and we discuss
them now.

\subsubsection{The case $\boldsymbol{k=3}$}
Let $k=3$ and consider the following condition for $n\geq6$.

\begin{condition}
\label{k3ap} For any vertex of $C$ that is not in $L$, the four
consecutive vertices of $C$ preceding it, and the four consecutive
vertices of $C$ following it are in $L$. That is, if $c_{i}\notin
L$, then
$$c_{i-4},c_{i-3},c_{i-2},c_{i-1},c_{i+1},c_{i+2},c_{i+3},c_{i+4}\in
L.$$
\end{condition}

\begin{lemma}
\label{lemk3ap} \label{l1k3ap}For $n\geq6$, if $L\in LS_{3}^{AP}(G)$, then
Condition \ref{k3ap} holds.
\end{lemma}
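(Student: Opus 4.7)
The plan is to adapt the proof of Lemma \ref{l1k2ap} to the stronger requirement of three separations per pair. Let $L\in LS_{3}^{AP}(G)$ and suppose $c_{i}\notin L$; I will show that each of $c_{i\pm 1},c_{i\pm 2},c_{i\pm 3},c_{i\pm 4}$ lies in $L$. The main tool is the corollary following Claim \ref{separatingclaim}, which, for any pair of cycle vertices at cycle-distance $1$, $2$, or at least $3$, pinpoints the (at most four or six) cycle vertices that can possibly separate them. In particular, since the central vertex $h$ cannot separate two cycle vertices, every separator of such a pair must come from this short list.

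The strategy is to examine three carefully chosen pairs and on each one invoke the constraint $|SpS_{u,v}(L)|\geq 3$. First, from the pair $(c_{i},c_{i+1})$ the corollary gives $SpS_{c_{i},c_{i+1}}(L)\subseteq L\cap\{c_{i-1},c_{i},c_{i+1},c_{i+2}\}$; since $c_{i}\notin L$, three separations force $c_{i-1},c_{i+1},c_{i+2}\in L$. Applying the same argument to $(c_{i-1},c_{i})$ yields $c_{i-2}\in L$. Second, from the pair $(c_{i},c_{i+2})$ the separators lie in $L\cap\{c_{i-1},c_{i},c_{i+2},c_{i+3}\}$, so $c_{i+3}\in L$; symmetrically, using $(c_{i-2},c_{i})$, we get $c_{i-3}\in L$. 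Third, applying the corollary to the shifted pair $(c_{i+1},c_{i+3})$ (which has cycle-distance $2$) gives separators in $L\cap\{c_{i},c_{i+1},c_{i+3},c_{i+4}\}$, and since $c_{i}\notin L$, the vertex $c_{i+4}$ must be in $L$; the symmetric application to $(c_{i-3},c_{i-1})$ forces $c_{i-4}\in L$.

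I do not anticipate a serious obstacle: each step is a direct counting argument once the correct separator list is identified, and all three pairs used have cycle-distance at most $2$, so the $n\geq 6$ clauses of the corollary are exactly what is needed and no distant-pair machinery enters. The only delicate point is the small case $n=6$, where the cycle has only $5$ vertices and several of the indices $i\pm 1,\ldots,i\pm 4$ coincide modulo $n-1$; there the conclusion degenerates to $C\setminus\{c_{i}\}\subseteq L$, which is still exactly what the above chain of forcings produces, so the proof goes through uniformly for all $n\geq 6$.
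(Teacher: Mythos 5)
Your proposal is correct and follows essentially the same argument as the paper's proof: fix a non-landmark $c_i$ and repeatedly apply the four-vertex separator lists for pairs at cycle-distance $1$ and $2$ to force the eight surrounding vertices into $L$ (the paper treats only the forward direction explicitly and invokes symmetry, and uses the pairs $(c_i,c_{i+1})$, $(c_i,c_{i+2})$, $(c_{i+1},c_{i+3})$, exactly as you do). The index coincidences for $n=6$ that you flag are harmless for the same reason you give.
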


\begin{proof}
We show that the condition is necessary for a landmark set $L$. Assume that
$c_{i}\notin L$. We show that $c_{i+1},c_{i+2},c_{i+3},c_{i+4}\in L$ (the
proof for $c_{i-4},c_{i-3},c_{i-2},c_{i-1}$ is similar). To obtain three
separations between $c_{i}$ and $c_{i+\ell}$ for $\ell=1$ or $\ell=2$, at
least three of $c_{i-1},c_{i},c_{i+\ell},c_{i+\ell+1}$ must be in $L$, and
thus at least two of $c_{i},c_{i+\ell},c_{i+\ell+1}$ must be in $L$, that is
(since $c_{i}\notin L$), $c_{i+\ell},c_{i+\ell+1}\in L$ for $\ell=1,2$,
proving $c_{i+1},c_{i+2},c_{i+3}\in L$. To obtain three separations between
$c_{i+1}$ and $c_{i+3}$, at least three of $c_{i},c_{i+1},c_{i+3},c_{i+4}$
must be in $L$, and since $c_{i}\notin L$, $c_{i+4}\in L$.
\end{proof}

We analyze the cardinality of sets that satisfy the condition.

\begin{proposition}
\label{abck3ap} Any set satisfying the Condition \ref{k3ap} has at least
$\lfloor\frac{4n}{5}\rfloor$ cycle vertices. If $L\subseteq V$ is a set where
$|L\cap C|\geq5$, then there are at least three separations in $L\cap C$ for a
pair $h,c_{i}$, and in this case, if $L$ is a landmark set, then
$L\setminus\{h\}$ is a landmark set.
\end{proposition}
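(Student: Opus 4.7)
My plan is to treat the two claims of the proposition separately. For the cardinality bound, I will exploit Condition \ref{k3ap} as a spacing constraint on the non-landmark cycle vertices. If $c_i\notin L$, the condition forces $c_{i+1},c_{i+2},c_{i+3},c_{i+4}\in L$, so the cyclic distance on $C$ between any two non-landmark cycle vertices is at least $5$. Writing $z$ for the number of non-landmarks in $C$, the consecutive cyclic gaps between them sum to $n-1$ and each is at least $5$, so $z\leq\lfloor (n-1)/5\rfloor$ and $|L\cap C|\geq (n-1)-\lfloor (n-1)/5\rfloor$. A short case analysis on $n\bmod 5$ shows that this quantity equals $\lfloor 4n/5\rfloor$ in every residue class, yielding the bound; the degenerate case $C\subseteq L$ is immediate since $n-1\geq\lfloor 4n/5\rfloor$ for $n\geq 5$.

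For the second claim, I will first pin down exactly which cycle vertices can separate a pair $\{h,c_i\}$. Every cycle vertex $x$ satisfies $d(h,x)=1$, while $d(c_i,c_i)=0$, $d(c_i,c_{i\pm 1})=1$, and $d(c_i,x)=2$ for any other $x\in C$. Therefore, the only cycle vertices that fail to separate $h$ and $c_i$ are the two cycle neighbors $c_{i-1},c_{i+1}$ of $c_i$ (exactly the content of Claim \ref{separatingclaim} specialized to the wheel). Hence $L\cap C$ contributes at least $|L\cap C|-2\geq 3$ separators of the pair $\{h,c_i\}$, which proves the first half of the second claim. For the \emph{moreover} clause, suppose $L$ is a landmark set and consider any pair of distinct vertices. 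For a pair of cycle vertices, $h$ is not a separator at all (distance $1$ to both), so removing $h$ from $L$ leaves the separator count of that pair unchanged. For a pair $\{h,c_i\}$, the three or more separators just produced already lie in $L\cap C\subseteq L\setminus\{h\}$. In either case $L\setminus\{h\}$ supplies at least three separators, so $L\setminus\{h\}\in LS_{3}^{AP}(G)$.

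The only step demanding any care is the floor identity $(n-1)-\lfloor (n-1)/5\rfloor = \lfloor 4n/5\rfloor$, which is routine but must be checked in each of the five residue classes of $n$ modulo $5$; everything else is a direct consequence of the wheel's restricted distance set $\{0,1,2\}$ and of Claim \ref{separatingclaim}.
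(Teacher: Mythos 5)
Your proof is correct and follows essentially the same route as the paper's: the cardinality bound rests on the same counting (the paper partitions the cyclic incidence string into blocks containing at least four ones per zero, which is exactly your observation that consecutive non-landmarks are at least five apart cyclically, giving at most $\lfloor (n-1)/5\rfloor$ holes), and the second claim is proved identically by noting that $L\setminus\{h,c_{i-1},c_{i+1}\}$ already contains at least three cycle vertices, each of which separates $h$ and $c_{i}$. The paper leaves the floor identity $(n-1)-\lfloor (n-1)/5\rfloor=\lfloor 4n/5\rfloor$ and the fact that $h$ separates no pair of cycle vertices implicit; your explicit verification of both is harmless.
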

\begin{proof}
We start with the first part. Consider a set $X$ that satisfies Condition
\ref{k3ap}. If $C\subseteq X$, then we are done as $n-1\geq\frac{4n}{5}$ for
$n\geq5$. If $X\cap C=\emptyset$, then since $|C|\geq5$, there are at least
two consecutive cycle vertices not in $X$, and the condition is not satisfied.
Otherwise, let $c_{i+1}$ be a cycle vertex such that $c_{i+1}\in X$ while
$c_{i}\notin X$. Create a binary string of length $n-1$ for $X$. The string
starts with four times $1$ and ends with $0$. Partition the string into
maximum length substrings that each substring starts with $1$, ends with $0$,
and contains ones followed by zeroes. Every sequence of ones appears after a
zero in this cyclic string, and therefore it contains at least four ones by
the condition, while a substring never has more than one zero. Thus, the
number of ones is at least four times the number of zeroes, proving that there
are at most $\lfloor\frac{n-1}{5}\rfloor$ zeroes (and we are done since
$n-1-\lfloor\frac{n-1}{5}\rfloor=\lfloor\frac{4n}{5}\rfloor$).

To prove the second part, as $L \setminus\{h,c_{i-1},c_{i+1}\}$ contains at
least three cycle vertices, there are at least three separations between $h$
and $c_{i}$ even without $h$.
\end{proof}

\begin{lemma}
\label{l2k3ap} Let $n \geq6$. For a set $L\subseteq C$, if Condition
\ref{k3ap} holds for $L$, then for any pair $c_{i},c_{j}$ there are at least
three separations.
\end{lemma}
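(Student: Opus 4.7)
The plan is to mirror the approach of Lemma \ref{l2k2ap}. By the corollary following Claim \ref{separatingclaim}, the set $SpS_{c_i,c_j}(L)$ is always the intersection of $L$ with a small list of nearby cycle vertices, whose length depends on the cycle distance between $c_i$ and $c_j$. I will go case by case on this distance and use Condition \ref{k3ap} to show that at most one member of the candidate list can be outside of $L$ in the close cases, and at most one member of each of two disjoint triples in the distant case. The key structural observation driving every case is the following direct consequence of Condition \ref{k3ap}: if $c_a, c_b \notin L$ are two non-landmarks on $C$, then the cycle distance between them is at least $5$, since $c_b$ would otherwise lie among the eight vertices $c_{a-4},\dots,c_{a-1},c_{a+1},\dots,c_{a+4}$ that Condition \ref{k3ap} forces into $L$.

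First I would handle $c_j = c_{i+1}$: the candidate list is $\{c_{i-1}, c_i, c_{i+1}, c_{i+2}\}$, and any two of its vertices have cycle distance at most $3$. By the observation above at most one of them can be missing from $L$, so at least three are separators. Next, for $c_j = c_{i+2}$, the candidate list is $\{c_{i-1}, c_i, c_{i+2}, c_{i+3}\}$; again any two members are at cycle distance at most $4 < 5$, so at most one fails to lie in $L$ and at least three separators are obtained.

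For a distant pair $c_i, c_j$ with $j = i + \ell$ and $3 \leq \ell \leq n-4$, the candidate list decomposes as $\{c_{i-1}, c_i, c_{i+1}\} \cup \{c_{j-1}, c_j, c_{j+1}\}$, and these two triples are disjoint because $j \geq i+3$. Any two vertices inside one triple are at cycle distance at most $2$, so the key observation forces at most one non-landmark per triple, yielding at least two landmarks in each triple and hence at least four separators in total, which exceeds the required three.

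The main obstacle (really the only delicate point) is Case 2: I need to check that the bound "at most one non-landmark in the candidate list" still holds when the list skips the index $i+1$, which amounts to noting that $c_{i-1}$ and $c_{i+3}$ are exactly four apart on the cycle and are therefore forbidden by Condition \ref{k3ap} from both being non-landmarks. Once this is verified, the three cases together exhaust all pairs $c_i,c_j$ and the lemma follows.
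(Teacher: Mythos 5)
Your proposal is correct and follows essentially the same route as the paper: for $\ell=1,2$ you observe that Condition \ref{k3ap} forbids two non-landmarks among the four candidate separators (the paper phrases this as ``a vertex not in $L$ has at least four vertices in $L$ before it and following it''), and for distant pairs you get at least two landmarks in each of the two disjoint triples, hence four separations. Your explicit ``two non-landmarks are at least five apart'' formulation and the separate treatment of $\ell=1$ and $\ell=2$ are just a more spelled-out version of the paper's single combined case.
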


\begin{proof}
Consider a set satisfying the condition. We will show that any
pair of cycle vertices has at least three separations. Consider a
pair $c_{i},c_{i+\ell}$, where $\ell=1$ or $\ell=2$, and consider
the vertices $c_{i-1},c_{i},c_{i+\ell},c_{i+\ell+1}$. By the
condition, at most one of them is not in $L$ (as a vertex not in
$L$ has at least four vertices in $L$ before it and following it),
giving three separations. Consider a distant pair $c_{i},c_{j}$.
By the condition, $c_{i-1},c_{i},c_{i+1}$ must have at least two
vertices in $L$, and $c_{j-1},c_{j},c_{j+1}$ must have at least
two vertices in $L$, giving at least four separations between
$c_{i}$ and $c_{j}$.
\end{proof}

\begin{corollary}
For $n\geq7$, $L\subseteq C$ is a landmark set if and only if it satisfies the
condition. Moreover, $md_{3}^{AP}=\lfloor\frac{4n}{5}\rfloor$.
\end{corollary}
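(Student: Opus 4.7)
The plan is to mirror the proof of Corollary \ref{n8ap} (the analogous statement for $k=2$), assembling the characterization from Lemma \ref{l1k3ap}, Proposition \ref{abck3ap}, and Lemma \ref{l2k3ap}. The forward direction of the ``if and only if'' is immediate from Lemma \ref{l1k3ap}. For the reverse direction, suppose $L \subseteq C$ satisfies Condition \ref{k3ap}. Since $n \geq 7$ yields $\lfloor 4n/5 \rfloor \geq 5$, the first part of Proposition \ref{abck3ap} gives $|L| \geq 5$, which triggers its second part: every pair $h, c_i$ has at least three separations within $L = L \cap C$. Pairs of cycle vertices are covered by Lemma \ref{l2k3ap}. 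Hence $L$ is a landmark set.

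For the equality $md_3^{AP} = \lfloor 4n/5 \rfloor$, the lower bound rests on the observation that Condition \ref{k3ap} depends only on which cycle vertices belong to $L$. Thus for an arbitrary landmark set $L \in LS_3^{AP}(G)$ (possibly containing $h$), Lemma \ref{l1k3ap} implies that $L \cap C$ satisfies Condition \ref{k3ap}, and Proposition \ref{abck3ap} gives $|L| \geq |L \cap C| \geq \lfloor 4n/5 \rfloor$. For the matching upper bound, I construct a subset of $C$ of size exactly $\lfloor 4n/5 \rfloor$ satisfying Condition \ref{k3ap}: letting $z := n - 1 - \lfloor 4n/5 \rfloor$ denote the target number of non-landmarks, a short case analysis on $n \bmod 5$ confirms $5z \leq n - 1$, so $z$ non-landmarks can be placed cyclically on $C$ with at least four cycle landmarks between every two consecutive non-landmarks (including the wraparound gap). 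By the characterization just proved, the resulting set is a landmark set of the claimed cardinality.

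The main obstacle is laying out the extremal construction cleanly across all residues of $n$ modulo $5$. However, once the inequality $5z \leq n - 1$ is checked, a uniform choice—placing non-landmarks at positions $\lceil (n-1)/z \rceil, 2\lceil (n-1)/z \rceil, \ldots$ and absorbing any leftover slack into the wraparound gap—yields a valid construction for every $n \geq 7$, since each pair of consecutive non-landmarks is then flanked by at least four landmarks on each side, matching Condition \ref{k3ap} directly.
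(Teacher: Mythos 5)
Your proposal follows essentially the same route as the paper: the characterization is assembled from Lemma \ref{l1k3ap}, Proposition \ref{abck3ap} (the cardinality bound triggering the $|L\cap C|\geq 5$ clause for the pairs $h,c_i$), and Lemma \ref{l2k3ap}, and the value of $md_3^{AP}$ then follows from the lower bound plus an extremal construction. Your handling of the lower bound for landmark sets that may contain $h$ (passing to $L\cap C$, which satisfies the condition whenever $L$ does) is slightly more explicit than the paper's, and is correct. The one place where you deviate is the extremal construction, and your specific formula does not quite work: placing non-landmarks at positions $j\lceil (n-1)/z\rceil$ can overshoot $n-1$ and wrap around to a small index, producing a gap of fewer than four landmarks. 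For instance, at $n=17$ one has $z=3$ and $\lceil 16/3\rceil=6$, giving non-landmarks $c_6,c_{12},c_{18}=c_2$, and only three landmarks lie between $c_2$ and $c_6$, violating Condition \ref{k3ap}. Your underlying claim that $5z\leq n-1$ permits a valid cyclic placement is true, but the witness should be the evenly spaced one at distance exactly $5$, i.e., non-landmarks at positions $5,10,\ldots,5z$ (equivalently, the paper's set $L=\{c_i : i \mbox{ not divisible by } 5\}$ of cardinality $\lceil 4(n-1)/5\rceil=\lfloor 4n/5\rfloor$), which makes every gap between consecutive non-landmarks, including the wraparound, contain at least four landmarks with no case analysis needed.
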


\begin{proof}
If $n\geq7$, then a set that satisfies the condition has at least
$\lfloor\frac{4n}{5}\rfloor\geq5$ cycle vertices (using Proposition
\ref{abck3ap}), there are at least three separations for any pair of the form
$h,c_{i}$. By Lemma \ref{l2k3ap}, any pair of cycle vertices also has at least
two separations. Thus, a set satisfying the condition is a landmark set, and
by Lemma \ref{lemk3ap}, every landmark set satisfies the condition. This
proves the first part, and we find $md_{2}^{AP}\geq\lfloor\frac{4n}{5}\rfloor
$. The set $L=\{c_{i}|1\leq i\leq n-1,i\mbox{ \ is not divisible by }5\}$
satisfies the condition and has $\left\lceil \frac{4(n-1)}{5}\right\rceil
=\left\lfloor \frac{4n}{5}\right\rfloor $ vertices.
\end{proof}

For $n=7,8$, a minimal landmark set (with respect to set inclusion) consists
of $n-2$ cycle vertices, and any subset of $n-2$ vertices is a landmark set.
For $n=6$, a set that satisfies the condition has at least four cycle
vertices. We now specify the structure of minimal landmark sets for $n=6$, and
show these sets are exactly all subsets of five vertices.

\begin{corollary}
For $n=6$, a set of four cycle vertices is not a landmark set. Any subset of
$V$ with five vertices is a landmark set.
\end{corollary}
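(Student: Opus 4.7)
The plan is to prove the two assertions separately.

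For the first assertion, fix $L \subseteq C$ with $|L|=4$ and let $c_j$ be the unique cycle vertex outside $L$. Applying Claim~\ref{separatingclaim} to the pair $\{h, c_j\}$, every separator lies in $(C \setminus N_{c_j}(G)) \cup \{c_j, h\}$; for $n=6$ this set equals $\{c_j, h, c_{j+2}, c_{j-2}\}$, where the two ``opposite'' vertices $c_{j\pm 2}$ are distinct because the cycle has length $5$. Since neither $h$ nor $c_j$ is in $L$, at most the two cycle vertices $c_{j\pm 2}$ can separate $h$ and $c_j$ inside $L$, yielding at most $2 < 3$ separations, so $L$ fails to be a landmark set.

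For the second assertion, fix $L \subseteq V$ with $|L|=5$; then either $L=C$ or $L=(C\setminus\{c_j\})\cup\{h\}$ for some $j$. I would rely on the separator characterizations in the Corollary following Claim~\ref{separatingclaim}. For any pair of cycle vertices at cyclic distance $1$ or $2$, the separators are drawn from an explicit four-vertex cycle list; since $L$ misses at most one cycle vertex, at least three of those four lie in $L$. Distant pairs do not occur because $n=6$. For a mixed pair $\{h, c_i\}$, Claim~\ref{separatingclaim} says the separators come from $\{h, c_i, c_{i+2}, c_{i-2}\}$, and it remains to count how many of these four are in $L$.

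This last count is the only step that needs a small case analysis, and it is where I expect the (minor) obstacle to sit. When $L=C$, all three of $c_i, c_{i+2}, c_{i-2}$ are in $L$, giving three separators. When $L=(C\setminus\{c_j\})\cup\{h\}$, the vertex $h$ always contributes one separator, and depending on whether $c_j$ equals $c_i$, is a cycle-neighbor of $c_i$, or lies at cyclic distance $2$ from $c_i$, either all three or exactly two of $\{c_i, c_{i+2}, c_{i-2}\}$ survive in $L$, for a total of at least three separators in every subcase. The only subtlety is tracking the arithmetic modulo $5$: one must verify that $c_{i+2}$ and $c_{i-2}$ are always distinct, and that the listed positions for $c_j$ (equal to $c_i$, cycle-neighbor of $c_i$, or at cyclic distance $2$ from $c_i$) exhaust all possibilities on a cycle of five vertices.
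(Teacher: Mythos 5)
Your proposal is correct and follows essentially the same route as the paper: both arguments reduce to counting, for each pair, how many of the (at most four) possible separators given by Claim \ref{separatingclaim} and its corollary survive in $L$, observing that the omitted vertex $c_j$ together with $h$ kills two of the four separators of the pair $\{h,c_j\}$ for the first assertion, and that a five-element subset of $V$ misses only one vertex and hence retains at least three separators for every pair for the second. Your version is slightly more systematic in enumerating the pair types, but the key idea is identical.
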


\begin{proof}
A set of four (consecutive) cycle vertices $c_{i-2},c_{i-1},c_{i+1},c_{i+2}$
only makes two separations for the following three pairs: $h,c_{i-2}$,
$h,c_{i}$, $h,c_{i+2}$. Any set of five vertices gives another separation to
each one of the pairs, as both $h$ and $c_{i}$ separate each one of the three pairs.
\end{proof}

A string of five bits is called \textit{invalid} (for AP and $k=3$) if it has
at least two zeroes.

\begin{lemma}
\label{apk3word} Let $n \geq9$. A cyclic binary string represents a landmark
set if and only if it does not contain any invalid five bit string as a substring.
\end{lemma}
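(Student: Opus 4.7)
The plan is to mirror the proof of Lemma~\ref{apk2word}, using Condition~\ref{k3ap} as the bridge between the string property and the landmark-set property. Since $n \geq 9$, any landmark set $L$ satisfies $|L\cap C| \geq \lfloor 4n/5 \rfloor \geq 7 \geq 5$, so by Proposition~\ref{abck3ap} we may assume $h \notin L$ and the cyclic binary string of length $n-1$ encoding $L \subseteq C$ is well-defined. By the corollary preceding the lemma, it then suffices to prove that this cyclic string contains no invalid five-bit substring if and only if $L$ satisfies Condition~\ref{k3ap}.

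For the contrapositive of the ``only if'' direction, I assume the string contains an invalid substring over positions $c_i, c_{i+1}, c_{i+2}, c_{i+3}, c_{i+4}$, carrying at least two zeroes at positions $c_p$ and $c_q$ with $i \leq p < q \leq i+4$. Writing $s = q - p \in \{1,2,3,4\}$, this gives $c_p \notin L$ together with $c_{p+s} \notin L$, which directly violates Condition~\ref{k3ap}'s requirement that $c_{p+1}, c_{p+2}, c_{p+3}, c_{p+4} \in L$, and so by Lemma~\ref{l1k3ap} the set $L$ is not a landmark set.

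For the ``if'' direction, I assume every five-bit window contains at most one zero and pick any $c_i \notin L$. The five distinct five-bit windows containing position $i$ start at positions $i-4, i-3, i-2, i-1, i$. Each such window already carries a zero at position $i$, so by assumption every other bit in each of these windows is a one. The union of the ``other'' positions across these five windows is precisely $\{c_{i-4}, c_{i-3}, c_{i-2}, c_{i-1}, c_{i+1}, c_{i+2}, c_{i+3}, c_{i+4}\}$, so all eight of these vertices lie in $L$, verifying Condition~\ref{k3ap} at $c_i$. Since $c_i$ was an arbitrary non-landmark, $L$ satisfies Condition~\ref{k3ap} and is therefore a landmark set.

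The argument is essentially combinatorial bookkeeping, and there is no serious obstacle. The mildest point to verify is that when $n=9$ some of the indices $c_{i-4},\ldots,c_{i+4}$ coincide cyclically (e.g.\ $c_{i+4}=c_{i-4}$); but since the conclusion requires each of them to be in $L$, the collapse is harmless and no separate small-$n$ case is needed.
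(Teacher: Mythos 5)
Your proposal is correct and follows essentially the same route as the paper: both directions reduce the string property to Condition~\ref{k3ap} (an invalid window with two zeroes at distance $1$ to $4$ violates the condition, and conversely a vertex not in $L$ whose eight cyclic neighbours at distance at most $4$ are all in $L$ forces every window through it to have at most one zero), and then invoke the earlier equivalence between Condition~\ref{k3ap} and being a landmark set. Your version is slightly more explicit than the paper's (including the remark about coinciding indices for $n=9$), but the argument is the same.
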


\begin{proof}
If a substring has at least two zeroes, then each of the vertices
corresponding to the zeroes does not have at least four
consecutive ones before and after it. Condition \ref{k3ap} is not
satisfied, and therefore, the corresponding set is not a landmark
set. Assume now that a set $Y \subseteq C$ does not satisfy the
condition. There is a pair $c_{i},c_{i+\ell} \notin Y$ for some $1
\leq\ell\leq4$. The substring for
$c_{i},c_{i+1},c_{i+2},c_{i+3},c_{i+4}$ has two zeroes, and
therefore it is invalid.
\end{proof}

In the next section we state a dynamic programming formulation for computing
$wmd_{3}^{AP}$ based on the cyclic binary string, and on the fact that
including $h$ in landmark sets is not necessary for $n\geq9$.

\subsubsection{The case $\boldsymbol{k\geq4}$}

Finally, for $k\geq4$, we can prove the following.

\begin{theorem}
For AP, and $n\geq6$, $md_{k}^{AP}=\infty$ for $k\geq5$. Additionally,
$md_{4}^{AP}=n-1$ if $n\geq7$ (in this case $wmd_{4}^{AP}=w(C)$), and
$md_{4}^{AP}=6$ for $n=6$ (in this case $wmd_{4}^{AP}=w(V)$).
\end{theorem}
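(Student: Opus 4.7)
The plan is to exploit the very restrictive structure of a pair of adjacent cycle vertices. By Claim \ref{separatingclaim}, the only vertices that can separate $c_i$ and $c_{i+1}$ are in $\{c_{i-1},c_i,c_{i+1},c_{i+2}\}$ (the central vertex $h$ has distance $1$ to both, so it does not separate them). Hence at most four separations are ever possible for such a pair, which immediately gives $md_k^{AP}=\infty$ for every $k\geq 5$.

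For $k=4$, the four vertices above must all lie in $L$; applying this for every index $i$ forces $C\subseteq L$. So the only candidate minimal landmark sets are $C$ and $V$, and the whole question reduces to counting separations under $L=C$. I would go through the pair types one by one:
\begin{itemize}
\item Adjacent pairs $(c_i,c_{i+1})$ and near pairs $(c_i,c_{i+2})$ have exactly four separators in $C$, namely $\{c_{i-1},c_i,c_{i+1},c_{i+2}\}$ and $\{c_{i-1},c_i,c_{i+2},c_{i+3}\}$ respectively (these are all distinct when $n\geq 6$).
\item Distant pairs (which exist only when $n\geq 7$) have six separators in $C$ by the Corollary preceding the case $n=5$.
\item For a pair $(h,c_i)$, Claim \ref{separatingclaim} says the separators lie in $(C\setminus\{c_{i-1},c_{i+1}\})\cup\{h\}$; intersecting with $L=C$ gives $n-3$ separators.
\end{itemize}

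For $n\geq 7$ this count is $n-3\geq 4$, so $L=C$ is a landmark set and $md_4^{AP}=|C|=n-1$ with $wmd_4^{AP}=w(C)$. For $n=6$ the count for $(h,c_i)$ drops to $n-3=3$, so $L=C$ fails, while $L=V$ adds $h$ itself as a fourth separator of every $(h,c_i)$ and clearly keeps all other pairs well-separated. Combined with $C\subsetneq L$ being forced from adjacent pairs, this gives $md_4^{AP}=6$ and $wmd_4^{AP}=w(V)$ for $n=6$.

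The main (very mild) obstacle is the edge case $n=6$: one has to notice that distant pairs do not exist and that the cycle is too short to supply four separators for $(h,c_i)$ from within $C$, forcing $h\in L$. Everything else is a direct application of Claim \ref{separatingclaim} and the separator formula from the Corollary, so the proof should be short and essentially a careful case check on pair types.
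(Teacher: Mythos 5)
Your proposal is correct and follows essentially the same route as the paper: adjacent cycle pairs admit only the four separators $c_{i-1},c_i,c_{i+1},c_{i+2}$, which kills $k\geq 5$ and forces $C\subseteq L$ for $k=4$, and then counting the $n-3$ cycle separators of a pair $(h,c_i)$ distinguishes $n\geq 7$ (where $C$ suffices) from $n=6$ (where $h$ must be added, so $L=V$). Your write-up is just a more explicit case check of the same argument.
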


\begin{proof}
Since a pair of the form $c_{i},c_{i+1}$ can be separated by exactly four
vertices, a landmark set for $k=4$ must contain all cycle vertices, and there
is no valid landmark set for $k\geq5$. Any cycle vertex except for $c_{i+1}$
and $c_{i-1}$ separates $h$ and $c_{i}$, thus $C$ is a landmark set for $k=4$
if $n\geq7$. For $n=6$, there are only three separations by cycle vertices
between $h$ and $c_{i}$, and therefore $V$ is the only landmark set (as $h$
adds a separation).
\end{proof}

\subsection{Non-landmarks model (NL)}

\subsubsection{The case $\boldsymbol{k=2}$}

Consider the following condition for $n\geq6$.
\begin{condition}
\label{NLk2} For any pair of consecutive vertices of $C$ that are
not in $L$, the two consecutive vertices of $C$ preceding them,
and the two consecutive vertices of $C$ following them are in $L$.
That is, if $c_{i},c_{i+1}\notin L$, then
$c_{i-2},c_{i-1},c_{i+2},c_{i+3}\in L$.
\end{condition}

Similarly to AP, the condition implies that there does not exist a set of
three or more consecutive vertices of $C$ such that none of them is in $L$.

\begin{lemma}
\label{l1k2nl}For $n\geq6$, if $L\in LS_{2}^{NL}(G)$, then Condition
\ref{NLk2} holds.
\end{lemma}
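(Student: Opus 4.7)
The plan is to use the same ``non-landmark pair forces its separators to lie in $L$'' template that appeared in Lemma \ref{l1k2ap} for AP, but adapted to NL: in this model, two separations are required only between pairs of vertices of $V\setminus L$, so every witness pair I pick must consist of two holes. Throughout I will use the explicit description of $SpS_{c_a,c_b}(L)$ for consecutive and for distance-two cycle vertices supplied by the Corollary following Claim \ref{separatingclaim}.

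I start with the pair $(c_i,c_{i+1})$ itself. By hypothesis both vertices are holes, so the NL requirement gives $|SpS_{c_i,c_{i+1}}(L)|\geq 2$. The corollary (the $\ell=1$ case, which is valid for $n\geq 5$) yields $SpS_{c_i,c_{i+1}}(L)=L\cap\{c_{i-1},c_i,c_{i+1},c_{i+2}\}$, and since $c_i,c_{i+1}\notin L$, this collapses to $L\cap\{c_{i-1},c_{i+2}\}$, a set of at most two elements. Hence $c_{i-1}\in L$ and $c_{i+2}\in L$.

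To force $c_{i+3}\in L$, I argue by contradiction. Suppose $c_{i+3}\notin L$. Then $(c_{i+1},c_{i+3})$ is a pair of distance-two holes, so it also needs at least two separators. The corollary (valid for $n\geq 6$) gives $SpS_{c_{i+1},c_{i+3}}(L)=L\cap\{c_i,c_{i+1},c_{i+3},c_{i+4}\}$, and since $c_i,c_{i+1},c_{i+3}$ are all outside $L$, this set is contained in $\{c_{i+4}\}$, with at most one element, a contradiction. A symmetric argument applied to the pair $(c_{i-2},c_i)$ forces $c_{i-2}\in L$: if $c_{i-2}\notin L$, then $SpS_{c_{i-2},c_i}(L)=L\cap\{c_{i-3},c_{i-2},c_i,c_{i+1}\}$ reduces to $L\cap\{c_{i-3}\}$, again at most one element.

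There is no serious obstacle here; the main thing to be careful about is that when $n=6$ some of the indices $c_{i\pm 2},c_{i\pm 3}$ coincide modulo $n-1=5$, but the conclusion ``these vertices lie in $L$'' remains meaningful and each separator set used is correct for every $n\geq 6$. It is also worth observing, as a sanity check, that I have not had to invoke the pair $(c_i,h)$ or $(c_{i+1},h)$ at all, so the argument is insensitive to whether or not $h\in L$.
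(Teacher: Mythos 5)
Your proposal is correct and follows essentially the same route as the paper's proof: both arguments force the required vertices into $L$ by observing that a pair of nearby holes has at most four potential separators, all lying among their non-common neighbors, and that too few of these remain once the assumed holes are excluded. The only cosmetic difference is that you extract $c_{i-1}$ and $c_{i+2}$ directly from the pair $(c_i,c_{i+1})$ rather than via a contradiction on the pair $(c_{i+1},c_{i+2})$; the handling of $c_{i\pm 3}$ via the distance-two pairs is identical.
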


\begin{proof}
We show that the condition is necessary for a landmark set $L$. Assume that
$c_{i},c_{i+1}\notin L$. We show that $c_{i+2},c_{i+3}\in L$ (the proof for
the vertices $c_{i-1},c_{i-2}$ is similar, and here the two sets are again not
necessarily disjoint). Assume by contradiction that $c_{i+2}\notin L$. Then,
only $c_{i+3}$ can separate between $c_{i+1}$ and $c_{i+2}$, a contradiction.
Assume by contradiction that $c_{i+3}\notin L$. Then, only $c_{i+4}$ can
separate between $c_{i+1}$ and $c_{i+3}$, a contradiction.
\end{proof}

We analyze the cardinality of sets that satisfy the condition.

\begin{proposition}
\label{abck2nl} Any set satisfying Condition \ref{NLk2} has at least
$\lfloor\frac{n}{2}\rfloor$ cycle vertices. If $L\subseteq V$ is a set where
$|L\cap C|\geq4$, then there are at least two separations in $L\cap C$ for a
pair $h,c_{i}$, and in this case, if $L$ is a landmark set, then
$L\setminus\{h\}$ is a landmark set.
\end{proposition}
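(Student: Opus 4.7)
The plan is to mirror the proof of Proposition \ref{abck2ap} almost verbatim, since the only real change is that Condition \ref{NLk2} requires two consecutive landmarks on each side of a non-landmark pair rather than three. The two assertions of the proposition are essentially independent, so I would handle them in order.

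For the cardinality bound, I would first observe that the condition rules out three consecutive non-landmarks on $C$: if $c_i, c_{i+1}, c_{i+2}$ were all missing from $L$, applying the condition to the pair $c_i, c_{i+1}$ would force $c_{i+2}\in L$. Then I would dispose of the trivial configurations: $C\subseteq L$ gives $|L\cap C|=n-1\geq\lfloor n/2\rfloor$ for $n\geq 5$, and $L\cap C=\emptyset$ is impossible since $|C|\geq 5$ would yield three consecutive non-landmarks. Otherwise, fix some cycle vertex $c_{i+1}\in L$ with $c_i\notin L$ and read around the cycle starting from $c_{i+1}$ to build a binary string of length $n-1$ that begins with $1$ and ends with $0$. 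Partitioning it into maximal blocks of the form $1^{a}0^{b}$, the no-three-zeroes observation gives $b\leq 2$ in every block, and when $b=2$ the condition applied to the pair of consecutive non-landmarks forces $a\geq 2$, while $b=1$ automatically yields $a\geq 1$. Hence each block satisfies $a\geq b$, summing over blocks gives at least $\lceil (n-1)/2\rceil=\lfloor n/2\rfloor$ ones, and these ones correspond exactly to the cycle vertices of $L$.

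For the second assertion, I would invoke Claim \ref{separatingclaim}: since any cycle vertex at distance $2$ from $c_i$ separates $h$ from $c_i$, the only cycle vertices that fail to separate $h$ from $c_i$ are the two neighbors $c_{i-1}$ and $c_{i+1}$. Therefore $(L\cap C)\setminus\{c_{i-1},c_{i+1}\}$ consists entirely of separators of the pair $h,c_i$, and when $|L\cap C|\geq 4$ this leaves at least two separators in $L\cap C$ for every such pair. Since $h$ itself can only separate pairs of the form $h,c_j$, dropping $h$ from $L$ preserves the required two separations for every pair, so if $L$ is a landmark set then so is $L\setminus\{h\}$.

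The main obstacle is really the block-partition bookkeeping. I need to be careful that the wraparound in the cyclic string does not create a short or anomalous final block that violates the $a\geq b$ invariant, and that the anchoring choice $c_{i+1}\in L,\ c_i\notin L$ is actually possible (which is guaranteed by having already ruled out $C\subseteq L$ and $L\cap C=\emptyset$). The rest is a clean parallel to the AP case, and the separation argument for the second part is immediate from Claim \ref{separatingclaim}.
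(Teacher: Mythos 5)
Your proposal is correct and follows essentially the same route as the paper's proof: the same three-case split, the same binary string anchored at a landmark following a non-landmark, the same block decomposition with the observation that a block with two zeroes must have at least two ones, and the same argument that $(L\cap C)\setminus\{c_{i-1},c_{i+1}\}$ already provides two separators for $h,c_i$ so that $h$ is dispensable. The wraparound concern you raise is handled exactly by the anchoring you describe, so nothing further is needed.
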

\begin{proof}
We start with the first part. Consider a set $X$ that satisfies the condition.
If $C\subseteq X$, then we are done as $n-1\geq\frac{n}{2}$ for $n\geq5$. If
$X\cap C=\emptyset$, then since $|C|\geq5$, there are at least three
consecutive cycle vertices not in $X$, and the condition is not satisfied.
Otherwise, let $c_{i+1}$ be a cycle vertex such that $c_{i+1}\in X$ while
$c_{i}\notin X$. Once again we create a binary string of length $n-1$ that
starts with $1$ and ends with $0$, and partition the string into maximum
length substrings that each substring starts with $1$, ends with $0$, and
contains ones followed by zeroes. A substring never has more than two zeroes,
and if it has two zeroes, then it has at least two ones. Thus, the number of
ones is at least the number of zeroes, proving that there are at least
$\lceil\frac{n-1}{2}\rceil=\lfloor\frac{n}{2}\rfloor$ ones.

To prove the second part, as $h$ cannot separate cycle vertices, and as
$L\setminus\{h,c_{i-1},c_{i+1}\}$ contains at least two cycle vertices, there
are at least two separations between $h$ and $c_{i}$ even without $h$.
\end{proof}

\begin{lemma}
\label{l2k2nl} Let $n\geq6$. For a set $L\subseteq C$, if Condition \ref{NLk2}
holds, there are at least two separations for any pair $c_{i},c_{j}\notin L$.
\end{lemma}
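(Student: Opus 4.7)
The plan is to do a case analysis on the cyclic distance $\ell$ between the two holes $c_{i}$ and $c_{j}=c_{i+\ell}$, using the corollary after Claim~\ref{separatingclaim}, which in each case pins down the (constantly many) cycle vertices that can possibly separate the pair. Throughout, I will repeatedly invoke the immediate consequence of Condition~\ref{NLk2} noted just after its statement: three consecutive cycle vertices cannot all lie outside $L$.

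First, when $\ell=1$, i.e.\ the pair is $c_{i},c_{i+1}\notin L$, the separating set is $L\cap\{c_{i-1},c_{i},c_{i+1},c_{i+2}\}$; but Condition~\ref{NLk2} applied directly to this pair places both $c_{i-1}$ and $c_{i+2}$ in $L$, giving the two required separations. Second, when $\ell=2$, the separating set is $L\cap\{c_{i-1},c_{i},c_{i+2},c_{i+3}\}$. The vertex $c_{i+1}$ must be in $L$, since $c_{i+1}\notin L$ together with $c_{i},c_{i+2}\notin L$ would be three consecutive holes. I would then argue that both $c_{i-1}\in L$ and $c_{i+3}\in L$: if $c_{i-1}\notin L$, then $c_{i-1},c_{i}$ are consecutive holes, so by the condition $c_{i+2}\in L$, contradicting the hypothesis; the symmetric argument using the consecutive-hole pair $c_{i+2},c_{i+3}$ rules out $c_{i+3}\notin L$. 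Hence we again get two separations.

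Third, for a distant pair ($3\leq\ell\leq n-4$), the separating vertices lie in $L\cap\{c_{i-1},c_{i},c_{i+1},c_{j-1},c_{j},c_{j+1}\}$. Since three consecutive cycle vertices cannot all be holes, at least one of $c_{i-1},c_{i+1}$ is in $L$ and at least one of $c_{j-1},c_{j+1}$ is in $L$. These two witnesses are distinct because $j\geq i+3$ implies $\{c_{i-1},c_{i+1}\}\cap\{c_{j-1},c_{j+1\}}=\emptyset$, so we obtain two separations.

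The only mildly delicate step is the $\ell=2$ case, where Condition~\ref{NLk2} gives us nothing directly about $c_{i-1}$ and $c_{i+3}$; the trick is to apply it not to the pair $(c_{i},c_{i+2})$ but to the adjacent-hole pairs that would appear if $c_{i-1}$ or $c_{i+3}$ were also a hole. All other cases follow straightforwardly from the ``no three consecutive holes'' consequence of the condition together with the restricted separator sets supplied by the corollary.
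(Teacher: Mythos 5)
Your proof is correct and follows essentially the same route as the paper's: for $\ell=1,2$ you force the two outer candidate separators $c_{i-1}$ and $c_{i+\ell+1}$ into $L$ (the paper does both subcases by the same contradiction argument you use for $\ell=2$; your direct application of the condition for $\ell=1$ is a harmless shortcut), and you handle distant pairs via the ``no three consecutive holes'' consequence exactly as the paper does. No gaps.
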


\begin{proof}
Consider a set satisfying the condition. We show that any pair of
cycle vertices not in $L$ has at least two separations. Consider a
pair $c_{i},c_{i+\ell}\notin L$, where $\ell=1$ or $\ell=2$. If
$c_{i-1}\notin L$, then by the condition for the two vertices
$c_{i-1},c_{i}$, we have $c_{i+1},c_{i+2}\in L$, contradicting
$c_{i+\ell}\notin L$. Similarly, if $c_{i+\ell+1}\notin L$, then
by the condition for the two vertices $c_{i+\ell },c_{i+\ell+1}$,
we have $c_{i+\ell-1},c_{i+\ell-2}\in L$, a contradiction as one
of the last two vertices is the vertex $c_{i}$. We find
$c_{i-1},c_{i+\ell+1}\in L$, and they are distinct vertices
separating $c_{i}$ and $c_{i+\ell}$. Next, consider a distant pair
$c_{i},c_{j}\notin L$. By the condition, $c_{i-1},c_{i},c_{i+1}$
must have at least one vertex in $L$, and $c_{j-1},c_{j},c_{j+1}$
must have at least one vertex in $L$, giving at least two
separations between $c_{i}$ and $c_{j}$.
\end{proof}

Based on the above, we get the following.

\begin{corollary}
For $n\geq8$, $L\subseteq C$ is a landmark set if and only if it satisfies
Condition \ref{NLk2}. Moreover, $md_{2}^{NL}=\lfloor\frac{n}{2}\rfloor$.
\end{corollary}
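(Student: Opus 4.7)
The plan is to mirror the structure used for Corollary \ref{n8ap} (the AP analogue) and simply assemble the lemmas already at our disposal. The corollary has two claims: the characterization of landmark sets $L \subseteq C$ via Condition \ref{NLk2}, and the exact value $md_{2}^{NL} = \lfloor n/2 \rfloor$.

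For the characterization, the ``only if'' direction is immediate from Lemma \ref{l1k2nl}. For the ``if'' direction, I would assume $L \subseteq C$ satisfies Condition \ref{NLk2} and verify separation for every NL pair. Since $n \geq 8$, Proposition \ref{abck2nl} gives $|L| = |L \cap C| \geq \lfloor n/2 \rfloor \geq 4$. Because $L \subseteq C$ we have $h \notin L$, so the NL pairs split into (a) pairs $h, c_i$ with $c_i \notin L$ and (b) pairs $c_i, c_j$ with $c_i, c_j \notin L$. The second part of Proposition \ref{abck2nl} handles case (a), giving at least two separations coming from $L \cap C$. Lemma \ref{l2k2nl} handles case (b). Thus $L \in LS_{2}^{NL}(G)$.

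For the equality $md_{2}^{NL} = \lfloor n/2 \rfloor$, I would establish the two inequalities separately. The lower bound uses that Condition \ref{NLk2} only constrains cycle vertices: if $L$ is any landmark set (possibly containing $h$), then Lemma \ref{l1k2nl} gives that $L$ satisfies the condition, whence $L \cap C$ also satisfies it (the condition's conclusion is a statement about cycle vertices belonging to $L$, which is equivalent to belonging to $L \cap C$). Applying the first part of Proposition \ref{abck2nl} to $L \cap C$ yields $|L| \geq |L \cap C| \geq \lfloor n/2 \rfloor$. For the upper bound, I would exhibit the explicit set $L = \{c_{i} : 1 \leq i \leq n-1,\ i\text{ odd}\}$, which has $\lceil (n-1)/2 \rceil = \lfloor n/2 \rfloor$ vertices; since no two consecutive cycle indices are both even, no two consecutive cycle vertices lie outside $L$, so Condition \ref{NLk2} is satisfied vacuously and $L$ is a landmark set by the characterization just proved.

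There is no real obstacle here; the work was done in the preceding lemmas. The only point that deserves a sentence of care is the observation, in the lower bound, that Condition \ref{NLk2} is inherited by $L \cap C$ from $L$, so that the cardinality bound of Proposition \ref{abck2nl} applies uniformly to arbitrary landmark sets rather than only to subsets of $C$.
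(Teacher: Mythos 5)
Your proposal is correct and follows essentially the same route as the paper: Lemma \ref{l1k2nl} for necessity, Proposition \ref{abck2nl} together with Lemma \ref{l2k2nl} for sufficiency and the lower bound, and the odd-indexed cycle vertices for the upper bound (the paper obtains this last step by citing the AP construction of the same cardinality, which is the identical set). Your extra remark that Condition \ref{NLk2} is inherited by $L\cap C$ is a sound way of making explicit a step the paper leaves implicit.
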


\begin{proof}
A set that satisfies the condition has at least
$\lfloor\frac{n}{2}\rfloor \geq4$ cycle vertices. Such a set is a
landmark set as by Proposition \ref{abck2nl}, there are two
separations for any pair of the form $h,c_{i}$, and by Lemma
\ref{l2k2nl}, any pair of cycle vertices also has at least two
separations. By Lemma \ref{l1k2nl}, any landmark set satisfies the
condition. This proves the first part, and we find
$md_{2}^{NL}\geq\lfloor\frac{n}{2}\rfloor$, while the other
inequality follows as there is a landmark set of this cardinality
for AP.
\end{proof}

The minimal landmark sets (with respect to set inclusion) for $n=8$ are not
the same as for AP, since the condition is slightly weaker. All subsets of
four cycle vertices where there are no pair of consecutive non-landmark (that
were defined for AP) are still minimal landmark sets. Minimal landmark sets
with a pair of consecutive non-landmarks are possible too, and have the form
$c_{i},c_{i+1},c_{i+3},c_{i+4}$ (for some $1\leq i\leq7$).

For $n=6,7$, we consider sets with at least three cycle vertices.

\begin{corollary}
For $n=6,7$, given $A\subseteq C$, if $|A|\geq4$, then $A$ is a landmark set.
Assume that $|X|=3$, and $X$ satisfies Condition \ref{NLk2}. The set
$Y=X\cup\{h\}$ is a landmark set, while $X$ is a landmark set only if $n=6$,
and it consists of three consecutive cycle vertices.
\end{corollary}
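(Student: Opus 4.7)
The plan is to verify the three assertions in turn, leveraging Proposition~\ref{abck2nl} and Lemma~\ref{l2k2nl}, with a small case analysis for the last part.

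For the first assertion, I would begin by checking that Condition~\ref{NLk2} holds for $A$. When $n=6$ we have $|C|=5$, so at most one cycle vertex is missing from $A$ and the condition is vacuous; when $n=7$ we have $|C|=6$, so at most two cycle vertices are missing, and if these two happen to be consecutive (say $c_i,c_{i+1}$) then all four required vertices $c_{i-2},c_{i-1},c_{i+2},c_{i+3}$ lie in $A$. Once the condition is established, Proposition~\ref{abck2nl} (using $|A\cap C|=|A|\geq 4$) supplies two separators in $A$ for every pair $h,c_i\notin A$, and Lemma~\ref{l2k2nl} supplies two separators for every cycle-cycle pair in $C\setminus A$, so $A\in LS_2^{NL}(G)$.

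For the second assertion I would use the fact that in the NL model a set $Y$ only needs to separate pairs contained in $V\setminus Y$. Since $h\in Y=X\cup\{h\}$, the only pairs to check are cycle-cycle pairs in $C\setminus X$; because $X$ satisfies Condition~\ref{NLk2}, Lemma~\ref{l2k2nl} immediately yields two separators in $X\subseteq Y$ for each such pair.

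The third assertion is the main obstacle. The bottleneck when $h\notin X$ is the pair $h,c_j$ for $c_j\notin X$: by Claim~\ref{separatingclaim} this pair can only be separated by vertices in $X\setminus\{c_{j-1},c_{j+1}\}$, so two separators remain only if at most one of $c_{j-1},c_{j+1}$ lies in $X$. I would then enumerate the size-three subsets of $C$ satisfying Condition~\ref{NLk2} up to cyclic rotation. For $n=7$, any such subset containing a pair of consecutive non-landmarks would be forced by the condition to contain four distinct cycle vertices, leaving as the sole possibility the alternating set $\{c_2,c_4,c_6\}$; but here $c_1\notin X$ has both cycle neighbors $c_2,c_6$ in $X$, so the pair $h,c_1$ admits only one separator and $X$ is not a landmark set. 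For $n=6$ there are two rotation classes of size-three subsets: three consecutive vertices, where every missing $c_j$ has exactly one cycle neighbor in $X$ so two separators survive; and the ``two-consecutive-plus-one-separated'' class (which also contains $\{c_1,c_3,c_5\}$, since $c_5$ and $c_1$ are adjacent on the $5$-cycle), where some missing vertex always has both cycle neighbors in $X$, killing one of the two needed separators. Combined with Lemma~\ref{l2k2nl} applied to the cycle-cycle pairs in $C\setminus X$, this shows that $X$ is a landmark set exactly when $n=6$ and $X$ consists of three consecutive cycle vertices, giving the ``only if'' direction of the third assertion.
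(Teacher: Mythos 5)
Your proof is correct and follows essentially the same route as the paper's: verify Condition \ref{NLk2} for four-vertex subsets and invoke Proposition \ref{abck2nl} and Lemma \ref{l2k2nl}, then classify the three-vertex subsets satisfying the condition up to rotation and test the pairs $h,c_j$ with $c_j\notin X$, which is exactly where the paper's case analysis lives (the paper merely outsources the failing cases to its AP proofs, and it verifies $Y$ by counting $h$ itself as a separator of $h,c_j$, whereas you use the cleaner observation that in NL such pairs need no separation at all). The only cosmetic difference is that your argument is self-contained rather than citing the AP corollary; the substance is identical.
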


\begin{proof}
The subsets of four vertices are landmark sets as they were proved to be
landmark sets for AP.

For $n=7$, the proof that any landmark set has at least four
vertices is the same as for AP. For $n=6$, the proof for a triple
of cycle vertices that are not consecutive is the same as for AP.
If $X$ consists of $c_{i},c_{i+1},c_{i+2}$, then $c_{i-1},h$ are
separated by $c_{i+1}$ and $c_{i+2}$, while $c_{i+3},h$ are
separated by $c_{i}$ and $c_{i+1}$, and $c_{i-1},c_{i+3}$ are
separated by $c_{i}$ and $c_{i+2}$.
\end{proof}

We find that a minimal landmark set (with respect to set inclusion) for $n=7$
is the same as for AP. For $n=6$, a minimal landmark set consists of either
three consecutive cycle vertices, or $h$ together with three cycle vertices
that are not consecutive.

A string of four bits is called \textit{invalid} (for NL and $k=2$) if it
contains at least three zeroes.

\begin{lemma}
\label{nlk2word} Let $n\geq9$. A cyclic binary string represents a landmark
set if and only if it does not contain any invalid four bit string as a substring.
\end{lemma}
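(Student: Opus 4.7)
The plan is to closely mirror the proof of Lemma \ref{apk2word}, since the structure is identical: we want an equivalence between Condition \ref{NLk2} holding for a cycle subset $L$ and the absence of a forbidden substring in the associated cyclic binary string. Since $n\geq 9$, Proposition \ref{abck2nl} ensures $|L\cap C|\geq 4$ whenever $L$ is a landmark set, so by that same proposition we may assume $h\notin L$ and characterize landmark sets purely by a cyclic binary string of length $n-1$. Thus it suffices to show that Condition \ref{NLk2} is violated if and only if some four-bit substring contains at least three zeroes.

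For the first direction, I would argue that every invalid four-bit substring must contain two consecutive zeroes. There are only five binary strings of length four with at least three zeroes ($0000$, $0001$, $0010$, $0100$, $1000$), and each of them has an adjacent zero pair. Hence an invalid substring yields a pair $c_i,c_{i+1}\notin L$ plus one additional zero bit within the window of four. The additional zero sits at one of the positions $c_{i-2}$, $c_{i-1}$, $c_{i+2}$, $c_{i+3}$, directly violating Condition \ref{NLk2}. (I would check the four nontrivial cases briefly: for $0001$ the extra zero gives $c_{i+2}\notin L$; for $0010$ it gives $c_{i+3}\notin L$; for $0100$ it gives $c_{i-2}\notin L$; for $1000$ it gives $c_{i+2}\notin L$.)

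For the converse, I would assume that some $Y\subseteq C$ violates Condition \ref{NLk2}, so there exists a consecutive pair $c_i,c_{i+1}\notin Y$ together with at least one of $c_{i-2},c_{i-1},c_{i+2},c_{i+3}$ not in $Y$. If one of $c_{i-2},c_{i-1}$ is missing, then the four-bit substring for $c_{i-2},c_{i-1},c_i,c_{i+1}$ has zeroes in its last two positions plus another zero, so it has at least three zeroes and is invalid. Otherwise one of $c_{i+2},c_{i+3}$ is missing, and then the substring for $c_i,c_{i+1},c_{i+2},c_{i+3}$ starts with two zeroes and contains a third, again invalid. Either way an invalid four-bit substring appears.

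The only subtlety I expect, and the main thing worth checking carefully, is the case analysis confirming that every four-bit binary string with at least three zeroes contains an adjacent zero pair and that the extra zero always lands inside the forbidden window $\{c_{i-2},c_{i-1},c_{i+2},c_{i+3}\}$ relative to that pair; this is what ensures the two directions match exactly. Once this bookkeeping is done, the equivalence with Condition \ref{NLk2}, combined with the earlier corollary characterizing landmark sets as those $L\subseteq C$ satisfying Condition \ref{NLk2}, completes the proof.
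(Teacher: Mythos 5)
Your proposal is correct and follows essentially the same route as the paper's own proof: both directions reduce the statement to Condition \ref{NLk2} via the earlier corollary, observe that every four-bit string with at least three zeroes contains an adjacent zero pair plus an extra zero landing in $\{c_{i-2},c_{i-1},c_{i+2},c_{i+3}\}$, and run the converse by splitting on whether the missing vertex precedes or follows the consecutive pair. Your explicit enumeration of the five invalid strings is just a slightly more detailed version of the paper's argument.
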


\begin{proof}
Let $\alpha$ be an invalid substring. In this case, $\alpha$ has a pair of
consecutive zeroes, and at least one additional zero. This means that there
are two cycle vertices $c_{i},c_{i+1}\notin X$, such that one of
$c_{i-2},c_{i-1},c_{i+2},c_{i+3}$ has a zero corresponding to it in $\alpha$,
thus it is not in $X$, violating Condition \ref{NLk2}.

Assume now that a set $Y\subseteq C$ does not satisfy the
condition. Thus, there is a pair $c_{i},c_{i+1}\notin Y$ such that
one of $c_{i-2},c_{i-1},c_{i+2},c_{i+3}$ is not in $Y$. If one of
$c_{i-2},c_{i-1}$ is not in $Y$, then the substring for
$c_{i-2},c_{i-1},c_{i},c_{i+1}$ has at least three zeroes. If one
of $c_{i+2},c_{i+3}$ is not in $Y$, then the substring for
$c_{i},c_{i+1},c_{i+2},c_{i+3}$ has three zeroes. In both cases we
find an invalid substring.
\end{proof}

In the next section we state a dynamic programming formulation for computing
$wmd_{2}^{NL}$ based on the cyclic binary string, and on the fact that
including $h$ in any landmark set is not necessary for $n\geq9$.

\subsubsection{The cases $\boldsymbol{k=3,4}$}

Let $k\in\{3,4\}$ and consider the following condition.

\begin{condition}
\label{k34nl} For any vertex of $C$ that is not in $L$, the two
consecutive vertices of $C$ preceding it, and the two consecutive
vertices of $C$ following it are in $L$. That is, if $c_{i}\notin
L$, then $c_{i-2},c_{i-1},c_{i+1},c_{i+2}\in L$.
\end{condition}

\begin{lemma}
\label{l1k3nl}For $n\geq5$, if $L\in LS_{k}^{NL}(G)$ for $k=3$ or $k=4$, then
Condition \ref{k34nl} holds.
\end{lemma}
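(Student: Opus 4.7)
The plan is to argue by contradiction, leveraging the corollary following Claim~\ref{separatingclaim} which exactly identifies the possible separators of nearby pairs of cycle vertices. Fix a vertex $c_i \notin L$ and suppose, toward a contradiction, that at least one of $c_{i-2}, c_{i-1}, c_{i+1}, c_{i+2}$ is also absent from $L$. I would then split into two symmetric cases according to which neighbour is missing.

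First I would handle the adjacent case. If $c_{i+1} \notin L$ (the case $c_{i-1} \notin L$ is entirely symmetric), then $c_i$ and $c_{i+1}$ form a distinct pair of non-landmarks, so the NL requirement demands $|SpS_{c_i, c_{i+1}}(L)| \ge k \ge 3$. The cited corollary gives $SpS_{c_i, c_{i+1}}(L) = L \cap \{c_{i-1}, c_i, c_{i+1}, c_{i+2}\}$, and after discarding the two holes $c_i, c_{i+1}$ at most two potential separators remain, contradicting $k \ge 3$.

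Second, I would handle the case $c_{i+2} \notin L$ (with $c_{i-2} \notin L$ symmetric). Then $c_i$ and $c_{i+2}$ are a pair of non-landmarks, and for $n \ge 6$ the corollary yields $SpS_{c_i, c_{i+2}}(L) = L \cap \{c_{i-1}, c_i, c_{i+2}, c_{i+3}\}$; removing the two holes leaves at most two separators, again contradicting $k \ge 3$.

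The one subtlety, rather than a genuine obstacle, is the boundary value $n = 5$, which falls outside the explicit hypothesis of that corollary: with only four cycle vertices the indices wrap, so $c_{i+2} = c_{i-2}$ and $c_{i+3} = c_{i-1}$. A direct distance check on the $4$-cycle shows that $c_i$ and $c_{i+2}$ can only be separated by themselves, since the remaining two cycle vertices and $h$ all sit at distance $1$ from both, so the inequality is even sharper there. Since the statement for $k = 4$ follows a fortiori from the statement for $k = 3$, this completes the proof.
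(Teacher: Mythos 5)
Your proof is correct and follows essentially the same route as the paper's: both arguments observe that a pair of non-landmark cycle vertices at distance one or two on the cycle has only $c_{i-1}$ and $c_{i+\ell+1}$ as possible separators in $L$, which is fewer than $k\geq 3$. Your explicit check of the wrap-around case $n=5$ (where the corollary's hypothesis $n\geq 6$ fails for the pair $c_i,c_{i+2}$) is a small point of extra care that the paper's proof passes over silently, but it does not change the argument.
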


\begin{proof}
We show that the condition is necessary for a landmark set $L$. Assume that
$c_{i}\notin L$. We show that $c_{i+1},c_{i+2}\in L$ (the proof for
$c_{i-2},c_{i-1}$ is similar). If $c_{i+\ell}\notin L$ for $i\in\{1,2\}$, then
there can be at most two separations between $c_{i}$ and $c_{i+\ell}$ (those
are $c_{i-1}$ and $c_{i+\ell+1}$). Thus, as there must be at least three
separations for every pair of non-landmarks and $c_{i}\notin L$, we find that
$c_{i+1},c_{i+2}\in L$.
\end{proof}

We analyze the cardinality of sets that satisfy the condition.

\begin{proposition}
\label{abck3nl} Any set satisfying Condition \ref{k34nl} has at least
$\lfloor\frac{2n}{3}\rfloor$ cycle vertices. If $L\subseteq V$ is a set where
$|L\cap C|\geq k+2$, then there are at least $k$ separations in $L\cap C$ for
a pair $h,c_{i}$, and in this case, if $L$ is a landmark set, then
$L\setminus\{h\}$ is a landmark set.
\end{proposition}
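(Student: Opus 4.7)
The plan is to mirror the structure of Propositions \ref{abck2ap}, \ref{abck3ap}, and \ref{abck2nl}, adapting the combinatorial counting to the milder requirement of Condition \ref{k34nl}, which only mandates two consecutive landmarks on each side of a non-landmark cycle vertex. The two halves of the statement will be handled by essentially independent arguments.

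For the cardinality bound, I would encode $X \subseteq V$ satisfying the condition by the cyclic binary string of length $n-1$ whose $j$-th bit is $1$ iff $c_j \in X$. The cases $C \subseteq X$ and $X \cap C = \emptyset$ are disposed of exactly as in the earlier propositions (the second case fails the condition since $|C| \geq 5$ forces three consecutive zeroes). Otherwise I would rotate so the string starts with a $1$ following a $0$ and partition it into maximal substrings of the form $1^{a} 0^{b}$. Condition \ref{k34nl} forbids two consecutive zeroes (if $c_i \notin X$ then $c_{i+1} \in X$), so $b = 1$ in every block. Moreover, each zero must be followed by at least two ones, so apart from the block that possibly wraps around the starting position, the number of ones is at least twice the number of zeroes. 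Hence the number of zeroes is at most $\lfloor (n-1)/3 \rfloor$, giving $|X \cap C| \geq n - 1 - \lfloor (n-1)/3 \rfloor = \lfloor 2n/3 \rfloor$.

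For the second part, the crucial observation is that $d(c_j, h) = 1$ for every cycle vertex $c_j$, so a cycle vertex separates the pair $h, c_i$ precisely when its distance to $c_i$ differs from $1$, i.e.\ when $c_j \notin \{c_{i-1}, c_{i+1}\}$. Thus at most two cycle vertices fail to separate $h$ and $c_i$, and $|SpS_{h,c_i}(L \cap C)| \geq |L \cap C| - 2 \geq k$. The final claim then follows by noting that $h$ itself never separates any pair of distinct cycle vertices (all of which sit at distance $1$ from $h$), so deleting $h$ from $L$ can only affect separations of pairs of the form $\{h, c_i\}$, and these are already witnessed by at least $k$ vertices of $L \cap C$ by what we just showed; hence $L \setminus \{h\}$ remains a landmark set in the relevant model.

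The main technical subtlety is in the counting argument: the cyclic wrap-around means the two ones that must precede the first zero may be split between the end and the beginning of the linearized string, so the naive $1^a 0^b$ partition used in earlier propositions must either be applied carefully or replaced by a direct cyclic count (namely, the zeroes cut the cycle into equally many arcs of ones, each of length at least two, so $n - 1 \geq 3z$ where $z$ is the number of zeroes). Once this point is settled, both halves of the proposition reduce to short observations.
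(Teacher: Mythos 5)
Your proposal is correct and follows essentially the same route as the paper: the same block decomposition of the (cyclic) binary string into segments $1^a0$ with $a\geq 2$ forced by Condition \ref{k34nl}, giving at most $\lfloor\frac{n-1}{3}\rfloor$ zeroes, and the same observation that only $c_{i-1},c_{i+1}$ fail to separate $h$ and $c_i$, so $|L\cap C|\geq k+2$ leaves $k$ separators in $L\cap C$ and $h$ is dispensable. Your explicit handling of the cyclic wrap-around (counting arcs of ones between zeroes) is a slightly more careful rendering of the same counting argument, not a different method.
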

\begin{proof}
We start with the first part. Consider a set $X$ that satisfies the condition.
If $C \subseteq X$, then we are done as $n-1 \geq\frac{2n}3$ for $n \geq5$. If
$X \cap C =\emptyset$, then since $|C| \geq5$, there are at least two
consecutive cycle vertices not in $X$, and the condition is not satisfied.
Otherwise, let $c_{i+1}$ be a cycle vertex such that $c_{i+1} \in X$ while
$c_{i} \notin X$. Create a binary string of length $n-1$ for $X$. The string
starts with $1$ and ends with $0$. Partition the string into maximum length
substrings that each substring starts with $1$, ends with $0$, and contains
ones followed by zeroes. A substring never has more than one zero. Thus, the
number of ones is at least twice the number of zeroes, proving that there are
at most $\lfloor\frac{n-1}3 \rfloor=n-1-\lfloor\frac{2n}3 \rfloor$ zeroes.

To prove the second part, as $L \setminus\{h,c_{i-1},c_{i+1}\}$ contains at
least $k$ cycle vertices, there are at least $k$ separations between $h$ and
$c_{i}$ even without $h$.
\end{proof}

\begin{lemma}
\label{l2k3nl} Let $n \geq6$. For a set $L\subseteq C$, if Condition
\ref{k34nl} holds for $L$, then for any pair $c_{i},c_{j}$ there are at least
four separations.
\end{lemma}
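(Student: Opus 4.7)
The plan is to handle only pairs of non-landmarks $c_i, c_j \notin L$, which are the pairs that actually matter in the NL setting (the parallel Lemma \ref{l2k2nl} makes exactly this restriction explicit). The core observation is that Condition \ref{k34nl} forces such a pair to be a distant pair: from $c_i \notin L$ the condition puts $c_{i-2}, c_{i-1}, c_{i+1}, c_{i+2}$ into $L$, so any second non-landmark $c_j$ must satisfy $3 \leq j-i \leq n-4$ cyclically, i.e.\ $c_i$ and $c_j$ are distant in the sense of the paper.

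For a distant non-landmark pair $c_i, c_j$, I would then apply Condition \ref{k34nl} to each endpoint separately, extracting $c_{i-1}, c_{i+1} \in L$ from $c_i \notin L$ and $c_{j-1}, c_{j+1} \in L$ from $c_j \notin L$. Using both ends of the range $3 \leq j - i \leq n - 4$, these four indices are pairwise distinct modulo $n-1$, so we have four landmarks that are candidates for separating the pair.

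To finish, each of these four candidates does in fact separate $c_i$ and $c_j$. By Claim \ref{separatingclaim}, each $c_{i \pm 1}$ is a cycle neighbor of $c_i$ (distance $1$) while, by distantness, it is not a cycle neighbor of $c_j$ and so has distance $2$ to $c_j$ through $h$; hence it separates the pair. The symmetric argument covers $c_{j \pm 1}$, giving four distinct separations.

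The only step that requires genuine care is the cyclic index bookkeeping: the lower bound $j - i \geq 3$ is what rules out $c_{i+1} = c_{j-1}$, and the upper bound $j - i \leq n - 4$ is what rules out $c_{i-1} = c_{j+1}$. The boundary case $n = 6$ has no distant pairs at all (the condition forces all but at most one cycle vertex into $L$), so the claim is vacuously true there, and for $n \geq 7$ the distinctness just described goes through without further trouble.
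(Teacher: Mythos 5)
Your proof is correct and follows essentially the same route as the paper's: restrict to non-landmark pairs, use Condition \ref{k34nl} to rule out close pairs, and for a distant pair extract the four separators $c_{i\pm1},c_{j\pm1}$ from the two triples around $c_i$ and $c_j$. The only difference is that you spell out the cyclic distinctness of these four indices, which the paper leaves implicit.
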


\begin{proof}
Consider a set satisfying the condition. We will show that any pair of
non-landmark cycle vertices has at least four separations. Consider a pair
$c_{i},c_{i+\ell}$, where $\ell=1$ or $\ell=2$. By the condition, it is
impossible that both are non-landmarks. Thus, we are left with the case of a
distant pair $c_{i},c_{j}$. By the condition, $c_{i-1},c_{i},c_{i+1}$ must
have at least two vertices in $L$, and $c_{j-1},c_{j},c_{j+1}$ must have at
least two vertices in $L$, giving at least four separations between $c_{i}$
and $c_{j}$.
\end{proof}

\begin{corollary}
For $n\geq9$, $L\subseteq C$ is a landmark set if and only if it satisfies the
Condition \ref{k34nl}. Moreover, $md_{k}^{NL}=\lfloor\frac{2n}{3}\rfloor$  for
$k=3,4$. For $n=8$ and $k=3$, $L\subseteq C$ is a landmark set if and only if
it satisfies the condition. In addition, $md_{3}^{NL}=5$.
\end{corollary}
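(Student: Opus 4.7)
The plan is to mirror the proof structure of the earlier corollaries (for AP, $k=2$ and $k=3$), by combining the three facts already established: Lemma \ref{l1k3nl} (necessity of Condition \ref{k34nl}), Proposition \ref{abck3nl} (size bound and the $h,c_i$ separations once $|L\cap C|\geq k+2$), and Lemma \ref{l2k3nl} (sufficiency for pairs of cycle vertices). The one extra thing I have to supply is an explicit construction achieving the stated value $\lfloor 2n/3\rfloor$ (and $5$ for the corner case).

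First I would establish the characterization for $n\geq 9$. One direction is immediate from Lemma \ref{l1k3nl}: every landmark set in $LS_k^{NL}(G)$ with $k\in\{3,4\}$ satisfies Condition \ref{k34nl}. For the converse, take any $L\subseteq C$ satisfying the condition. By Proposition \ref{abck3nl}, $|L\cap C|\geq \lfloor 2n/3\rfloor$, and for $n\geq 9$ this is at least $6\geq k+2$. The proposition then guarantees at least $k$ separations in $L\cap C$ for every pair $h,c_i$, while Lemma \ref{l2k3nl} gives $4\geq k$ separations for every pair of cycle vertices (recall the condition already forces that no pair $c_i,c_{i+\ell}$ with $\ell\in\{1,2\}$ consists of two non-landmarks, so only the distant-pair case really needs the lemma). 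Hence $L$ is a landmark set.

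The value $md_k^{NL}$ is then pinned down from both sides. The lower bound $md_k^{NL}\geq \lfloor 2n/3\rfloor$ follows because any landmark set $L$ satisfies Condition \ref{k34nl} by Lemma \ref{l1k3nl}, and Proposition \ref{abck3nl} applies to $L\cap C$ to give $|L|\geq |L\cap C|\geq \lfloor 2n/3\rfloor$. For the matching upper bound I would exhibit an explicit $L\subseteq C$ of size $\lfloor 2n/3\rfloor$ satisfying the condition: take non-landmarks at every third cycle position, i.e.\ remove $\lfloor (n-1)/3\rfloor$ vertices with pairwise cyclic distance $\geq 3$. The identity $n-1-\lfloor (n-1)/3\rfloor=\lfloor 2n/3\rfloor$ gives the right cardinality. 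The main obstacle I expect is handling the cyclic wrap-around when $n-1$ is not divisible by $3$: I must place the removed vertices so that even the ``long'' residual arc still leaves at least two landmarks between any two non-landmarks cyclically. This is arranged by concentrating the residue in a single longer anti-gap (e.g.\ leaving four or five consecutive landmarks at one spot) rather than distributing it, and then verifying Condition \ref{k34nl} at the wrap-around point.

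Finally, for $n=8$ and $k=3$ I would run the same argument, noting the critical numerical coincidence $\lfloor 2\cdot 8/3\rfloor=5=k+2$: any $L\subseteq C$ satisfying the condition has $|L\cap C|\geq 5$, so Proposition \ref{abck3nl} still supplies three separations for the $h,c_i$ pairs, and Lemma \ref{l2k3nl} handles the cycle pairs as before (the cycle has $7$ vertices, so distant pairs exist). Combined with Lemma \ref{l1k3nl}, this gives the stated characterization, hence $md_3^{NL}\geq 5$. For the matching upper bound I would exhibit a concrete five-vertex set, e.g.\ $L=\{c_1,c_2,c_3,c_5,c_6\}$ on the $7$-cycle, whose non-landmarks $c_4,c_7$ each have their two cyclic predecessors and two cyclic successors in $L$, so that Condition \ref{k34nl} holds.
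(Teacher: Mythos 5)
Your proposal is correct and follows essentially the same route as the paper: it combines Lemma \ref{l1k3nl}, Proposition \ref{abck3nl}, and Lemma \ref{l2k3nl} exactly as the paper does, uses the same ``non-landmark at every third position'' construction for the upper bound (the paper's witness is $\{c_i : i \mbox{ not divisible by } 3\}$, and for $n=8$ the set $\{c_1,c_2,c_4,c_5,c_7\}$, but your variant $\{c_1,c_2,c_3,c_5,c_6\}$ is equally valid), and handles $n=8$, $k=3$ via the same coincidence $\lfloor 16/3\rfloor = 5 = k+2$.
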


\begin{proof}
Let $n\geq9$. A set that satisfies the condition has at least $\lfloor
\frac{2n}{3}\rfloor\geq6$ cycle vertices. Such a set is a landmark set as by
Proposition \ref{abck3nl}, there are at least four separations for any pair of
the form $h,c_{i}$, and by Lemma \ref{l2k3nl}, any pair of non-landmark cycle
vertices also has at least two separations. By Lemma \ref{l1k3nl}, any
landmark set satisfies the condition. This proves the first part, and we find
$md_{k}^{NL}\geq\lfloor\frac{2n}{3}\rfloor$, for $k=3,4$. The set
$L=\{c_{i}|1\leq i\leq n-1,$ $i\mbox{ \ is not divisible
by }3\}$ satisfies the condition and has $\lceil\frac{2(n-1)}{3}\rceil
=\lfloor\frac{2n}{3}\rfloor$ vertices.

If $n=8$ and $k=3$, a set that satisfies the condition has at least five cycle
vertices (using Proposition \ref{abck3nl}), and there are three separations
for any pair of the form $h,c_{i}$. By Lemma \ref{l2k3nl}, any pair of
non-landmark cycle vertices also has three separations. The set $L$ above
(which is $\{c_{1},c_{2},c_{4},c_{5},c_{7}\}$) is a landmark set with five vertices.
\end{proof}

It is left to consider the cases $n=6,7,8$, $k=4$, and $n=6,7$, $k=3$.

\begin{lemma}
For $n = 6$ and $k=3,4$, only trivial landmark sets exist ($md_{4}^{NL}=5$).

For $n=7$ and $k=3$, $L\subseteq V$ is a landmark set if and only if it
satisfies Condition \ref{k34nl} and $|L|\geq5$ (thus $md_{3}^{NL}=5$). For
$n=7$ and $k=4$, $L\subseteq V$ is a landmark set if and only if it either
satisfies the Condition \ref{k34nl} and in addition $h\in L$, or $L=C$ (thus
$md_{4}^{NL}=5$).

For $n=8$ and $k=4$, $L\subseteq V$ is a landmark set if and only if it either
satisfies Condition \ref{k34nl} and in addition $h\in L$, or it has at least
six cycle vertices (thus $md_{4}^{NL}=6$).
\end{lemma}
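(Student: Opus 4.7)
The plan is to argue by case analysis on the pair $(n,k)\in\{(6,3),(6,4),(7,3),(7,4),(8,4)\}$, using in each case three ingredients already proved: Lemma \ref{l1k3nl}, which makes Condition \ref{k34nl} necessary for every landmark set; Lemma \ref{l2k3nl}, which produces at least four separations for every pair of cycle vertices once the condition holds; and a direct count handling a pair of the form $(h,c_i)$ with $h,c_i\notin L$.

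The pivotal observation is that $h$ is adjacent to every cycle vertex, so by Claim \ref{separatingclaim} only the cycle non-neighbors of $c_i$ can separate $h$ from $c_i$. There are exactly $n-3$ such non-neighbors, equal to $2,3,4$ for $n=6,7,8$ respectively. This quantity is the main bottleneck: when $h\notin L$ it caps the number of separations available for the pair $(h,c_i)$ whenever $c_i\notin L$, so a cap-versus-$k$ comparison drives everything.

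I would then carry out the five sub-cases in order. For $n=6$ with $k\in\{3,4\}$ the cap of $2$ already lies below $k$, so any landmark set with $h\notin L$ must contain all of $C$; furthermore Condition \ref{k34nl} on the $5$-cycle forces $c_{i\pm 1},c_{i\pm 2}$ into $L$ whenever $c_i\notin L$, which accounts for all four remaining cycle vertices. Both constraints drive $|L|\geq n-1=5$, so only trivial solutions exist and $md_4^{NL}=5$. For $n=7$ and $k=3$ the cap matches $k$: if $h\notin L$ and $c_i\notin L$, all three cycle non-neighbors of $c_i$ must lie in $L$, which combined with Condition \ref{k34nl} pins $L\cap C=C\setminus\{c_i\}$ and $|L|=5$; if $h\in L$, every pair of non-landmarks is a cycle pair handled by Lemma \ref{l2k3nl}, while Proposition \ref{abck3nl} gives $|L\cap C|\geq 4$ and hence $|L|\geq 5$, so $md_3^{NL}=5$. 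For $n=7$ and $k=4$ the cap of $3$ is strictly below $k$, which forces either $h\in L$ or $L=C$; sufficiency of the former together with Condition \ref{k34nl} is once more Lemma \ref{l2k3nl}, and the witness $L=V\setminus\{c_i,c_{i+3}\}$ shows $md_4^{NL}=5$. For $n=8$ and $k=4$ the cap equals $k$ again, so if $h\notin L$ and $c_i\notin L$ all four cycle non-neighbors must be in $L$; combined with Condition \ref{k34nl} this gives $L\cap C\supseteq C\setminus\{c_i\}$, i.e., $|L\cap C|\geq 6$; otherwise $h\in L$ with Condition \ref{k34nl} suffices, and a witness of size $6$ exists in either branch so $md_4^{NL}=6$.

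The main obstacle I anticipate is the bookkeeping in the cases where the characterization permits a non-trivial pair $(c_i,c_{i+3})$ of non-landmarks, namely $(n=7,k=4)$ and $(n=8,k=4)$: one must verify by a direct distance computation in the wheel that this ``long'' pair actually receives the required $k$ separations from the four cycle vertices $c_{i-1},c_{i+1},c_{i+2},c_{i+4}$ that all lie in $L$ by Condition \ref{k34nl}. Once this single pair is verified, the claimed biconditional characterizations and minimum cardinalities follow immediately.
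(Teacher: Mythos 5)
Your proposal is correct and follows essentially the same route as the paper: necessity of Condition \ref{k34nl} via Lemma \ref{l1k3nl}, Lemma \ref{l2k3nl} for pairs of non-landmark cycle vertices, and a count of the cycle vertices that can separate $h$ from a non-landmark $c_i$, which caps that pair's separations whenever $h\notin L$. One small slip: that cap is $n-4$ (the cycle vertices other than $c_{i-1},c_i,c_{i+1}$), not $n-3$, but the values $2,3,4$ you actually use for $n=6,7,8$ are the correct ones, so the argument is unaffected.
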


\begin{proof}
For $n=6$, a set $L$ that satisfies the condition has at least four cycle
vertices. If $c_{i},h \notin L$, then as $c_{i-1},c_{i+1}$ do not separate
$c_{i}$ and $h$, there are two separations between them. Thus, $L$ must
contain another vertex, $|L|\geq5$, and it is a trivial landmark set.

For $n=7$, a set $L$ that satisfies the condition has at least four cycle
vertices. If it has exactly four cycle vertices, then it has the form
$c_{i-2},c_{i-1},c_{i+1},c_{i+2}$. If $h \notin L$, then similarly to the case
$n=6$, there are two separations between $c_{i}$ and $h$. On the other hand if
$h \in L$, this is a landmark set for $k=3,4$. Assume now that $h \notin L$,
and thus $L$ must contain at least five cycle vertices. If $c_{i} \notin L$,
then there are three separations between $h$ and $c_{i}$, so $L$ is a landmark
set for $k=3$ but not for $k=4$, and the only landmark set for $k=4$ not
containing $h$ is $C$.

For $n=8$ and $k=4$, a set $L$ that satisfies the condition has at least five
cycle vertices. If $h \in L$, then it is a landmark set. Otherwise, by the
condition, if $c_{i} \notin L$, then $c_{i-1},c_{i+1} \in L$, and to obtain
four separations between $c_{i}$ and $h$, $L$ must contain $C\setminus
\{c_{i}\}$, which results in a landmark set of this form.
\end{proof}

A string of three bits is called \textit{invalid} string (for NL and $k=3,4$)
if it has at least two zeroes.

\begin{lemma}
\label{nlk34word} Let $n \geq9$. A cyclic binary string represents a landmark
set if and only if it does not contain any invalid three bit string as a substring.
\end{lemma}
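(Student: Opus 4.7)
The plan is to follow the template used in Lemmas \ref{apk2word}, \ref{apk3word}, and \ref{nlk2word}. The corollary established earlier in this subsection shows that for $n \geq 9$ and $k \in \{3,4\}$ a set $L \subseteq C$ is a landmark set if and only if it satisfies Condition \ref{k34nl}, and including $h$ in a landmark set is never necessary. It therefore suffices to show that, for a set $L \subseteq C$, the cyclic binary string of $L$ avoids every invalid three-bit substring if and only if $L$ satisfies Condition \ref{k34nl}.

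For the forward direction I would argue the contrapositive. An invalid three-bit substring contains at least two zeros, which correspond to two distinct cycle vertices $c_p, c_q \notin L$ with $q - p \in \{1,2\}$ (indices taken cyclically on $C$). Applying Condition \ref{k34nl} at $c_p$ forces $c_{p+1} \in L$ and $c_{p+2} \in L$, directly contradicting $c_q \notin L$. Hence $L$ fails the condition and, by the corollary, is not a landmark set.

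For the reverse direction I would again argue the contrapositive. Suppose $L$ violates Condition \ref{k34nl}; then there is $c_i \notin L$ together with $c_j \notin L$ for some $j \in \{i-2, i-1, i+1, i+2\}$. In the subcases $j \in \{i-2, i-1\}$ the three-position window $c_{i-2}, c_{i-1}, c_i$ contains at least two zeros and is therefore an invalid substring; by symmetry the window $c_i, c_{i+1}, c_{i+2}$ works when $j \in \{i+1, i+2\}$. The hypothesis $n - 1 \geq 8$ guarantees that these windows are genuine length-three substrings of the cyclic string without any degenerate wrap-around.

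I do not expect any substantive obstacle: the argument is essentially a direct translation of the local condition on $L$ into the local condition on its cyclic binary representation, entirely parallel to the three preceding analogues. The only care required is the small case analysis in the reverse direction and the sanity check that the exhibited three-bit windows are honest substrings of the cyclic string.
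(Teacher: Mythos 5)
Your proposal is correct and follows essentially the same route as the paper: both directions reduce the statement to the equivalence between avoiding invalid three-bit substrings and satisfying Condition \ref{k34nl}, with the forward direction observing that two zeros in a three-bit window give two non-landmarks at cyclic distance at most two (violating the condition) and the reverse direction exhibiting an explicit three-bit window with two zeros. Your version is slightly more explicit than the paper's about invoking the earlier corollary characterizing landmark sets via Condition \ref{k34nl}, but the substance is identical.
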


\begin{proof}
If a substring has at least two zeroes, then each of the vertices
corresponding to them is a non-landmark that does not have at least two
consecutive ones before and after it, thus the corresponding set is not a
landmark set. Assume now that a set $Y\subseteq C$ does not satisfy the
condition. There is a pair $c_{i},c_{i+\ell}\notin Y$ for some $1\leq\ell
\leq2$. The substring for $c_{i},c_{i+1},c_{i+2}$ has two zeroes, and
therefore it is invalid.
\end{proof}

In the next section we state a dynamic programming formulation for computing
$wmd_{3}^{AP}$ based on the cyclic binary string, and on the fact that
including $h$ in any landmark set is not necessary.

In this case the dynamic programming will be sufficiently simple, and we will
not use cyclic binary strings.

\subsubsection{The case $\boldsymbol{k\geq5}$}
For $k\geq5$, we can prove the following.
\begin{theorem}
For NL and $n\geq6$, $md_{k}^{NL}=n-2$ if $n\geq k+4$, and $md_{k}^{NL}=n-1$
if $n\leq k+3$.
\end{theorem}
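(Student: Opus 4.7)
The plan is to prove matching bounds by exploiting how severely $k\geq 5$ restricts the number of non-landmarks, then separating the cases $n\geq k+4$ and $n\leq k+3$.

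For the upper bound, I would first record the trivial landmark set of size $n-1$. When $n\geq k+4$, I would then exhibit $L = V\setminus\{h,c_{1}\}$ of size $n-2$ and verify that it is a landmark set. The only pair of non-landmarks is $\{h,c_{1}\}$, and by Claim \ref{separatingclaim} the cycle-vertex separators of this pair lie in $C\setminus\{c_{n-1},c_{2}\}$; since $c_{1}\notin L$, the actual separators contributed by $L$ are exactly $C\setminus\{c_{n-1},c_{1},c_{2}\}$, a set of $n-4\geq k$ vertices. Hence $md_{k}^{NL}\leq n-2$.

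For the lower bound, the crucial step is to show that any landmark set contains at most one cycle non-landmark. This is where the Corollary on $SpS$ sets does the work: if $c_{i},c_{i+1}\notin L$ then $SpS_{c_{i},c_{i+1}}(L)\subseteq L\cap\{c_{i-1},c_{i+2}\}$ has size at most $2$; the same bound holds for $c_{i},c_{i+2}\notin L$; and if $c_{i},c_{j}\notin L$ form a distant pair then $SpS_{c_{i},c_{j}}(L)\subseteq L\cap\{c_{i-1},c_{i+1},c_{j-1},c_{j+1}\}$ has size at most $4$. Since each of these bounds is strictly less than $k\geq 5$, no two cycle vertices can simultaneously be non-landmarks. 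Combined with the possibility $h\notin L$, this forces $|V\setminus L|\leq 2$, i.e., $|L|\geq n-2$.

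It remains to treat the boundary $n\leq k+3$ (with $n\geq k+2$ needed for feasibility at all). The argument above shows that any non-trivial landmark set, i.e.\ one of size $n-2$, must consist of $V\setminus\{h,c_{i}\}$ for some $i$. But then the pair $\{h,c_{i}\}$ admits at most $n-4\leq k-1$ separators by the same counting as above, so no such $L$ is feasible; only trivial solutions of size $n-1$ remain, giving $md_{k}^{NL}=n-1$. Conversely, for $n\geq k+4$ the upper-bound construction meets the lower bound $n-2$. The main obstacle is simply keeping the case analysis of the Corollary tidy; once the ``at most one cycle non-landmark'' principle is in hand, both the feasibility of $V\setminus\{h,c_{i}\}$ for $n\geq k+4$ and its infeasibility for $n\leq k+3$ follow from a single count of available separators of the unique non-landmark pair.
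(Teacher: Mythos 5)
Your proof is correct and follows essentially the same route as the paper: both arguments first use the bound of at most four potential separators for a pair of non-landmark cycle vertices (which is less than $k\geq5$) to force all cycle vertices except at most one into $L$, and then count the $n-4$ cycle separators of the unique candidate non-landmark pair $\{h,c_{i}\}$ to decide between $n-2$ and $n-1$. (Your parenthetical that $n\geq k+2$ is ``needed for feasibility at all'' is a harmless slip, since for NL a trivial solution of size $n-1$ always exists.)
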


\begin{proof}
Since a pair of cycle vertices $c_{i},c_{j}$ that are not
landmarks can be separated by exactly four vertices, any landmark
set must contain all cycle vertices except for at most one. Any
set consisting of $C\setminus\{c_{i}\}$ for some $1\leq i\leq n-1$
is a landmark set if and only if $h$ and $c_{i}$ can be separated
by $k$ cycle vertices. The number of cycle vertices that can
separate them is $n-4$ (all cycle vertices except for
$c_{i-1},c_{i},c_{i+1}$). Thus, if $k\geq n-3$, only trivial
landmark sets exist, and otherwise the minimal landmark sets
consist of all cycle vertices expect for one.
\end{proof}

An algorithm for the weighted version acts as follows. If $n\leq8$, then it
tests all minimal landmark sets defined above, and returns a landmark set of
minimum weight. For each $n=4,5,6,7,8$, these minimal sets are described
above. For $n\geq9$, we showed that any minimal landmark set (with respect to
set inclusion) does not contain $h$. Thus, we test all subsets of cycle
vertices according to the relevant conditions.

\subsection{Dynamic programming formulations for $\boldsymbol{n
\geq9}$}

The general structure of the dynamic programming formulations is
as follows. For a specific variant, consider the list of good
strings for it of length $q$ (where the list of invalid strings of
length $q$ is its complement set). Let $S$ be the set of strings
of $q-1$ bits, each of which being a substring of a good string of
$q$ bits (each substring consists of the first $q-1$ bits or the
last $q-1$ bits of at least one good $q$ bit string). Recall that
for all cases $q\leq5$, and therefore $|S|\leq16$. We will compute
$|S|^{2}$ functions $F_{\bar{\gamma}}^{\bar{\beta}}(i)$, for
$\bar{\beta},\bar{\gamma}\in S$, and integral $i$ such that
$q-1\leq i\leq n$, where $F_{\bar{\gamma}}^{\bar{\beta}}(i)$
denotes the minimum weight of a valid string of length $i$
corresponding to $c_{1},c_{2},\ldots,c_{i}$ that does not have an
invalid substring (seeing it as a linear string, not as a cyclic
string), that starts with the string $\bar{\beta}$ and ends with
the string $\bar{\gamma}$. Obviously, as we deal with a cycle (of
the wheel) rather than a path, the string should be cyclic, and
the substring of the last $q-1$ bits (corresponding to
$c_{n-q+1},c_{n-q+2},\ldots,c_{n-1}$) concatenated with the first
$q-1$ bits (corresponding to $c_{1},c_{2},\ldots,c_{q-1}$) cannot
contain an invalid substring. For every pair
$\bar{\beta},\bar{\gamma}\in S$, we determine whether a cyclic
string starting with $\bar{\beta}$ and ending with $\bar{\gamma}$
would contain an invalid string. Let
${\hat{\mathcal{S}}}=\{(\bar{\beta},\bar{\gamma})|\bar{\beta},\bar{\gamma}\in
S,\ \bar{\gamma}\circ\bar{\beta}\mbox{ does not contain an invalid
string}\}$ (where $\circ$ denotes concatenation). Since $q-1\leq4$
and we deal with $n-1\geq8$, the first $q-1$ bits and the last
$q-1$ bits together correspond to eight distinct vertices. We let
$F_{\bar{\gamma}}^{\bar{\beta}}(i)=\infty$ for all $i$ if
$\bar{\gamma}\notin S$ or $\bar{\beta}\notin S$. Additionally, we
let $F_{\bar{\gamma}}^{\bar{\beta}}(q-1)=\infty$ if
$\bar{\beta}\neq\bar{\gamma}$, and for $\bar{\beta}\in S$, such
that $\bar{\beta}=\beta_{1},\beta_{2},\ldots,\beta_{q-1}$,
$F_{\bar{\beta}}^{\bar{\beta}}(q-1)=\sum_{i=1}^{q-1}\beta_{i}\cdot
w(c_{i})$ (this is the cost incurred by placing landmarks in the
vertices whose bits are equal to $1$).

The values of $F_{\bar{\gamma}}^{\bar{\beta}}(i)$ for $q\leq i\leq
n-1$, $\bar{\gamma},\bar{\beta}\in S$ where
$\bar{\gamma}=\gamma_{1},\gamma_{2},\ldots,\gamma_{q-1}$ are
defined as follows. $F_{\bar{\gamma}}^{\bar{\beta}}(i)=
\gamma_{q-1}\cdot w(c_{i})+\min \{F_{\xi}^{\bar{\beta}}(i-1),
F_{\delta}^{\bar{\beta}}(i-1)\}$, where $\xi$ consists of the
sequence $0,\gamma_{1},\gamma_{2},\ldots,\gamma_{q-2}$ and
$\delta$ consists of the sequence
$1,\gamma_{1},\gamma_{2},\ldots,\gamma_{q-2}$. The output is a
subset of vertices that is found via backtracking using the
minimum term out of $F_{\bar{\gamma}}^{\bar{\beta}}(n-1)$ for all
$\bar{\gamma},\bar{\beta}$ such that
$(\bar{\beta},\bar{\gamma})\in{\hat{\mathcal{S}}}$. The running
time is $O(n)$ for all cases, as the number of functions is
constant (in all cases in fact $|S|\leq11$, and thus there are at
most $121$ functions).

Next, we state the sets $S$ and $\hat{\mathcal{S}}$ for the four
algorithms. In the case $k=3,4$ and NL, by Lemma \ref{nlk34word},
good strings are three bit strings with at most one zero. Thus
$S=\{11,10,01\}$, and
${\hat{\mathcal{S}}}=\{(11,11),(11,10),(11,01),(10,11),(10,01),(01,11)\}$.
In the case $k=3$ and AP, by Lemma \ref{apk3word}, good strings
are five bit strings with at most one zero. Thus
$S=\{1111,1110,1101,1011,0111\}$, and
\[
{\hat{\mathcal{S}}}=\{(1111,\bar{\gamma})|\bar{\gamma} \in S\}
\cap\{((1110,1111), (1110,1101), (1110,1011),
\]
\[
(1110,0111),(1101,1111),(1101,1011),(1101,0111),(1011,1111),(1011,0111),(0111,1111)\}
\ .
\]
In the case $k=2$ and NL, by Lemma \ref{nlk2word}, good strings are four bit
strings with at most two zeroes. Thus $S=\{111,110,101,100,011,010,001\}$,
and
\[
{\hat{\mathcal{S}}}=\{(\bar{\beta},\bar{\gamma})|\bar{\beta}
\in\{110,111\}, \bar{\gamma} \in S\} \cap\{(101,\bar{\gamma})|
\bar{\gamma} \in S \setminus\{100\}\}\cap\{ (100,\bar{\gamma}|
\bar{\gamma} \in S \setminus\{100, 110, 010\} \}
\]
\[
\cap\{(011,011),(011,101),(011,110),(011,111), (010,101),(010,111),(010,011),
(001,011),(001,111) \} \ .
\]
In the case $k=2$ and AP, by Lemma \ref{apk2word}, good strings are five bit
strings with at most two zeroes, or the string $01010$. Thus $S$ consists of
all four bit strings with at most two zeroes ($11$ strings, that is, the four
bit strings that are not in $S$ are $\{0000, 0001, 0010, 0100, 1000\}$), and
\[
{\hat{\hat{\mathcal{S}}}}=\{ (\bar{\beta},\bar{\gamma}|
\bar{\beta} \in\{0101, 0110\}, \bar{\gamma} \in\{0101, 0111, 1011,
1101, 1111\} \}\cap\{(\bar{\beta},\bar{\gamma})|\bar{\beta}
\in\{1111,1110\}, \bar{\gamma} \in S\}
\]
\[
\cap\{(1101,\bar{\gamma})| \bar{\gamma} \in S
\setminus\{1100\}\}\cap\{ (\bar{\beta},\bar {\gamma}| \bar{\beta}
\in\{1010, 1011\}, \bar{\gamma} \in S \setminus\{1001, 1100\} \}
\]
\[
\cap\{(0111,\bar{\gamma})| \bar{\gamma} \in\{0101, 0111, 1011,
1101, 1110, 1111\}\} \cap\{(1001,\bar{\gamma})| \bar{\gamma}
\in\{0011,0111,1011,1111)\}
\]
\[
\cap\{(1100,\bar{\gamma})| \bar{\gamma} \in S
\setminus\{0110,1010,1100,1110\}\} \cap\{(0011,\bar{\gamma})|
\bar{\gamma} \in\{0111,1111\}\} \ .
\]

\end{document}